\tikzset{
	assets/.style={
		rectangle split,
		rectangle split parts=1,
		rectangle split part fill={blue!20},
		rounded corners,
		draw=black, very thick,
		minimum height=10cm,
		minimum width=5cm,
		text width=2cm,
		text centered,
	}
}
\tikzset{
	state/.style={
		rectangle split,
		rectangle split parts=2,
		rectangle split part fill={red!30,blue!20},
		rounded corners,
		draw=black, very thick,
		minimum height=2em,
		text width=3cm,
		inner sep=2pt,
		text centered,
	}
}
\pgfplotsset{
	every tick label/.append style = {font=\tiny},
	every axis label/.append style = {font=\scriptsize}
}
\renewenvironment{proof}[1][\proofname]{\par
	\pushQED{\qed}
	\normalfont \topsep6\p@\@plus6\p@\relax
	\trivlist
	\item[\hskip\labelsep
	\itshape
	#1\@addpunct{:}]\ignorespaces
}{
	\popQED\endtrivlist\@endpefalse
}
\newtheoremstyle{italic}
{15pt}
{15pt}
{\normalfont \itshape}
{}
{\normalfont \bfseries}
{.}
{ }
{}
\newtheoremstyle{nonItalic}
{15pt} 
{15pt}
{}
{}
{\normalfont \bfseries}
{.}
{ }
{}
\theoremstyle{italic}
\newtheorem{defi}{Definition}[section]
\newtheorem{theorem}[defi]{Theorem}
\newtheorem{prop}[defi]{Proposition}
\newtheorem{lem}[defi]{Lemma}
\newtheorem{cor}[defi]{Corollary}
\theoremstyle{nonItalic}
\newtheorem{rem}[defi]{Remark}
\newtheorem{exam}[defi]{Example}
\def\@seccntformat#1{
	\protect\textup{
		\protect\@secnumfont
		\expandafter\protect\csname format#1\endcsname 
		\csname the#1\endcsname
		\protect\@secnumpunct
	}
}
\newcommand{\defgl}{\mathrel{\mathop:\!\!=}}
\newcommand{\emptySet}{\varnothing}
\newcommand{\realNumbers}{\mathbb{R}}
\newcommand*\diff{\mathop{}\!\mathrm{d}}
\newcommand{\symbolOperatorExpectation}{E}
\newcommand{\symbolExpectation}[1]{\symbolOperatorExpectation_{#1}}
\newcommand{\expectation}[2]{\symbolExpectation{#1}[#2]}
\newcommand{\symbolOperatorQuantile}{q}
\newcommand{\symbolQuantile}[1]{\symbolOperatorQuantile_{#1}}
\newcommand{\symbolUpperQuantile}[1]{\symbolQuantile{#1}^{+}}
\newcommand{\quantileFunction}[2]{\symbolQuantile{#1}(#2)}
\newcommand{\upperQuantileFunction}[2]{\symbolUpperQuantile{#1}(#2)}
\newcommand{\level}{\lambda}
\DeclareMathOperator{\symbolOperatorValueAtRisk}{VaR}
\newcommand{\symbolValueAtRisk}[1]{\symbolOperatorValueAtRisk_{#1}}
\newcommand{\valueAtRisk}[2]{\symbolValueAtRisk{#1}(#2)}
\DeclareMathOperator{\symbolOperatorExpectedShortfall}{ES}
\newcommand{\symbolExpectedShortfall}[1]{\symbolOperatorExpectedShortfall_{#1}}
\newcommand{\expectedShortfall}[2]{\symbolExpectedShortfall{#1}(#2)}
\DeclareMathOperator{\pelveSymb}{\Pi}
\newcommand{\pelve}[1]{\pelveSymb_{#1}}
\DeclareMathOperator{\averageLetter}{A}
\newcommand{\apelve}[1]{\pelveSymb^{\averageLetter}_{#1}}
\DeclareMathOperator{\mseLetters}{MSE}
\newcommand{\msepelve}[1]{\pelveSymb^{\mseLetters}_{#1}}
\DeclareMathOperator{\wcLetters}{WC}
\newcommand{\wcpelve}[1]{\pelveSymb^{\wcLetters}_{#1}}
\DeclareMathOperator{\bcLetters}{BC}
\newcommand{\bcpelve}[1]{\pelveSymb^{\bcLetters}_{#1}}
\DeclareMathOperator{\sysLetters}{Sys}
\newcommand{\syspelve}[1]{\pelveSymb^{\sysLetters}_{#1}}
\DeclareMathOperator{\unitVec}{\mathbf{e}}
\newcommand{\tES}{G}
\DeclareMathOperator{\var}{Var}
\title{PELVE from a regulatory perspective}
\date{\today}
\begin{document}
	
	\author{Christian Laudag\'e \orcidlink{0000-0002-5910-9487}}
	\address{Department of Mathematics, RPTU Kaiserslautern-Landau, Germany.}
	\email{christian.laudage@rptu.de}
	
	\author{Jörn Sass \orcidlink{0000-0003-3432-4997}}
	\address{Department of Mathematics, RPTU Kaiserslautern-Landau, Germany.}
	\email{joern.sass@rptu.de}

	\pagenumbering{arabic}
	\numberwithin{equation}{section}\numberwithin{equation}{section}
	
	\maketitle
	
	\setstretch{1.1}

	\begin{abstract}
	
	Under Solvency II, the Value-at-Risk (VaR) is applied, although there is broad consensus that the Expected Shortfall (ES) constitutes a more appropriate risk measure. Moving towards ES would necessitate specifying the corresponding ES level. The recently introduced Probability Equivalent Level of VaR and ES (PELVE) determines this by requiring that ES equals the prescribed VaR for a given future payoff, reflecting the situation of an individual insurer. We incorporate the regulator's perspective by proposing PELVE-inspired methods for \textit{multiple} insurers. We analyze existence and uniqueness of the resulting ES levels, derive expressions for elliptically distributed payoffs and establish limit results for multivariate regularly distributed payoffs. A case study highlights that the choice of method is crucial when payoffs arise from different distribution families. We provide recommendations which of our PELVE-inspired methods are most appropriate in certain scenarios.
	
	\medskip
	
	\item[\hskip\labelsep\scshape Keywords:] Elliptical distribution, Expected Shortfall, PELVE, regularly varying distribution, systemic risk, Value-at-Risk
	
	\medskip
	
\end{abstract}

	\section{Introduction}\label{sec:introduction}

Recently,~\textcite{li_pelve_2023} defined the Probability
Equivalent Level of VaR and ES (PELVE). For a given random variable, reflecting the future payoff of an agent, it describes the ES level at which the ES equals a predefined VaR. The intended transition from VaR to ES is relevant for the insurance industry, because the Solvency II regulation is still based on the VaR, even if the common consensus is that an ES is superior, see the discussions in~\cite{artzner_coherent_1999, embrechts_quantitative_2015, kratschmer_comparative_2014}. In the banking sector, VaR has already been replaced by ES, see~\cite{embrechts_academic_2014} for a discussion regarding the Basel accords.\footnote{Moving from VaR to ES also entails certain drawbacks. For instance, under ES it is possible that more extreme default behaviors are considered acceptable, see~\cite{koch-medina_unexpected_2016}.}

PELVE is based on a \textit{one-dimensional} random variable, which makes it an appropriate measure to analyze the situation of an individual insurer. However, it may not be appropriate for a regulator who needs to determine an ES level for \textit{multiple} insurers. In this case, we cannot expect an ES level for which none of the insurers would need to adjust their capital reserves. In fact, such a scenario would rather be the exception than the rule. This raises the central question of this manuscript:
	
\begin{center}
	\textit{How should the regulator determine an appropriate ES level across multiple insurers?}
\end{center}
	
To answer this question, we model the payoffs of multiple insurers via a \textit{multi-dimensional} random variable and introduce multivariate extensions of PELVE. These extensions provide the regulator with a quantitative tool for decision-making, eliminating reliance on intuition solely based on the individual situations of insurers. 

The central question identified above has not been systematically examined in the existing actuarial literature: \cite[Figure 2]{boonen_solvency_2017} or~\cite[Figure 2.2]{rroji_risk_2013} give empirical results for an appropriate ES level only in the case of a one-dimensional random variable. \textcite{li_pelve_2023} formalised their empirical findings by introducing the innovative concept of PELVE. \textcite{assa_calibrating_2024} use prescribed PELVE values to find a suitable distribution model, that is, PELVE is utilized to  calibrate distribution functions. Furthermore, \cite{han_diversification_2023} demonstrates that PELVE is helpful to obtain a fair comparison between the diversification quotients of VaR and ES.  

Regarding the PELVE methodology, numerous studies replace ES by other risk measures: \textcite{ortega-jimenez_probability_2022} introduce the PELCoV, which is the Probability Equivalent Level between Co-Value-at-Risk (CoVaR) and VaR, where CoVaR is a VaR based on a conditional distribution function.\footnote{To be precise, the Co-Value-at-Risk $\text{CoVaR}_{v,u}(Y|X)$ is the VaR of a risk $Y$ at level $v$ conditioned that some other risk $X$ is equal to its VaR at level $u$. So, CoVaR is the (generalized) inverse of the conditional distribution of $Y$ at $v$ given $X=\valueAtRisk{u}{X}$. Consequently, PELCoV at level $v$ is the value $u$ such that $\text{CoVaR}_{v,u}(Y|X) = \valueAtRisk{\lambda}{Y}$.} Another concept is  PELVE$_n$, defined in~\cite{barczy_probability_2023}. It is the Probability Equivalent Level between VaR and higher-order ES. Finally,~\textcite{fiori_generalized_2023} suggest two extensions of PELVE -- distorted PELVE and generalized PELVE. The distorted PELVE is the equivalent level between VaR and parametrized Wang premia. The latter are typically applied to calculate insurance premia. The generalized PELVE substitutes VaR and ES by two families of risk measures, where the second family is obtained by integration with respect to the first. 

Instead, we focus on the regulatory perspective, for which it is indispensable to consider the payoffs of multiple insurers. Accordingly, we propose the following methods to determine an ES level corresponding to a given VaR level: (1) A-PELVE: weighted average of the PELVE values of the agents; (2) BC- and WC-PELVE: smallest (the best-case, BC) and largest (the worst-case, WC) PELVE value over all agents; (3) MSE-PELVE: minimizes  the weighted mean-squared-error (MSE) between ES and VaR; (4) Sys-Pelve: minimum level such that the systemic risk measured by the ES does not exceed the systemic risk measured by the VaR. We summarize these methods under the umbrella of \textit{Multi-PELVE}. 

Our journey then starts by developing auxiliary results for ES and PELVE curves. These results are needed to analyze the Multi-PELVE methods, but they are also of independent interest. Then, we state existence results. Furthermore, in the majority of the cases, a Multi-PELVE method gives a unique solution. However, a main finding is a comprehensive example, showing that the MSE-PELVE is non-unique in general. Furthermore, for elliptically distributed payoffs, we find that most of the Multi-PELVE methods can be represented by a one-dimensional elliptically distributed random variable. To model extreme scenarios, we employ distributions with multivariate regularly varying tails. In this framework, and as the VaR level approaches zero, all methods converge to the PELVE of a one-dimensional Pareto-distributed random variable. 

In a case study, we first assume that the equity capital distributions are from the same distribution family. In this situation, the choice of a Multi-PELVE method (aside from the BC- and the WC-PELVE) is not significant. Also the Multi-PELVE values are close to the PELVE values of each individual insurer and so, no tremendous changes in the capital reserves occur by moving from VaR to ES. This is a convenient situation for the regulator.

However, this behavior can change drastically when the equity capitals arise from different distribution families. To illustrate this, we perturb the model by replacing some light-tailed distributions with heavy-tailed ones. In doing so, the choice of the Multi-PELVE method becomes significant: BC-PELVE and WC-PELVE do not seem to be adequate choices, because they either require or allow some insurers to drastically change their capital reserves. With respect to overall changes in the capital reserves, the Sys-PELVE and the MSE-PELVE provide convenient results. Furthermore, the A-PELVE and a weighted variant of the MSE-PELVE do not prioritize a market leader, as it is the case for other Multi-PELVE versions. This weighted MSE-PELVE and the A-PELVE are the recommended Multi-PELVE methods for the second part of the case study. 

The manuscript is structured as follows: In~\prettyref{sec:var_es_pelve} we recall the definitions of VaR and ES as well as the definition of PELVE. Furthermore, we discuss auxiliary results for ES and PELVE curves and explain the regulator's objective. Then, in~\prettyref{sec:problemRegulator}, we state the definitions of the Multi-PELVE methods, discuss properties like e.g.,~existence and uniqueness, and compare their pros and cons. In~\prettyref{sec:concreteTheoreticalDistributions}, we focus on elliptical and heavy-tailed distributions. Finally, in the case study in~\prettyref{sec:caseStudy} we compare the Multi-PELVE methods in a comprehensive way. All proofs are gathered in the appendix. 

	\section{Preliminaries and regulator's objective}\label{sec:var_es_pelve}

In this section, we first repeat the definition of PELVE from~\textcite{li_pelve_2023}. We then state new results regarding ES and PELVE curves, which are used in the upcoming sections and which are also of independent interest. Finally, we discuss the regulator's perspective and formally introduce the problem of interest. 

\subsection{VaR, ES and PELVE}

We impose an atomless probability space $(\Omega,\mathcal{F},P)$.\footnote{Atomless ensures the existence of continuous random variables~\parencite[Proposition A.31]{follmer_stochastic_2016} in~\prettyref{sec:caseStudy}.} For simplicity, we work in the space of integrable random variables $L^1(\Omega,\mathcal{F},P)$ or $L^1$ for short. A random variable $X\in L^1$ stands for the payoff of an agent at a fixed future time point. The corresponding distribution function is denoted by $F^{X}$.

The two most common risk measures are Value-at-Risk and Expected Shortfall:
\begin{defi}
	\begin{enumerate}[(i)]
		\item The Value-at-Risk (VaR) at level $\lambda\in(0,1)$ of $X\in L^1$ is 
		\begin{align*}
			\symbolValueAtRisk{\lambda}(X) := -q_{X}^{+}(\lambda) := \inf\{m\in\mathbb{R}\,|\, P(X<-m)\leq \lambda\},  
		\end{align*}
		where $q_{X}^{+}$ is the upper quantile function of $X$. 
		\item The Expected Shortfall (ES) at level $\lambda \in(0,1]$ of $X\in L^1$ is
		\begin{align*}
			\symbolExpectedShortfall{\lambda}(X) := \frac{1}{\lambda}\int\limits_{0}^{\lambda}\symbolValueAtRisk{u}(X)\,\diff u.  
		\end{align*}
	\end{enumerate}
\end{defi}

Our starting point is the map in the next definition, which follows~\textcite{li_pelve_2023}. We always use the convention $\inf\emptySet = \infty$.
\begin{defi}\label{defi:pelve}
	The Probability Equivalent Level of VaR and ES (PELVE) at level $\lambda\in(0,1)$ is the map $\Pi_{\lambda}:L^1\rightarrow [1,\lambda^{-1}]\cup\{\infty\}$ defined by
	\begin{align*}
		\Pi_{\lambda}(X):= \inf\left\{c\in[1,\lambda^{-1}]\,\middle|\,\symbolExpectedShortfall{c\lambda}(X)\leq \symbolValueAtRisk{\lambda}(X)\right\}.
	\end{align*}
\end{defi}

\begin{rem}
	Note that our definitions of VaR and ES are the ones for payoffs (instead of losses), see e.g.,~\textcite{artzner_coherent_1999} and~\textcite{follmer_stochastic_2016}. As consequence, our definition of PELVE differs from the one in~\cite{li_pelve_2023}. Ours is the PELVE version when VaR and ES are introduced in the aforementioned way for financial payoffs.
\end{rem}

\begin{rem}\label{rem:existenceUniquenessPELVE}
	By~\cite[Proposition 1]{li_pelve_2023} we know that $\pelve{\lambda}(X)$ exists, meaning that $\pelve{\lambda}(X)<\infty$, if and only if $\expectation{}{-X}\leq \valueAtRisk{\lambda}{X}$. If in addition, $p\mapsto \valueAtRisk{p}{X}$ is not a constant function on $(0,\lambda]$, guaranteeing that $p\mapsto \expectedShortfall{p}{X}$ is strictly decreasing on $[\lambda,1]$, then, by~\cite[Proposition 2]{li_pelve_2023} there exists a unique $c\in[1,\lambda^{-1}]$ such that $\symbolExpectedShortfall{c\lambda}(X)= \symbolValueAtRisk{\lambda}(X)$.
\end{rem}

The PELVE $c^{\ast} = \Pi_{\lambda}(X)$ refers to the ES level $c^{\ast}\lambda$ for which the corresponding ES is equal to the VaR at level $\lambda$ given the payoff $X$. PELVE is useful for analyzing the impact on the capital reserve of an individual insurer with payoff $X$ when the regulator requires the insurer to apply from now on an ES at level $c\lambda$ instead of the VaR at level $\level$: If $c>c^{\ast}$, then the new reserve  $\expectedShortfall{c\lambda}{X}$ is smaller than the previous one given by $\valueAtRisk{\lambda}{X}$. For  $c<c^{\ast}$ the opposite is true.

Economically, changing to an ES level for which the capital reserve does not change drastically is desirable: On the one hand, if the ES becomes significantly larger than the VaR, the insurer may be unable to raise sufficient additional reserves quickly. On the other hand, if the ES becomes substantially smaller than the VaR, the insurer would be insufficiently capitalized, which endangers its financial stability and is particularly undesirable from a regulatory perspective, because it also increases the systemic risk arising within the entire network of insurers.

Two concepts related to the analysis of PELVE are the ES curve and the PELVE curve. Hence, before we discuss the regulatory perspective in detail, we state specific results for such curves. 

\subsection{Results on ES curves and PELVE curves}\label{sec:esCurves}

We start by collecting results for the ES interpreted as a function of its level, the so-called ES curve. For brevity, two standard properties of these ES curves are shifted to the appendix, see~\prettyref{lem:nonConcavityEsCurve} and~\prettyref{lem:esStrictlyDecreasing}. Here, we characterize when functions from $[0,1]$ to $\mathbb{R}$ can be represented by an ES curve, i.e.,~by a function in $\{[0,1]\rightarrow \mathbb{R},\lambda\mapsto \expectedShortfall{\lambda}{X}\,|\, X\in L^1\}$. It is a consequence of the fundamental theorem of calculus for the Lebesgue integral, see~\cite[Theorem 7.18]{rudin_real_1987}. We use this result also later in~\prettyref{exam:uniquenessMSEpelve}.

\begin{theorem}\label{thm:existenceOfEScurve}
	Let $\tES:(0,1]\rightarrow\mathbb{R}$. The following statements are equivalent:
	\begin{enumerate}[(a)]
		\item There exists $X\in L^1$ such that for all $t\in(0,1]$ it holds that $\tES(t) = t\expectedShortfall{t}{X}$.
		\item It holds that $\lim\limits_{u\downarrow 0}\tES(u) = 0$ and $\tES$ extended to $[0,1]$ is absolutely continuous and there exists a Lebesgue null set $A\subseteq[0,1]$ such that the derivative $-\tES^{\prime}(t)$ exists for all $t\in[0,1]\setminus A$ and it is increasing and right-continuous. 
	\end{enumerate}
\end{theorem}

ES curves are decreasing, which would automatically imply, e.g.,~by~\cite[Theorem 3.23 (b)]{folland_real_2013}, that they are almost everywhere differentiable, without relying on the fundamental theorem of calculus. The decreasing behavior of ES curves allows us to invoke~\prettyref{thm:existenceOfEScurve} to establish a key property of these curves.
\begin{cor}\label{cor:conclusionTheoremEScurves}
	Let $\tES:(0,1]\rightarrow \mathbb{R}$ be a function as in~\prettyref{thm:existenceOfEScurve} and set $f(t)=\tES(t)/t$ for all $t\in(0,1]$.
	Then, $f$ is either strictly decreasing or there exists $a\in(0,1]$ such that $f$ is constant on $(0,a]$ and strictly decreasing on $(a,1]$.
\end{cor}
	
For a twice differentiable function, we give a more concise sufficient condition that it can be represented as ES curve. 

\begin{cor}\label{cor:existenceOfEScurve}
	Let $f:(0,1]\rightarrow\mathbb{R}$ be twice differentiable,  decreasing and concave. Then there exists $X\in L^1$ such that for all $t\in(0,1]$ it holds that $f(t) = \expectedShortfall{t}{X}$.
\end{cor}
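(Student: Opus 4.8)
The plan is to reduce the statement to~\prettyref{thm:existenceOfEScurve}, applied to the function $F:(0,1]\to\mathbb{R}$, $F(t):=tf(t)$, extended by $F(0):=0$. If I can verify condition~(b) of that theorem for $F$, then~(a) yields an $X\in L^1$ with $F(t)=t\,\expectedShortfall{t}{X}$ for all $t\in(0,1]$, and dividing by $t>0$ gives $f(t)=\expectedShortfall{t}{X}$, which is exactly the claim.

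First I would extract the consequences of the hypotheses on $f$ that control its behaviour near $0$. Since $f$ is concave, $f'$ is non-increasing on $(0,1]$; since $f$ is decreasing, $f'\le 0$. Hence $f'(1)\le f'(t)\le 0$ for all $t\in(0,1]$, so $f'$ is bounded, $f$ is Lipschitz on $(0,1]$, and in particular $f$ is bounded. This immediately gives $\lim_{u\downarrow 0}F(u)=\lim_{u\downarrow 0}uf(u)=0$, the first requirement in~(b). It also shows that $F$ extends continuously to $[0,1]$ and, being differentiable on $(0,1)$ with $F'(t)=f(t)+tf'(t)$ bounded there (both $f$ and $f'$ are bounded), that $F$ is Lipschitz, hence absolutely continuous, on $[0,1]$. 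Moreover $F'$ exists on all of $[0,1]$: at $t=0$ the difference quotient $F(h)/h=f(h)$ converges to the finite limit $\lim_{h\downarrow 0}f(h)$, and at $t=1$ one uses the one-sided derivative of $f$. So one may take the null set $A=\emptyset$ in~(b).

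It then remains to check that $-F'(t)=-f(t)-tf'(t)$ is increasing and right-continuous on $[0,1]$. Right-continuity is clear, and in fact $-F'$ is continuous on $[0,1]$, since $f$ is continuous and, $f$ being twice differentiable, $f'$ is continuous. For monotonicity it suffices to show that $F$ is concave, i.e.\ $F''\le 0$. Differentiating, $F''(t)=2f'(t)+tf''(t)$; since $f'(t)\le 0$ (decreasing), $f''(t)\le 0$ (concave) and $t\ge 0$, we obtain $F''(t)\le 0$, so $F'$ is non-increasing and $-F'$ is increasing. With every part of~(b) established,~\prettyref{thm:existenceOfEScurve} applies and the proof is complete.

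The only genuinely delicate point — the place where I expect the argument could go wrong if one is careless — is the behaviour of $f$ at $0$: a merely decreasing real-valued function on $(0,1]$ may blow up there, which would destroy both $\lim_{u\downarrow 0}F(u)=0$ and the absolute continuity of $F$. It is precisely the concavity hypothesis, forcing $f'$ to lie between $f'(1)$ and $0$, that rules this out. Once that boundedness is in hand, the remaining steps are a routine product-rule differentiation and sign check.
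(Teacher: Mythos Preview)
Your proof is correct and follows essentially the same route as the paper: define $F(t)=tf(t)$, use decreasingness and concavity of $f$ to get boundedness of $f$ (and hence $F(0^+)=0$ and absolute continuity of $F$), and then verify that $-F'$ is increasing via $F''(t)=2f'(t)+tf''(t)\le 0$ before invoking \prettyref{thm:existenceOfEScurve}. You supply a bit more detail than the paper (the explicit Lipschitz bound on $f'$, and the endpoint derivatives allowing $A=\emptyset$), but the argument is the same.
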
 

The proof of~\prettyref{cor:existenceOfEScurve} applies~\prettyref{thm:existenceOfEScurve} by setting $\tES(t) = tf(t)$. Concavity of $f$ then ensures that $t\mapsto -\tES^{\prime}(t)=-f(t)-tf^{\prime}(t)$ is increasing. \prettyref{exam:counterxampleCorollaryEScurve} in the appendix states a function $f$, being twice differentiable and decreasing, but not concave, in such a way that  $(0,1]\rightarrow\mathbb{R},t\mapsto -f(t)-tf^{\prime}(t)$ is not increasing, which shows that we cannot omit the assumption of concavity in~\prettyref{cor:existenceOfEScurve}.

Parts of the case study in~\prettyref{sec:caseStudy} rely on PELVE curves. They are used in~\cite{assa_calibrating_2024} to calibrate distribution functions. Here, we summarize theoretical properties of these curves, aiming to derive a discontinuity result for PELVE curves in combination with empirical distribution functions. To do so, we first state an equivalent condition for a PELVE curve to be continuous. 
\begin{prop}\label{prop:continuityPELVE}
	Let $\alpha\in(0,1]$ and $X\in L^1$. The following statements are equivalent:
	\begin{enumerate}[(a)]
		\item The map $f:(0,\alpha)\rightarrow [1,\infty],\lambda \mapsto \pelve{\lambda}(X)$ is continuous and finite.
		\item For all $\lambda\in (0,\alpha)$ it holds that $\expectation{}{-X}\leq \valueAtRisk{\lambda}{X}$ and that $q_X^{+}$ is continuous on $(0,\alpha)$.
	\end{enumerate}
\end{prop}

The PELVE curves illustrated in~\prettyref{sec:caseStudy} rely on empirical distribution functions. The latter are step functions and thus, the next result tells us that the corresponding PELVE curves are discontinuous.
\begin{cor}\label{cor:discontinuityPELVEcurves}
	Let $X\in L^1$ such that $F^{X}$ is a step function. In addition, assume that there exists $x\in\mathbb{R}$ with $F^{X}(x)\in(0,1)$.\footnote{This assumption ensures that $F^{X}$ admits more than one jump.} Then, the PELVE curve $(0,1)\ni\lambda\mapsto\pelve{\lambda}(X)$ is discontinuous.
\end{cor}

\subsection{The regulator's objective}

As previously discussed, PELVE facilitates the analysis of the risk exposure of a single agent. We now aim to extend this concept by proposing methods that allow a regulator to determine an adequate ES level for an entire market, i.e.,~for considering the payoffs of multiple insurers. If the PELVE values of all insurers are equal, then a natural choice would be this PELVE value, i.e.,~an insurer does not need to adjust its actual capital reserve. However, in general, it is not possible to identify an ES level such that the capital reserve of each insurer remains unchanged, which encourages the need for further discussions. Here, we follow closely the idea behind PELVE, i.e.,~we suggest methods with the goal on limiting the changes in the capital reserves of the insurers. This is the most important criterion, because large reductions would increase the risk of future bankruptcies and large increases may not be tolerable by single insurers and furthermore, may threaten the financial stability, as an insurer must cover a large capital requirement within a short period of time. There is, of course, no unique way to achieve this. Our contribution consists in proposing and discussing several possible methods. 

First, we describe the situation of the regulator mathematically. To do so, for $n\in\mathbb{N}$, we denote the linear space of all $n$-dimensional integrable random variables by $L^1_n(\Omega,\mathcal{F},P)$ or $L^1_n$ for short. For brevity, we use the standard terminology of $[n]:=\{1,\dots,n\}$.

Our market consists of $n\in\mathbb{N}$ agents. Agent $i\in[n]$ faces an unknown future payoff $X_i\in L^1$. For brevity, we write $\mathbf{X}=(X_1,\dots, X_n)^{\intercal}$, i.e.,~$\mathbf{X}$ is a random vector. Furthermore, we assume that the prevailing regulatory framework relies on the VaR at level $\lambda\in(0,1)$. The regulator aims to replace this VaR with an ES at level $\alpha\in(0,1)$. For $n=1$, the natural choice is $\alpha = \pelve{\lambda}(X_1)\lambda$, ensuring that the agent does not need to adjust its capital reserve. The primary objective of this manuscript is to explore possible answers to the following question: 

\begin{center}
	\textit{How should the regulator decide on $\alpha$ in the situation of $n>1$ agents?}
\end{center}

The subsequent section presents a range of quantitative methods as candidate solutions and analyses their properties in terms of existence, uniqueness, and continuity. Moreover, the advantages and disadvantages of these methods are examined, with a summary given in Table~\ref{tab:comparisonMethods}.

	\section{Addressing the regulator's problem}\label{sec:problemRegulator}

We now introduce the quantitative methods, motivated by the original concept of PELVE.

\subsection{Representative agent and Average-PELVE}\label{sec:pelve_based_measures}

One possibility is to determine $\alpha$ via a single PELVE, e.g.,~by using a benchmark payoff $Y$ and setting $\alpha = \Pi_{\lambda}(Y)\lambda$. For instance, if agent~$i$ is a representative agent on the market, then the regulator could set $\alpha = \pelve{\lambda}(X_i)\lambda$. While easy to explain, this approach can lead to significant changes in the capital reserve of agent $j\neq i$ when the PELVE values of agents $j$ and $i$ differ considerably. Hence, agent $j$ could feel discriminated. 

It is therefore more appropriate to account for the situation of all agents, i.e.,~to determine $\alpha$ based on multiple PELVE values. The first possibility that we suggest is the intuitive method of a weighted average of these PELVE values. Unequal weights can reflect different market participation among the agents.

\begin{defi}
	The Average-PELVE (A-PELVE) $\apelve{\lambda,\omega}:L^1_n\rightarrow [1,\lambda^{-1}]\cup\{\infty\}$ for $\lambda\in(0,1)$ and $\omega\in[0,1]^{n}$ with $\sum_{i=1}^{n}\omega_i=1$ is defined for all $\mathbf{X}\in L^1_n$ as $\apelve{\lambda,\omega}(\mathbf{X}) \defgl \sum_{i=1}^{n}\omega_i\pelve{\lambda}(X_i)$.
\end{defi}

\begin{rem}\label{rem:properties_APELVE}
	The A-PELVE is finite if and only if all PELVE values are finite, i.e.,~for all $i\in[n]$ it holds that $\pelve{\lambda}(X_i)<\infty$. Equivalent conditions for the latter are stated in~\cite[Proposition 1]{li_pelve_2023}. Next, assume that for all $i\in[n]$, the map $\symbolUpperQuantile{X_i}$ is continuous at $\lambda$ and there exists a sequence $(X_i^m)_m\subseteq L^1$ such that $X_i^m\rightarrow X_i$ with respect to the $L^1$-norm. By setting $\mathbf{X}^m=(X_1^m,\dots,X_n^m)^{\intercal}$, we obtain $\apelve{\lambda,\omega}(\mathbf{X}^m) \rightarrow \apelve{\lambda,\omega}(\mathbf{X})$ as $m\rightarrow\infty$ by applying~\cite[Theorem 2]{li_pelve_2023}.\footnote{\textcite[Theorem 2]{li_pelve_2023} require that $(X_i^m)_m\subseteq L^1$ is uniformly integrable and converges in distribution to $X_i\in L^1$. By Vitali's Convergence Theorem~\parencite[Theorem 21.4]{bauerMeasureIntegrationTheory2001} this holds iff $X_i^m\xrightarrow{L^1}X_i$, where $X_i\in L^1$.}
\end{rem} 

The A-PELVE takes the situation of all agents into account. However, as an arithmetic mean, the A-PELVE is not robust against outliers in the individual PELVE values. Furthermore, the capital reserve of a single agent can change drastically, if its PELVE is far away from the A-PELVE. Motivated by this last argument, we now propose two further methods. 

\subsection{Mean-Squared-Error and Worst-Case PELVE}\label{sec:mse_based_measures}

The next method minimizes the squared distances of changes in the capital reserves. This refers to the (weighted) mean-squared-error (MSE).

\begin{defi}\label{defi:mse_pelve}
	A map $\msepelve{\lambda,\omega}:L^1_n\rightarrow [1,\lambda^{-1}]$ is called an MSE-PELVE for $\lambda\in(0,1)$ and $\omega\in[0,1]^{n}$ with $\sum_{i=1}^{n}\omega_i=1$, if for all $\mathbf{X}\in L^1_n$ the value $\msepelve{\lambda,\omega}(\mathbf{X})$ minimizes
	\begin{align}\label{eq:mse}
		[1,\lambda^{-1}]\rightarrow[0,\infty),c\mapsto  \sqrt{\sum_{i=1}^{n}\omega_i\Big(\symbolExpectedShortfall{c\lambda}(X_i)-\symbolValueAtRisk{\lambda}(X_i)\Big)^2}.
	\end{align}
\end{defi}

The map in~\prettyref{eq:mse} measures the weighted squared distance of ES and VaR values. This aligns perfectly with our motivation to reduce the impact of a regulatory change on individual insurers, on average. It has another important implication, namely, that the usage of the MSE avoids cross-subsidization effects in the calculation of the new ES level. That is, a positive change in the reserve of an insurer cannot be  compensated by negative changes in the reserve of other insurers.

A-PELVE and MSE-PELVE are based on a weighting vector. These weights allow to regulate the impact of an insurer on the calculation of the new ES level. This allows us especially to reduce the impact of insurers who have not managed their risks in an adequate manner, which is reflected by a heavy left tail of the equity distribution. This is pointed out in the case study in~\prettyref{sec:caseStudy}.

The upcoming result shows that there always exists at least one MSE-PELVE.

\begin{prop}\label{prop:existenceMsePELVE}
	Given the situation in~\prettyref{defi:mse_pelve}, the map in~\prettyref{eq:mse} attains a minimum.
\end{prop}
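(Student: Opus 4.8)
The plan is to recognize the map in~\prettyref{eq:mse} as a continuous function on the compact interval $[1,\lambda^{-1}]$ and then invoke the extreme value theorem; the only point requiring any care is continuity at the right endpoint $c=\lambda^{-1}$, i.e.\ at ES-level~$1$.

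First I would fix $i\in\{1,\dots,n\}$ and establish that $c\mapsto \symbolExpectedShortfall{c\lambda}(X_i)$ is continuous on $[1,\lambda^{-1}]$. By~\prettyref{lem:nonConcavityEsCurve} the map $t\mapsto t\,\symbolExpectedShortfall{t}(X_i)$ is continuous on $(0,1]$; dividing by $t$, which is strictly positive there, the map $t\mapsto \symbolExpectedShortfall{t}(X_i)$ is continuous on $(0,1]$. Composing with the continuous affine map $c\mapsto c\lambda$, which sends $[1,\lambda^{-1}]$ into $[\lambda,1]\subseteq(0,1]$, yields continuity of $c\mapsto \symbolExpectedShortfall{c\lambda}(X_i)$ on all of $[1,\lambda^{-1}]$, including the endpoint $c=\lambda^{-1}$ (this is exactly why~\prettyref{lem:nonConcavityEsCurve} is stated up to $t=1$).

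Next, since $\valueAtRisk{\lambda}{X_i}\in\mathbb{R}$ is a fixed constant, each summand $c\mapsto \omega_i\big(\symbolExpectedShortfall{c\lambda}(X_i)-\valueAtRisk{\lambda}{X_i}\big)^2$ is continuous and nonnegative on $[1,\lambda^{-1}]$. A finite sum of continuous functions is continuous, and $s\mapsto\sqrt{s}$ is continuous on $[0,\infty)$, so the composition — the map in~\prettyref{eq:mse} — is continuous on $[1,\lambda^{-1}]$.

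Finally, $[1,\lambda^{-1}]$ is a nonempty compact subset of $\mathbb{R}$ (with $\lambda\in(0,1)$ we have $\lambda^{-1}>1$, so the interval is nondegenerate). By the extreme value theorem a continuous real-valued function on a nonempty compact set attains its minimum, so there exists $c^{\ast}\in[1,\lambda^{-1}]$ minimizing~\prettyref{eq:mse}, which is the claim. There is no genuine obstacle here; the subtlety, if any, is purely the boundary behavior at $c=\lambda^{-1}$, which is handled by the endpoint case of~\prettyref{lem:nonConcavityEsCurve}. (Note that uniqueness is \emph{not} asserted, consistent with the later~\prettyref{exam:uniquenessMSEpelve}.)
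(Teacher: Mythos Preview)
Your proposal is correct and follows essentially the same approach as the paper's own proof: continuity of the ES curve, hence of the objective~\prettyref{eq:mse}, combined with compactness of $[1,\lambda^{-1}]$ and the extreme value theorem. If anything, your treatment is slightly more careful, since you explicitly justify continuity at the right endpoint $c=\lambda^{-1}$ via~\prettyref{lem:nonConcavityEsCurve}, whereas the paper states continuity of $p\mapsto\symbolExpectedShortfall{p}(Y)$ only on $(0,1)$.
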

 
Uniqueness of the minimum is not guaranteed, even if~\cite[Assumption 1]{li_pelve_2023} holds for all $X_1,\dots,X_n$, as the next example shows.
\begin{exam}\label{exam:uniquenessMSEpelve}
	We consider $n=2$ in~\prettyref{eq:mse}, i.e.,~two agents. Based on~\prettyref{thm:existenceOfEScurve}, we can choose a suitable function to describe the ES curve of a random variable. To do so, we choose functions $f,g:(0,1]\rightarrow \mathbb{R}$, weights $\omega_1,\omega_2\in(0,1)$ with $\omega_1+\omega_2 = 1$ and a VaR level $\lambda\in(0,1)$. To be precise, we choose $g(t) = 1-t^2$. With $t_g^{\ast}\in(0,1)$, we denote the value satisfying $g(t_g^{\ast}) = g(\lambda)+\lambda g^{\prime}(t)$. Furthermore, given a constant $c\in(0,\infty)$, let $u\in(t_g^{\ast},1]$ be a constant such that 
	\begin{align*}
		c - \frac{\omega_2}{\omega_1}\big(g(u)-g(\lambda)-\lambda g^{\prime}(\lambda)\big)^2>0.
	\end{align*}
	
	The map $g$ is twice differentiable, decreasing and concave. Hence, by~\prettyref{cor:existenceOfEScurve} there exists an integrable random variable such that its ES curve is equal to $g$. To define the function $f$, we use the following auxiliary function, defined for a parameter $d\in(0,1)$ and for all $t\in[t_g^{\ast},u]$:
	\begin{align*}
		\tilde{f}(t) = d + \sqrt{c - \frac{\omega_2}{\omega_1}(g(t)-g(\lambda)-\lambda g^{\prime}(\lambda))^2}.
	\end{align*}
	Finally, we choose a value $s\in(t_g^{\ast},u)$, a weight $w\in(0,1)$ and $\epsilon\in(0,1)$ such that $\lambda>\epsilon$. Then, we set $t_\lambda^{\text{mixed}}= w(\lambda-\epsilon) + (1-w)s$ and $a = t_\lambda^{\text{mixed}}(\tilde{f}^{\prime}(s)t_\lambda^{\text{mixed}} + \tilde{f}(s)-\tilde{f}^{\prime}(s)s-d)$.
	
	Then, for all $t\in(0,1]$ we set
	\begin{align*}
		f(t)=\begin{cases}
			\frac{a}{\lambda - \epsilon} + d, &t\in(0,\lambda-\epsilon),\\
			\frac{a}{t} + d, &t\in[\lambda-\epsilon, t_\lambda^{\text{mixed}}),\\
			\tilde{f}^{\prime}(s)t+\tilde{f}(s)-\tilde{f}^{\prime}(s)s, &t\in[t_\lambda^{\text{mixed}}, s),\\
			\tilde{f}(t), &t\in[s, u),\\
			\tilde{f}^{\prime}(u)t+\tilde{f}(u)-\tilde{f}^{\prime}(u)u, &t\in[u,1].
		\end{cases}
	\end{align*} 
	
	We use the following parameter values:
	\begin{align*}
		\lambda = \frac{1}{3},\ c = 0.1,\ d = 0.8,\ \omega_1 = 1-\omega_2 = 0.4,\ \epsilon = 0.05,\ s = 0.66,\ u = 0.74 ,\ w = 0.25.
	\end{align*}
	
	We illustrate the functions $f$ and $g$ by the solid lines in the left plot in~\prettyref{fig:nonUniquenessMSEpelve}. The dashed lines represent the almost everywhere defined functions $t\mapsto -f(t)-tf^{\prime}(t)$ and $t\mapsto -g(t)-tg^{\prime}(t)$, which are obviously non-decreasing. By definition of $f$, the corresponding a.e.~defined derivative of $-tf(t)$ is right-continuous. Furthermore, $f(0):=\lim\limits_{u\downarrow 0} f(u) = \frac{a}{\lambda-\epsilon}+d$ and in turn $\lim\limits_{t\downarrow 0} tf(t) = 0$. Also $f^{\prime}$ exists a.e.~and it is Lebesgue integrable such that $f(t) = f(0)+\int_{0}^{t}f^{\prime}(v)\diff v$, which is equivalent to $f$ being absolutely continuous. This also shows that $t\mapsto tf(t)$ is absolutely continuous. Hence, all conditions in~\prettyref{thm:existenceOfEScurve} are satisfied and hence, there exists an integrable random variable such that its ES curve is exactly given by $f$.\footnote{From~\prettyref{fig:nonUniquenessMSEpelve} it is also not difficult to deduce that Assumption 1 in~\textcite{li_pelve_2023} is satisfied.} 
	
	In the right plot in~\prettyref{fig:nonUniquenessMSEpelve}, we illustrate the following function: 
	\begin{align*}
		h:(0,1]\rightarrow \mathbb{R},t\mapsto \omega_1(f(t)-f(\lambda)-\lambda f^{\prime}(\lambda))^2+ \omega_2(g(t)-g(\lambda)-\lambda g^{\prime}(\lambda))^2.
	\end{align*}
	Note, the values $f(\lambda)+\lambda f^{\prime}(\lambda)$ and $g(\lambda)+\lambda g^{\prime}(\lambda)$ are the VaR values with respect to the ES curves $f$ and $g$. We see that function $h$ has no unique minimum. It attains its minimum value all over the interval $[s,u]$, represented by the vertical dashed lines.
	
	\begin{figure}
		\begin{subfigure}[b]{0.48\textwidth}
			\includegraphics[width=\linewidth]{./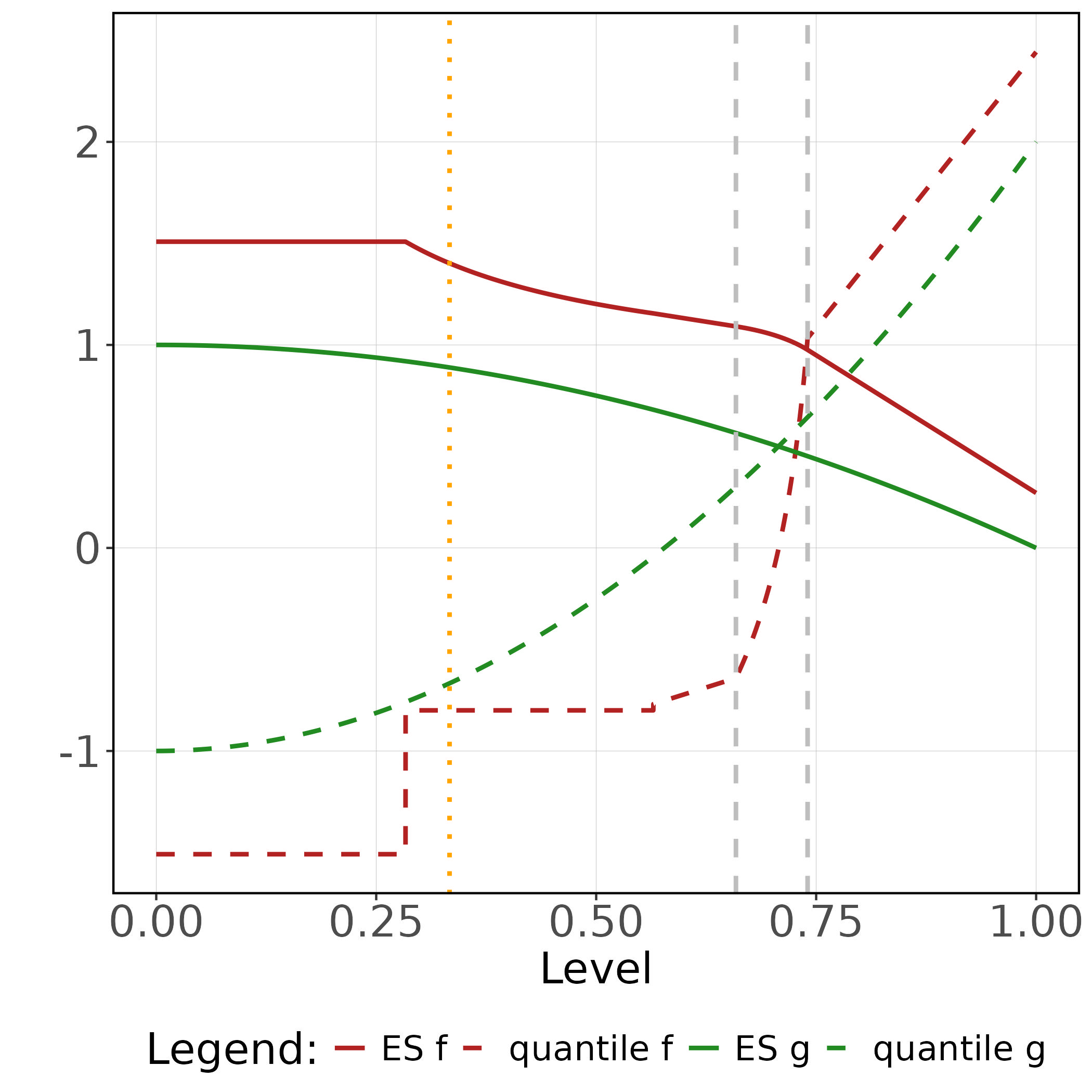}
		\end{subfigure}
		\hfill
		\begin{subfigure}[b]{0.48\textwidth}
			\includegraphics[width=\linewidth]{./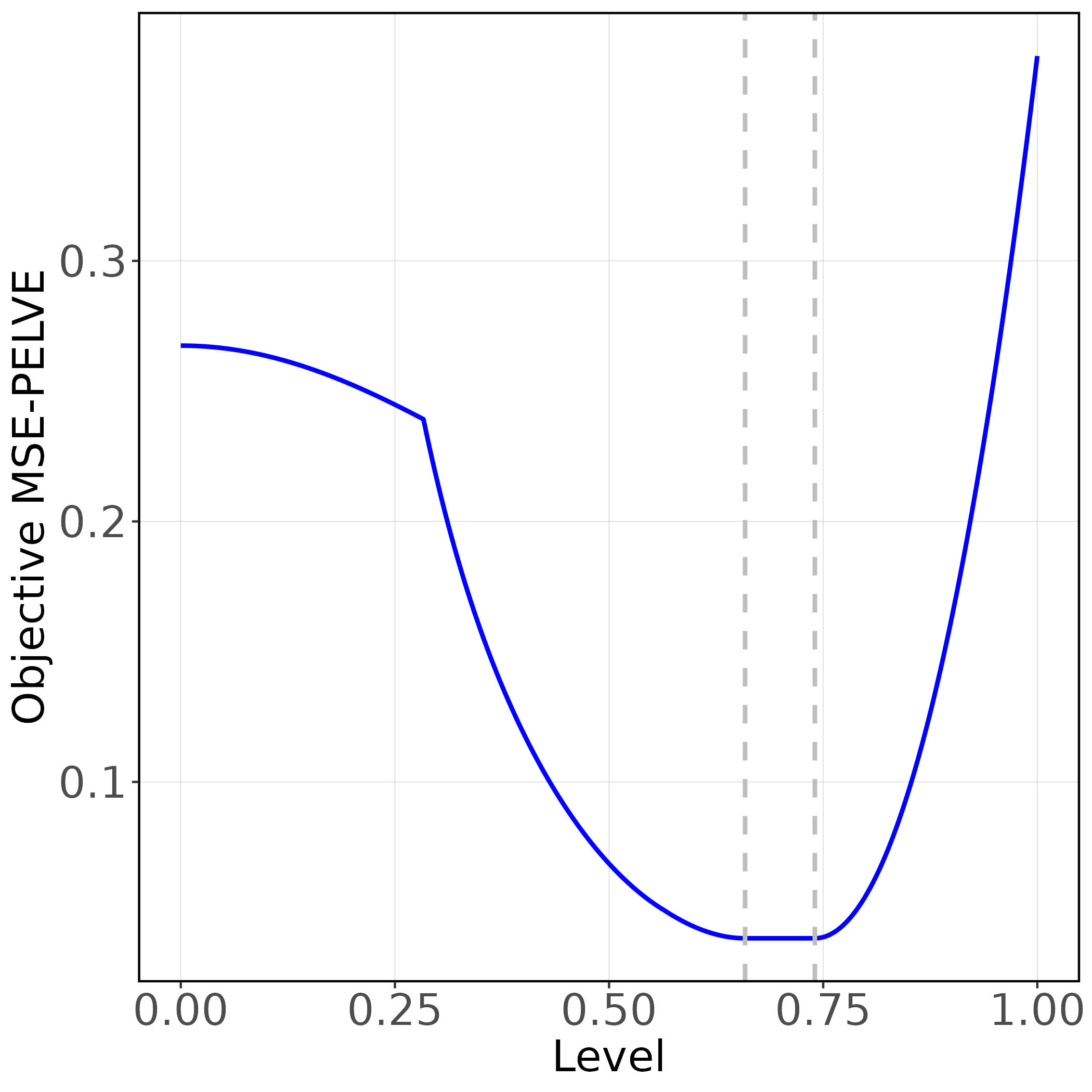}
		\end{subfigure}
		\captionsetup{font=footnotesize}
		\caption{Solid lines in the left plot are the functions $f$ and $g$. The red and green dashed lines are the a.e.~defined maps $t\mapsto -f(t)-tf^{\prime}(t)$ and $t\mapsto -g(t)-tg^{\prime}(t)$. The orange dotted vertical line is the chosen level $\lambda=\frac{1}{3}$. The gray vertical lines indicate the interval over which the objective of the MSE-PELVE remains constant; compare with the right-hand side.}
		\label{fig:nonUniquenessMSEpelve}
	\end{figure}
\end{exam}

Next, we develop a continuity result. To do so, we adopt the same assumptions as used in~\cite[Theorem 2]{li_pelve_2023}. In addition, we assume that in the limit, the MSE-PELVE is unique. For its proof, we denote by $|.|_d$ the Euclidean norm on $\mathbb{R}^d$.
\begin{theorem}\label{thm:convergenceMSE}
	In the situation of~\prettyref{defi:mse_pelve} fix $\mathbf{X}\in L^1_n$, a sequence $(\mathbf{X}^m)_m\subseteq L^1_n$, and assume the following:
	\begin{enumerate}[(a)]
		\item For all $i\in[n]$ let $X_i^m\xrightarrow{L^1}X_i$ as $m\rightarrow \infty$ and let the map $\symbolUpperQuantile{X_i}$ be continuous at $\lambda$;
		\item The minimum of~\prettyref{eq:mse} regarding $\mathbf{X}$ is unique.
	\end{enumerate}
	Then, it holds that
	\begin{align*}
		\lim\limits_{m\rightarrow\infty}\msepelve{\lambda,\omega}(\mathbf{X}^m) = \msepelve{\lambda,\omega}(\mathbf{X}).
	\end{align*}
\end{theorem}

As for the A-PELVE, we obtain that the MSE-PELVE takes the situation of all agents into account. However, in contrast to the A-PELVE, it leads to smaller overall changes in the capital reserves. Individually, i.e.,~per agent, MSE- or A-PELVE can lead to the smallest change in an individual capital reserve. For statements in this direction, we refer to the case study in~\prettyref{sec:caseStudy}. As a disadvantage, we can say that the objective in~\prettyref{eq:mse} is not robust against outliers. Also, it does not differ between an increase or decrease in the capital reserve of an agent. 

To conclude this subsection, we introduce two methods, which act as lower and upper bounds for the other methods. The first method leads for each agent to a decrease in the capital reserve. This can be interpreted as the best case for the insurers, since none of them needs to increase its reserve, but as the worst case for the regulator, as every insurer can reduce its reserve, making a future bankruptcy more likely. The second method vice versa leads to an increase in the capital reserve of each agent, for which we speak of the best-case for the regulator.

\begin{defi}\label{defi:wc_pelve}
	The Worst-Case PELVE (WC-PELVE) $\wcpelve{\lambda}:L^1_n\rightarrow [1,\lambda^{-1}]\cup\{\infty\}$ for $\lambda\in(0,1)$ is defined for all $\mathbf{X}\in L^1_n$ as
	\begin{align}\label{eq:wcPelve}
		\wcpelve{\lambda}(\mathbf{X}) = \max\{\pelve{\lambda}(X_1),\dots,\pelve{\lambda}(X_n)\}.
	\end{align}	
	Analogously, the Best-Case PELVE (BC-PELVE) $\bcpelve{\lambda}:L^1_n\rightarrow [1,\lambda^{-1}]\cup\{\infty\}$ for $\lambda\in(0,1)$ is defined for all $\mathbf{X}\in L^1_n$ as $\bcpelve{\lambda}(\mathbf{X}) = \min\{\pelve{\lambda}(X_1),\dots,\pelve{\lambda}(X_n)\}$.
\end{defi} 

\begin{rem}
	The WC-PELVE (resp.~BC-PELVE) exists iff every (resp.~at least one) PELVE exists. In the same situation as in~\prettyref{rem:properties_APELVE}, we obtain that $\wcpelve{\lambda}(\mathbf{X}^{m}) \rightarrow \wcpelve{\lambda}(\mathbf{X})$ (resp.~$\bcpelve{\lambda}(\mathbf{X}^{m}) \rightarrow \bcpelve{\lambda}(\mathbf{X})$) as $m\rightarrow\infty$.
\end{rem}

The WC-PELVE considers the situations of all agents and distinguishes between an increase or decrease in the corresponding capital reserves in the sense that it results in smaller capital requirements for all agents, because it holds that
$$\wcpelve{\lambda}(\mathbf{X}) = \inf\{c\in[1,\lambda^{-1}]\,|\,\forall i\in[n]:\symbolExpectedShortfall{c\lambda}(X_i)\leq \symbolValueAtRisk{\lambda}(X_i)\}.$$
Such a decrease can also have disadvantages, namely, for some agents we expect significant smaller capital reserves. In contrast, the BC-PELVE ensures an increase in the reserves of all insurers, which has the drawback that a substantial increase in the reserve is hard to put into practice. From this discussion, we should understand WC- and BC-PELVE only as lower and upper bounds for the other Multi-PELVE methods.

\subsection{Systemic PELVE}\label{sec:def_systemic_pelve}

In this section, we present a method based on systemic risk before and after a transition from VaR to ES. To do so, let $g$ be either given as the identity $g(x)=x$ or as $g(x)=\max\{0,x\}$. Then, for $\lambda\in(0,1)$ and $\mathbf{X}\in L^1_n$ we define
\begin{align}\label{eq:systemic_pelve}
	\syspelve{\lambda,g}(\mathbf{X})\defgl \inf\{c\in[1,\lambda^{-1}]\,|\,\rho^{\symbolExpectedShortfall{c\lambda},g}(\mathbf{X})\leq\rho^{\symbolValueAtRisk{\lambda},g}(\mathbf{X})\},
\end{align}
where $\rho^{\symbolExpectedShortfall{\beta},g}$, respectively  $\rho^{\symbolValueAtRisk{\beta},g}$, is a systemic risk measure based on ES, respectively VaR. There are different ways to specify these systemic risk measures. We use
\begin{align}\label{eq:systemic_risk_biagini}
	\rho^{\nu,g}(\mathbf{X}) &\defgl \sum_{i=1}^{n}g(\nu(X_i)),
\end{align}
where $\nu\in\{\symbolOperatorExpectedShortfall_{\beta},\symbolOperatorValueAtRisk_{\beta}\}$. Equation~\prettyref{eq:systemic_risk_biagini} is a systemic risk measure in the form of~\cite[Equation (2.13)]{biagini_unified_2019}.\footnote{This can be seen by noting that $\rho^{\nu,g}(\mathbf{X})=\inf\left\{\sum_{i=1}^{n}g(m_i)\,\middle|\,\forall i\in[n],m_i\in\mathbb{R}:\nu(X_i+m_i)\leq 0\right\}$.} We call the map in~\prettyref{eq:systemic_pelve} a Systemic PELVE (Sys-PELVE). Furthermore, note that choosing $g(x) = \max\{0,x\}$ avoids cross-subsidization among agents, i.e.,~positive values cannot be compensated by negative ones.

\begin{rem}
	By the same reasoning as in~\cite[Proposition 1]{li_pelve_2023}, we have $\syspelve{\lambda,g}(\mathbf{X})<\infty$ if and only if $\sum_{i=1}^{n}g(\expectation{}{-X_i}) \leq \sum_{i=1}^{n}g(\valueAtRisk{\lambda}{X_i})$. For $g(x) = x$, if it holds that $\syspelve{\lambda,g}(\mathbf{X})<\infty$ and if for each $i\in[n]$, $p\mapsto \valueAtRisk{p}{X_i}$ is not being constant on $(0,\lambda]$, then there exists a unique $c\in[1,\lambda^{-1}]$ such that $\sum_{i=1}^{n}\expectedShortfall{c\lambda}{X_i} \leq \sum_{i=1}^{n}\valueAtRisk{\lambda}{X_i}$.
\end{rem}

Similar to~\prettyref{thm:convergenceMSE}, we state a continuity property of Sys-PELVE. 
\begin{prop}\label{prop:convergenceSys}
	Let $\lambda\in(0,1)$, $\mathbf{X}\in L^1_n$, $(\mathbf{X}^m)_m\subseteq L^1_n$ and $g$ either be given as $g(x)=x$ or as $g(x) = \max\{x,0\}$. Assume the following:
	\begin{enumerate}[(a)]
		\item For all $i\in[n]$ let $X_i^m\xrightarrow{L^1}X_i$ as $m\rightarrow \infty$ and let the map $\symbolUpperQuantile{X_i}$ be continuous at $\lambda$;
		\item For all $i\in[n]$, the map $p\mapsto \valueAtRisk{p}{X_i}$ is not constant on $(0,\lambda]$;
		\item It holds that $\sum_{i=1}^{n}g(\expectation{}{X_i})<\sum_{i=1}^{n}g(\valueAtRisk{\lambda}{X_i})$. 
	\end{enumerate}
	Then it holds that $$	\lim\limits_{m\rightarrow\infty}\syspelve{\lambda,g}(\mathbf{X}^m) = \syspelve{\lambda,g}(\mathbf{X}).$$
\end{prop}

\begin{rem}
	Assumption (a) is the same as the one  in~\prettyref{thm:convergenceMSE}. Assumptions (b) and (c) adapt~\textcite[Assumption 1]{li_pelve_2023} to the case of the Sys-PELVE. They differ from Assumption (b) in~\prettyref{thm:convergenceMSE} (\textit{uniqueness} of minimizer of the objective function in case of the MSE-PELVE). However, they fulfill the same purpose, that is, providing a \textit{unique} value meeting the constraint in case of the Sys-PELVE. Furthermore,~\prettyref{exam:uniquenessMSEpelve} indicates that Assumptions (b) and (c) in~\prettyref{prop:convergenceSys} adapted to the MSE-PELVE case are not stronger than Assumption (b) in~\prettyref{thm:convergenceMSE}.
\end{rem}

The systemic risk measures used in the Sys-PELVE are of relevance for the regulator as they describe the overall situation of the financial system. The Sys-PELVE then guarantees that the summation of capital reserves in the network of agents is reduced after the transition from VaR to ES. Nonetheless, since the systemic risk measures aggregate the VaR or ES values of all agents, it is possible that the capital reserve of an individual agent increases after moving from VaR to ES.

All discussed pros and cons of the presented methods are summarized in~\prettyref{tab:comparisonMethods}.

\begin{table}
	\centering
	\setlength{\leftmargini}{0.4cm}
	\begin{tabular}{|l|m{6cm}|m{6cm}|}
		\hline
		& \textbf{Favorable features} & \textbf{Unfavorable features}\\
		\hline\hline
		PELVE $\pelve{\lambda}(X_i)$ &\begin{itemize}
				\item readily understandable 
				\item easy to calculate
				\item does not change the solvency probability of agent $i$
				\item suitable for a homogeneous market of agents
			\end{itemize} & \begin{itemize}
			\item difficult to choose benchmark agent $i$ in heterogeneous markets
			\item huge differences in capital reserve for agent $j\neq i$ possible
			\item agent $j\neq i$ might feel disregarded
			\item does not differ between increase or decrease in capital reserve 
			\end{itemize} \\
		\hline
		A-PELVE $\apelve{\lambda,\omega}(\mathbf{X})$ &\begin{itemize}
			\item readily understandable
			\item easy to calculate
			\item considers all agents’ situations, including agents’ market shares $\omega$
			\item easy calculation
		\end{itemize} & \begin{itemize}
			\item huge differences in capital reserve, if agent $i$'s PELVE differs significantly from the mean
			\item does not differ between increase or decrease in capital reserve
			\item not robust against outliers
		\end{itemize}\\
		\hline
		MSE-PELVE $\msepelve{\lambda,\omega}(\mathbf{X})$ & \begin{itemize}
			\item considers all agents’ situations, including  agents’ market shares $\omega$
			\item leads to smaller changes in the capital reserves over all agents than previous methods
		\end{itemize}
		& \begin{itemize}
			\item does not differ between increase or decrease in capital reserve
			\item objective function is not robust against outliers
		\end{itemize}\\
		\hline
		WC-PELVE $\wcpelve{\lambda}(\mathbf{X})$ & \begin{itemize}
			\item considers all agents’ situations
			\item differs between increase or decrease in capital reserve
			\item positive from an agent's perspective: results in a smaller capital reserve after the transition from VaR to ES
		\end{itemize} & 
		\begin{itemize}
			\item can lead to a significant lower capital reserve for an agent, which is  undesirable from the regulator's perspective
			\item increases the probability that at least one agent will become insolvent in the future
		\end{itemize} \\
		\hline
		Sys-PELVE $\syspelve{\lambda,g}(\mathbf{X})$ & \begin{itemize}
				\item considers all agents’ situations
				\item based on systemic risk measures, relevant for a regulator
				\item unlike MSE, guaranteeing that needed reserve in the network (sum of ES or VaR values) is reduced by moving from VaR to ES
				\item $g(x)=\max\{x,0\}$ prevents cross-subsidization between agents
			\end{itemize} & 
			\begin{itemize}
				\item the constraint only works on the sum of  capital reserves of all agents. Hence, it is possible that the capital reserve of a single agent increases after the transition
				\item $g(x)=x$ allows for cross-subsidization between agents
			\end{itemize}\\
			\hline
	\end{tabular}
	\captionsetup{font=footnotesize}
	\caption{Features of the suggested methods.}
	\label{tab:comparisonMethods}
\end{table}

	\section{Common distributional assumptions}\label{sec:concreteTheoreticalDistributions}

This section considers payoff vectors following specific multivariate distributions.

\subsection{Multivariate elliptical distributions}\label{sec:examNormal}

We start by calculating the Multi-PELVE methods for a typical class of models often used in finance and insurance, namely elliptical distributions. They contain multivariate normal and multivariate $t$-distributions as special cases. 

\begin{defi}
	Assume an $n$-dimensional random vector $\mathbf{X}$, a vector $\mu\in\mathbb{R}^n$, a positive semi-definite symmetric matrix $\Sigma\in\mathbb{R}^{n\times n}$ and a map $\phi:[0,\infty)\rightarrow \mathbb{R}$. If the characteristic function $\varphi_{\mathbf{X}-\mu}$ of $\mathbf{X}-\mu$ is of the form $\varphi_{\mathbf{X}-\mu}(t) = \phi(t^{\intercal} \Sigma t)$, then $\mathbf{X}$ is elliptically distributed, denoted as $\mathbf{X}\sim E_{n}(\mu,\Sigma,\phi)$.  
\end{defi}

We write $I_n$ for the $n$-dimensional identity matrix and denote its $i$-th column by $\unitVec_i$. 

\begin{rem}\label{rem:cholesky_elliptical}
	Let $A\in\realNumbers^{n\times n}$ with $\Sigma=A A^{\intercal}$. If $\Sigma$ is positive definite, then $A$ can be chosen as the Cholesky decomposition of $\Sigma$. By~\cite[Lemma 3.1]{hult_multivariate_2002}, for a random variable $\mathbf{Y}\sim E_{n}(0,I_n,\phi)$ it holds that $\mathbf{X}\stackrel{d}{=}\mu + A \mathbf{Y}$. 
\end{rem}

The next result states that for elliptical distributions, in most of the cases, the Multi-PELVE methods simplify to a PELVE for a single elliptically distributed random variable. To do so, we denote by $\var(X)$ the variance of a random variable $X$.

\begin{prop}\label{prop:pelveForElliptical}
	Let $\lambda\in(0,1)$ and $\omega\in[0,1]^{n}$ with $\sum_{i=1}^{n}\omega_i = 1$. Further, let $\mathbf{X}\sim E_{n}(\mu,\Sigma,\phi)$ and $Z\sim E_{1}(0,1,\phi)$. The following holds:
	\begin{enumerate}[(i)]
		\item For all $i\in[n]$ we have $\pelve{\lambda}(X_i) = \pelve{\lambda}(Z)$. 
		\item We have $\apelve{\lambda,\omega}(\mathbf{X}) =\wcpelve{\lambda}(\mathbf{X}) = \syspelve{\lambda,g}(\mathbf{X}) = \pelve{\lambda}(Z)$, where $g(x) = x$.
		\item The following map is an MSE-PELVE:
		\begin{align*}
			\msepelve{\lambda,\omega}(\mathbf{X})=\begin{cases}\pelve{\lambda}(Z), &\expectation{}{-Z}\leq \valueAtRisk{\lambda}{Z},\\
				\lambda^{-1}, &\text{otherwise}.
				\end{cases} 
		\end{align*}
		\item For $g(x)=\max\{0,x\}$ we have $\syspelve{\lambda,g}(\mathbf{X}) \leq  \pelve{\lambda}(Z)$. Further, if $\expectation{}{-Z}\leq\valueAtRisk{\lambda}{Z}$ and the map $(0,1)\rightarrow\mathbb{R},\alpha\mapsto \expectedShortfall{\alpha}{Z}$ is strictly decreasing, then $\syspelve{\lambda,g}(\mathbf{X}) = \pelve{\lambda}(Z)$ if and only if there exists $i\in[n]$ such that $\sigma_i^2 > 0$ and $\frac{\mu_i}{\sigma_i} \leq \valueAtRisk{\lambda}{Z}$, where $\sigma_i^2 := \var(X_i)$.\footnote{Note that $\var(X_i) = \unitVec_i^{\intercal} \Sigma \unitVec_i$. Furthermore, $\var(X_i)>0$ is equivalent to requiring that $X_i$ is not a.s.~constant.}
	\end{enumerate}
\end{prop}

For a multivariate normal distribution, it holds that $\phi(x) = e^{-\frac{1}{2}x}$. Hence, $\pelve{\lambda}(Z)$ is independent of the mean vector and the covariance matrix. 
For the situation of a multivariate t-distribution note the following: \textcite[Theorem 3.1]{hult_multivariate_2002} state that $\mathbf{X} \stackrel{d}{=} \mu + RA \mathbf{U}$, where $R$ is a non-negative random variable independent of $U$. The latter is a $\text{rank}(\Sigma)$-dimensional random vector, which is uniformly distributed on the corresponding unit-sphere and $A$ is an $n\times \text{rank}(\Sigma)$-matrix with $AA^{\intercal} = \Sigma$. Now, $\mathbf{X}$ is multivariate t-distributed iff $R^{2} / n\sim F(n, \nu)$, where $F(n,\nu)$ is an $F$-distribution with $n$ and $\nu$ degrees of freedom. So, $\pelve{\lambda}(Z)$ only depends on the degrees of freedom of the multivariate t-distribution, hidden in the function $\phi$.

As a conclusion of~\prettyref{prop:pelveForElliptical}, most of the suggested methods lead to the PELVE of a one-dimensional elliptically distributed random variable. Only the Sys-PELVE can be strictly less than $\pelve{\lambda}(Z)$. The following example illustrates this case. Here, a normal distribution with mean $\mu$ and variance $\sigma^2$ is denoted by $\text{N}(\mu,\sigma^2)$; the density and cumulative distribution function (CDF) of a standard normal random variable are denoted by $\varphi$ and $\Phi$, respectively.

\begin{exam}	
	Let $n=1$ and $X\sim \text{N}(\mu,\sigma^{2})$. In particular, $Z$ is assumed to be standard normally distributed. We choose the values $\mu = 0.75$ and $\sigma = 0.4$, as well as the level $\lambda = 0.05$. For this specification it holds that $\valueAtRisk{\lambda}{X}<0$. Thus, to calculate $\syspelve{\lambda,g}(X)$ for $g(x)=\max\{x,0\}$ we have to find $c^{\star}$ such that $\expectedShortfall{c^{\star}\lambda}{Z}= \frac{\mu}{\sigma}$. The value $\expectedShortfall{c\lambda}{Z}=\frac{\varphi(\Phi^{-1}(c\lambda))}{c\lambda}$ as a function of $c$ is illustrated in~\prettyref{fig:example_sysPELVE}. Moreover, the Sys-PELVE is indicated by the solid blue line, while the PELVE $\pelve{\lambda}(X) = \pelve{\lambda}(Z)$ is indicated by the solid red line. Hence, we get $\syspelve{\lambda,g}(X)<\pelve{\lambda}(Z)$.
\end{exam}

\begin{figure}[htb]
	\includegraphics[width=0.4\linewidth]{./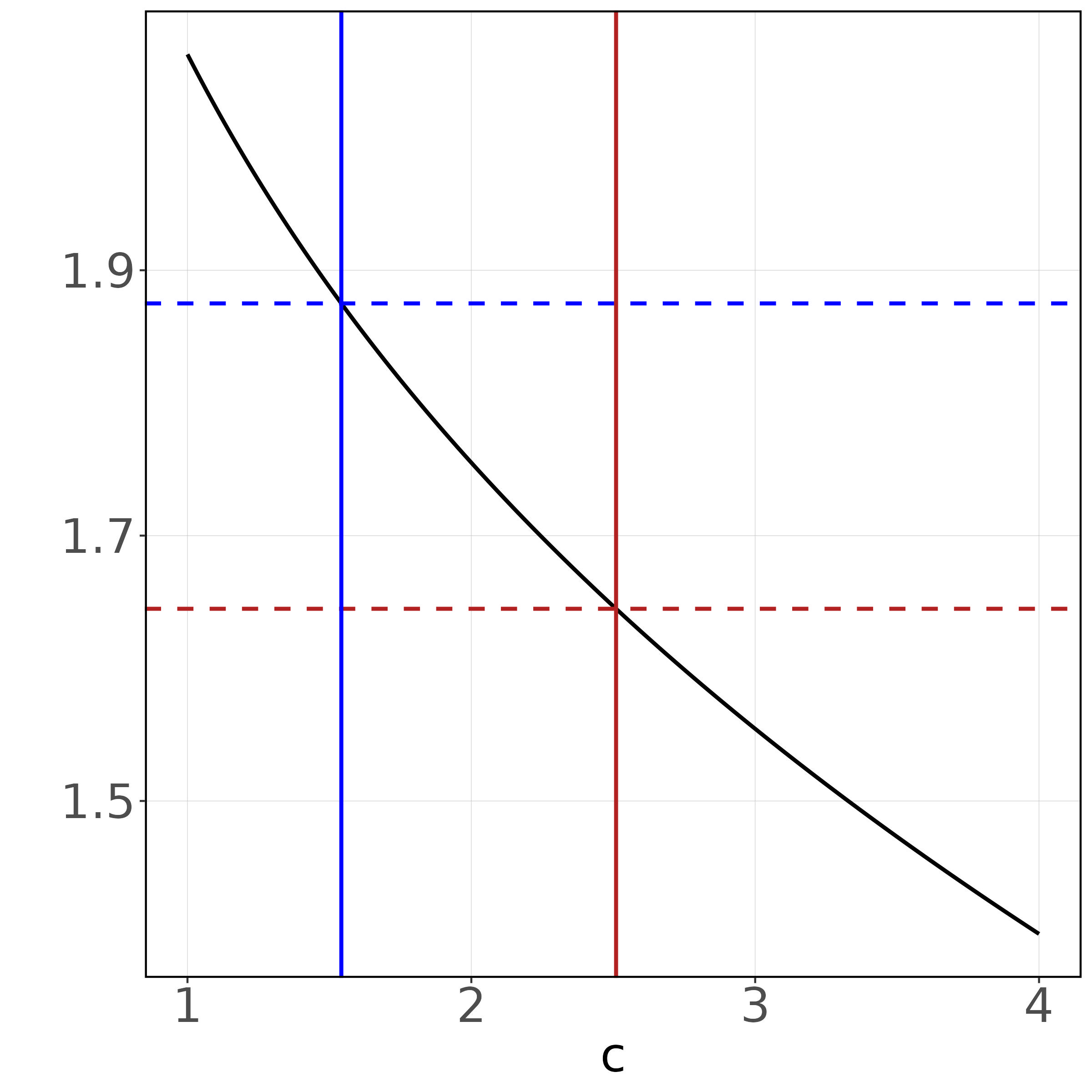}
	\captionsetup{font=footnotesize}
	\caption{Graph of the function $c\mapsto \frac{\varphi(\Phi^{-1}(c\lambda))}{c\lambda}$, where $\lambda=0.05$. The solid blue line is $\syspelve{\lambda,g}(X)$ for $X\sim \text{N}(\mu,\sigma)$ and $g(x)=\max\{0,x\}$, where $\mu = 0.75$ and $\sigma = 0.4$. The blue dashed line refers to the value $\frac{\mu}{\sigma}$. The red solid line is $\pelve{\lambda}(X)$, while the red dashed line refers to the value $-\Phi^{-1}(\lambda)$.}
	\label{fig:example_sysPELVE}
\end{figure} 

\subsection{Multivariate regularly varying distributions}\label{sec:heavyTails}

Next, we consider the heavy tailed case via multivariate regularly varying (MVR) distributions. To do so, we first recall the definition of a regularly varying random variable.

\begin{defi}
	A one-dimensional random variable $X$ has a regularly varying right-tail with tail index $\gamma>0$, if for all $t>0$ it holds that 
	\begin{align*}
		\lim\limits_{x\rightarrow\infty}\frac{P(X>tx)}{P(X>x)} = t^{-\gamma}.
	\end{align*}
	We write $X\in\text{RV}_{\gamma}$.
\end{defi}

For the definition of MVR models, we denote the standard $L^1_n$-norm by $\lVert.\rVert_{L^1_n}$. The boundary of a set $A\subseteq \mathbb{R}^n$ with respect to the Euclidean norm is written as $\partial A$.

\begin{defi}
	An $n$-dimensional random vector $\mathbf{X}$ has a multivariate regularly varying right-tail, if there exists $\gamma>0$ and a Borel probability measure $\Psi$ on the unit sphere $\mathbb{S}^n\defgl \{s\in\mathbb{R}^n\,|\,|s|_n = 1 \}$ such that for any $t>0$ and any Borel set $S\subseteq \mathbb{S}^n$ with $\Psi(\partial S) = 0$ it holds that
	\begin{align*}
		\lim\limits_{x\rightarrow\infty}\frac{P\left(\lVert \mathbf{X}\rVert_{L^1_d}>tx, \frac{\mathbf{X}}{\lVert \mathbf{X}\rVert_{L^1_n}}\in S\right)}{P(\lVert \mathbf{X}\rVert_{L^1_n}>x)} = t^{-\gamma}\Psi(S).
	\end{align*}
	In this situation, $\gamma$ is called the tail index of $\mathbf{X}$ and $\Psi$ the spectral measure of $\mathbf{X}$. The set of all $n$-dimensional random vectors with tail index $\gamma$ and spectral measure $\Psi$ is denoted by $\text{MVR}^n_{\gamma}(\Psi)$.
\end{defi}

We study the asymptotic behavior under an MVR model as the VaR level tends to zero. This aligns with the considerations for the classical PELVE in~\textcite{li_pelve_2023}, and can be seen as the counterpart of our~\prettyref{prop:pelveForElliptical} in the presence of heavy tails.

\begin{theorem}\label{thm:pelveForMVR}
	Let $\lambda\in(0,1)$ and $\omega\in[0,1]^{n}$ with $\sum_{i=1}^{n}\omega_i = 1$. Let $-\mathbf{X}\in \text{MVR}^n_{\gamma}(\Psi)$ with $\gamma>1$ and $\Psi$ satisfies $\Psi(\mathbb{S}^n\cap(0,\infty)^n) > 0$ (non-degeneracy condition). The following holds:
	\begin{enumerate}[(i)]
		\item For all $i\in[n]$ we have $\lim\limits_{\lambda\downarrow 0}\pelve{\lambda}(X_i) =\left(\frac{\gamma}{\gamma - 1}\right)^{\gamma}$. 
		\item We have $$\lim\limits_{\lambda\downarrow 0}\apelve{\lambda,\omega}(\mathbf{X}) =\lim\limits_{\lambda\downarrow 0}\wcpelve{\lambda}(\mathbf{X}) = \lim\limits_{\lambda\downarrow 0}\syspelve{\lambda,g}(\mathbf{X}) = \left(\frac{\gamma}{\gamma - 1}\right)^{\gamma},$$ where $g(x) = x$ or $g(x) = \max\{x,0\}$.
		\item For each MSE-PELVE we have $\lim\limits_{\lambda\downarrow 0}\msepelve{\lambda,\omega}(\mathbf{X})= \left(\frac{\gamma}{\gamma - 1}\right)^{\gamma}.$
	\end{enumerate}
\end{theorem}

The assumption $\gamma>1$ ensures the application of~\cite[Theorem A3.6 (b)]{embrechts_modelling_1997}. For $n=1$, this is equivalent to $-X$ being in the maximum domain of attraction of the Fr\'{e}chet distribution with parameter $\gamma>1$, see~\cite[Theorem 3.3.7]{embrechts_modelling_1997}. For instance, a Pareto distribution with shape parameter $\gamma >1$, i.e.,~finite mean, lies in this class.

The limits in~\prettyref{thm:pelveForMVR}, given by $\left(\frac{\gamma}{\gamma - 1}\right)^{\gamma}$, are all equal. This implies that, in the limiting case, the choice of the Multi-PELVE method is insignificant. Moreover, the value $\left(\frac{\gamma}{\gamma - 1}\right)^{\gamma}$ is equal to the PELVE of a Pareto distributed random variable with parameter $\gamma>1$. As $\gamma\rightarrow \infty$ this value converges to Euler's number $e\approx 2.718$. For $\gamma>1$, $\left(\frac{\gamma}{\gamma - 1}\right)^{\gamma}$ exceeds $e$. Hence,~\prettyref{thm:pelveForMVR} strongly indicates that the observation in~\cite{li_pelve_2023} -- that $e$ separates heavy- from light-tailed phenomena -- remains valid for multivariate extensions of PELVE.\newline

\noindent\fbox{\parbox{\textwidth}{\textit{Important consequence of~\prettyref{sec:concreteTheoreticalDistributions}:} The situations of multivariate normal and multivariate regularly distributions refer to situations of homogeneous market structures, referring to the fact that marginal distributions are out of the same family (at least in the tail). \prettyref{prop:pelveForElliptical} and~\prettyref{thm:pelveForMVR} indicate that in such a situation (the one of a homogeneous market), it is enough to use a PELVE with respect to a suitable random variable.}} \newline

This finding is also confirmed by the case study in the next section. However, we also analyze the situation of an inhomogeneous market, in which it will turn out that the application of Multi-PELVE methods cannot be reduced to the application of the PELVE from~\prettyref{defi:pelve}.

	\section{Numerical case study}\label{sec:caseStudy}

We consider six insurers as agents, where the random vector $\mathbf{X}$ represents their future equity capitals. Future balance sheets are modeled following~\textcite{laudage_combining_2022}. Unlike their framework, assets consist of the sum of a constant and a payoff from a Black-Scholes model based on a constant portfolio process instead of a buy-and-hold strategy. Liabilities follow a gamma distribution, as in~\cite{laudage_combining_2022}, or heavier-tailed lognormal or generalized Pareto distributions (GPDs).

\subsection{Chosen distributions}

We model the future equity capital of insurer $i$ as $X_i = Y_i - Z_i$, where $Y_i$ and $Z_i$ denote the insurer's assets and liabilities, respectively. Liquid assets are modeled as payoffs stemming from a Black-Scholes market, while the non-liquid assets are assumed to be constant (e.g.,~real estates). Liabilities follow distributions commonly applied in non-life insurance -- gamma, lognormal, and GPD -- for total claim amounts.

For the liquid assets, each insurer invests in a common stock and an idiosyncratic stock. These stocks are modeled as geometric Brownian motions. We assume that an insurer invests a predefined fraction of the total wealth into the stocks, i.e.,~we restrict attention to constant portfolio processes. Note that the optimal solution of Merton's portfolio problem~\parencite{merton_optimum_1971} is a constant portfolio process. 

A constant portfolio process is a two-dimensional vector in $\realNumbers^{2}$, i.e.,~$\pi^i = (\pi^i_1,\pi^i_2)^{\intercal}\in\mathbb{R}^2$, where $\pi^i_j$ is the fraction of initial wealth invested in stock $j$. Let the initial liquid asset value be $x_0^i>0$ and assume that the investment horizon is one year. For each insurer, we need a two-dimensional standard Brownian motion $\mathbf{W}_t^i = (B_t,W_t^i)^{\intercal}$, where $B_t$ and $W_t^i$ are the Brownian motions for the common stock and the idiosyncratic stock of insurer $i$, respectively. The common interest rate is $r\in\mathbb{R}$, the drift vector is $b^i=(b,\mu^i)^{\intercal}$ and the covariance matrix is $\Sigma^{i} = \text{diag}(\sigma, \sigma^{i})$. Then, the wealth process of insurer $i$ at time $t\in[0,1]$ is
\begin{align*}
	X^{x_0^i,\pi^i}_t = x_0^i\exp\left(\left((\pi^i)^{\intercal}(b^i-r\mathbf{1}) + r -\tfrac{\left\lVert(\pi^i)^{\intercal}\Sigma^i\right\rVert^2}{2}\right)t + (\pi^i)^{\intercal}\Sigma^i \mathbf{W}_t^i\right),
\end{align*}
where $\mathbf{1} = (1 \dots 1)^{\intercal}$. We choose the same parameter values for all insurers:
\begin{align*}
	&r = 0.01,\quad b^i = (0.04,0.06)^{\intercal},\quad \Sigma^i = \begin{pmatrix}
		0.2 & 0\\
		0 & 0.4
	\end{pmatrix}.
\end{align*}

The portfolio processes are based on the proportions of stocks to liquid assets in the balance sheets of six life insurance companies, see~\prettyref{tab:equity_insurers}.\footnote{These companies are: Allianz Lebensversicherung AG, Debeka Lebensversicherung AG, ERGO Lebensversicherung AG, R$+$V Lebensversicherung AG, AXA Lebensversicherung AG, Gothaer Lebensversicherung AG.} We emphasize that the upcoming modeling assumptions are hypothetical; hence, our results cannot be used to draw conclusions about specific companies. We used their balance sheet values only to ensure realistic orders of magnitude.
\begin{table}
	\begin{tabular}{rrrrrrrrr}
		\hline
		Insurer i & $EC_i$ & $A_i$ & $L_i$ & $S_i$ & $x_0^i$ & $S_i/x_0^i$ & $\pi^i_1$ & $\pi^i_2$\\
		\hline
		1 & 2567 & 290685 & 288118 & 136096 & 191434 & 0.7109 & 0.6043 & 0.1066\\
		2 & 919 & 57851 & 56932 & 10646 & 51572 & 0.2064 & 0.1755 & 0.0310\\
		3 & 743 & 41133 & 40390 & 11870 & 36068 & 0.3291 & 0.2797 & 0.0494\\
		4 & 1207 & 83914 & 82707 & 25033 & 62545 & 0.4002 & 0.3402 & 0.0600\\
		5 & 383 & 26179 & 25796 & 9747 & 17655 &  0.5521 & 0.4693 & 0.0828\\
		6 & 516 & 18603 & 18087 & 12935 & 15695 & 0.8241 & 0.7005 & 0.1236\\
		\hline
	\end{tabular}
	\captionsetup{font=footnotesize}
	\caption{Equity capital ($EC_i$), assets ($A_i$), liquid assets ($x_0^i$) and stocks ($S_i$) of six insurers from 2023 and the resulting portfolio processes calculated as $\pi^i_1 = 0.85 \cdot S_i/x_0^i$ and $\pi^i_2 = 0.15\cdot S_i/x_0^i$, as well as, the liabilities ($L_i$) given by $L_i=A_i-EC_i$.}
	\label{tab:equity_insurers}
\end{table}

We model the future liabilities $Z_i$ of insurer $i$ in two ways. First, motivated by the approximation of total claim amounts for non-life portfolios with gamma distributions~\parencite[Chapter~4]{wuthrich_non-life_2024}, we set $Z_i\sim\Gamma(k_i,s_i)$, where $\Gamma(k,s)$ denotes the gamma distribution with mean $ks$ and variance $ks^2$, i.e.,~$k$ is the shape and $s$ the scale parameter. We refer to this case as \textit{model 1}.

Note that the gamma distribution is light-tailed~\parencite[Chapter 3]{embrechts_modelling_1997}. In \textit{model~2}, we test the impact of more harmful (heavy-tailed) distributions by replacing the liability distributions for some insurers as follows: for insurers $2$ and $4$, we set $Z_i\sim \text{LN}(\mu_i,\sigma_i)$, where $\text{LN}(\mu,\sigma)$ denotes the lognormal distribution with log-mean $\mu$ and log-standard deviation $\sigma$. For insurers $1$ and $6$, we set $Z_i\sim \text{GPD}(\xi_i,\nu_i,\beta_i)$, where $\text{GPD}(\xi,\nu,\beta)$ denotes the GPD with shape $\xi$, location $\nu$ and scale $\beta$. For insurers $3$ and $5$ we apply the same gamma distributions as before.

The parameters of the liability distributions are estimated using the method of moments applied to the annual liabilities from 2019 to 2023 for the first four insurers. For the fifth insurer, we use liabilities from 2018 to 2022 due to an excessive drop in the liabilities from 2022 to 2023, which would distort the results of our analysis. For the sixth insurer, we estimate the variance based on the liabilities from 2018 to 2023 to ensure that the estimated shape parameter $\xi_6$ of the GPD is positive, i.e.,~we obtain a heavy-tailed behavior. For the method of moments, we use the 2023 liability value as mean, see again~\prettyref{tab:equity_insurers}. The estimated parameters are reported in~\prettyref{tab:liability_parameters_insurers}.

\begin{table}
	\begin{tabular}{r|rr|rr|rrr}
		\hline
		Insurer i & $k_i$ & $s_i$ & $\mu_i$ & $\sigma_i$ & $\xi_i$ & $\nu_i$ & $\beta_i$\\
		\hline
		1 & 282.93 & 1018.34 & - & - & \textbf{0.4185} & \textbf{281203} & \textbf{4021}\\
		2 & 729.01 & 78.09 & \textbf{10.94} & \textbf{0.1473} & - & - & - \\
		3 & \textbf{6996.42} & \textbf{5.77} & - & - & - & - & - \\
		4 & 64.04 & 1291.55 & \textbf{11.21} & \textbf{0.4722} & - & - & - \\
		5 & \textbf{1083.58} & \textbf{23.81} & - & - & - & - & - \\
		6 & 1076.11 & 16.81 & - & - & \textbf{0.1901} & \textbf{17653} & \textbf{352}\\
		\hline
	\end{tabular}
	\captionsetup{font=footnotesize}
	\caption{Parameters for the liability distributions in model 1 ($k_i$ and $s_i$) and model 2 (values highlighted in bold). Here, $\nu_i$ is given as $97.6\%$ of the liability value of insurer $i$ from~\prettyref{tab:equity_insurers}.}
	\label{tab:liability_parameters_insurers}
\end{table}

Even if the equity capital of insurer $i$ is given by $X_i = Y_i+(-Z_i)$, that is, the sum of two independent random variables, we cannot relate the (Multi-)PELVE values of the equity capitals and the (Multi-)PELVE values of the assets and the liabilities in general.\footnote{For instance, if the random variables are comonotone, then it is possible to apply~\cite[Theorem 1]{li_pelve_2023}, compare also with their Examples 1 and 2 which discusses the sum of a long position in a stock and a corresponding call option.} Hence, we estimate the distribution of $X_i$ via $1\,000\,000$ Monte-Carlo simulations. \prettyref{fig:boxplot} shows the corresponding boxplots for each insurer. In model $2$, insurer $1$ exhibits large negative outliers due to a GPD with $\xi_1=0.4185$, whereas insurer $6$'s GPD has a much smaller shape parameter $\xi_6=0.1901$ (lighter tail). For insurers $2$ and $4$, perturbed with lognormal distributions, model $2$ gives larger negative outliers and wider boxes. Especially by comparing insurers $4$ and $6$ we see that the distribution of insurer $4$ leads in model $2$ to much larger outliers. \prettyref{fig:hist_firms} illustrates the equity capital distributions for insurers $1$, $2$, $4$ and $6$ -- the ones for which the liability distributions differ between model $1$ and $2$. These histograms show that the equity capital distributions for insurers $1$, $2$ and $4$ admit a heavier left tail in model $2$ than in model $1$, while the left tail for insurer $6$ is only slightly heavier in model~$2$.

\begin{figure}
	\begin{subfigure}[b]{0.48\textwidth}
		\includegraphics[width=\linewidth]{./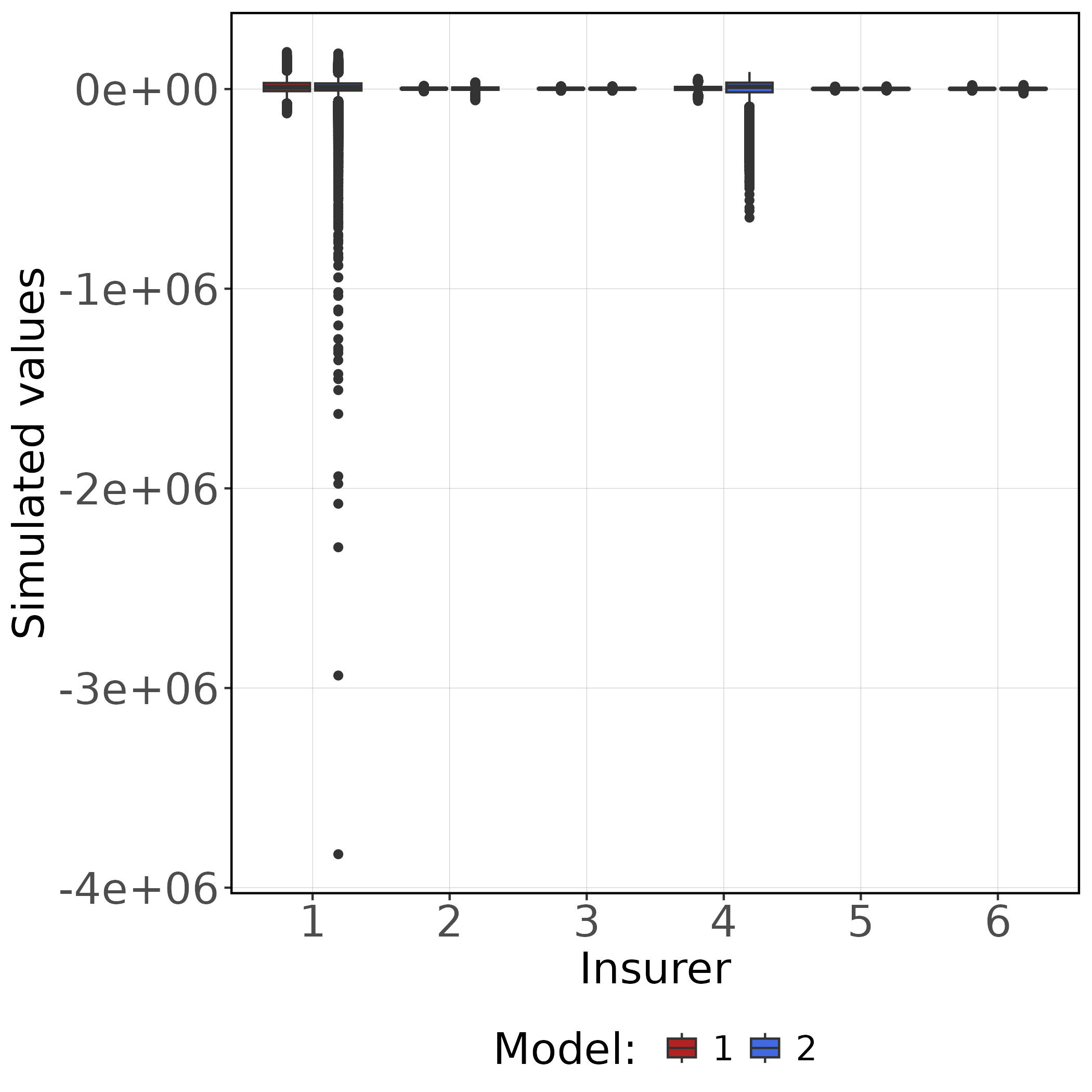}
	\end{subfigure}
	\hfill
	\begin{subfigure}[b]{0.48\textwidth}
		\includegraphics[width=\linewidth]{./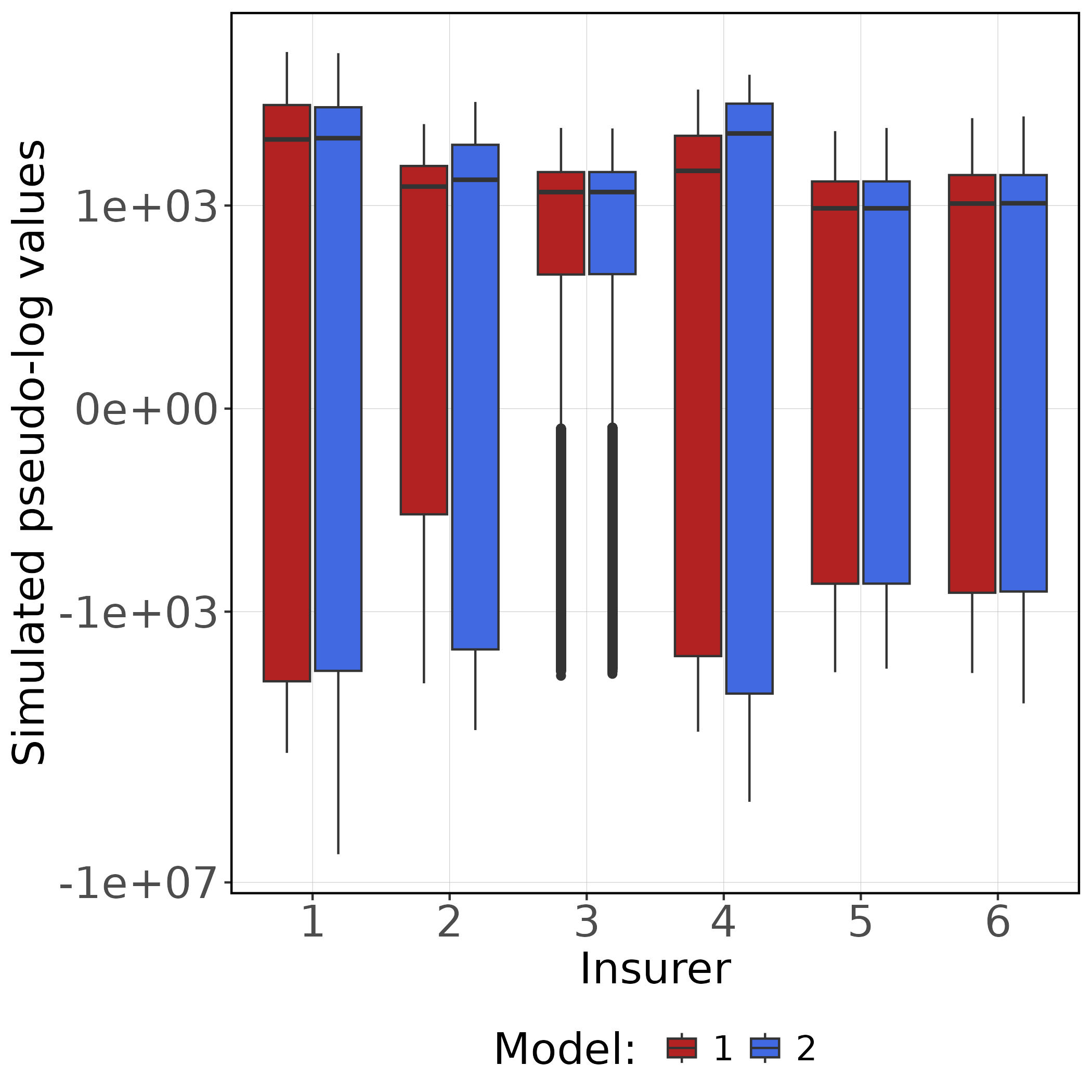}
	\end{subfigure}
	\captionsetup{font=footnotesize}
	\caption{Boxplot of simulated equity capitals of six insurers. The left-hand side are the simulated values and the right-hand side are the pseudo-log transformed ($f(x) = \text{sign}(x)\log_{10}(1+|x|)$) values.}
	\label{fig:boxplot}
\end{figure}

\begin{figure}
	\begin{subfigure}[b]{0.48\textwidth}
		\includegraphics[width=\linewidth]{./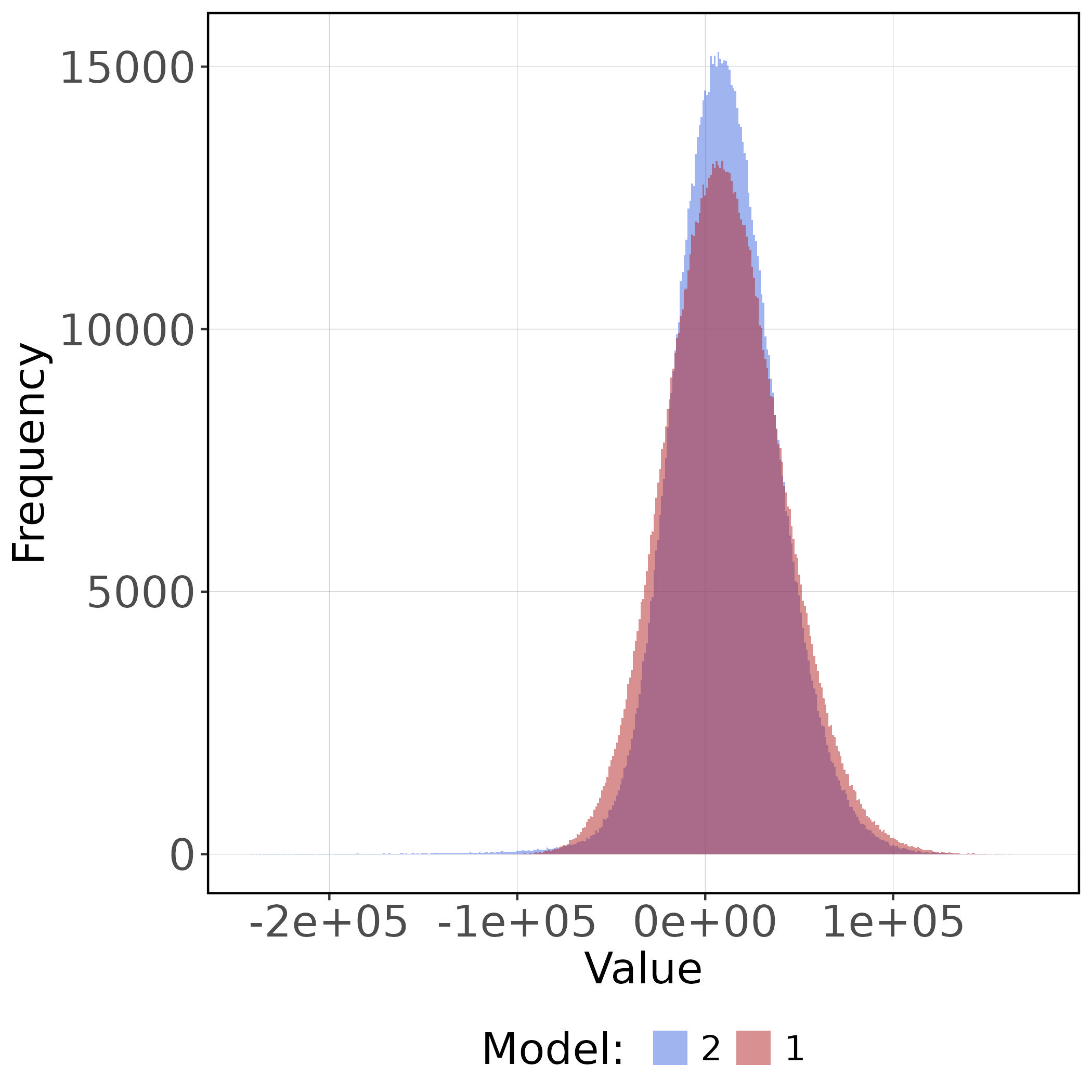}
		\caption{Insurer 1}
	\end{subfigure}
	\hfill
	\begin{subfigure}[b]{0.48\textwidth}
		\includegraphics[width=\linewidth]{./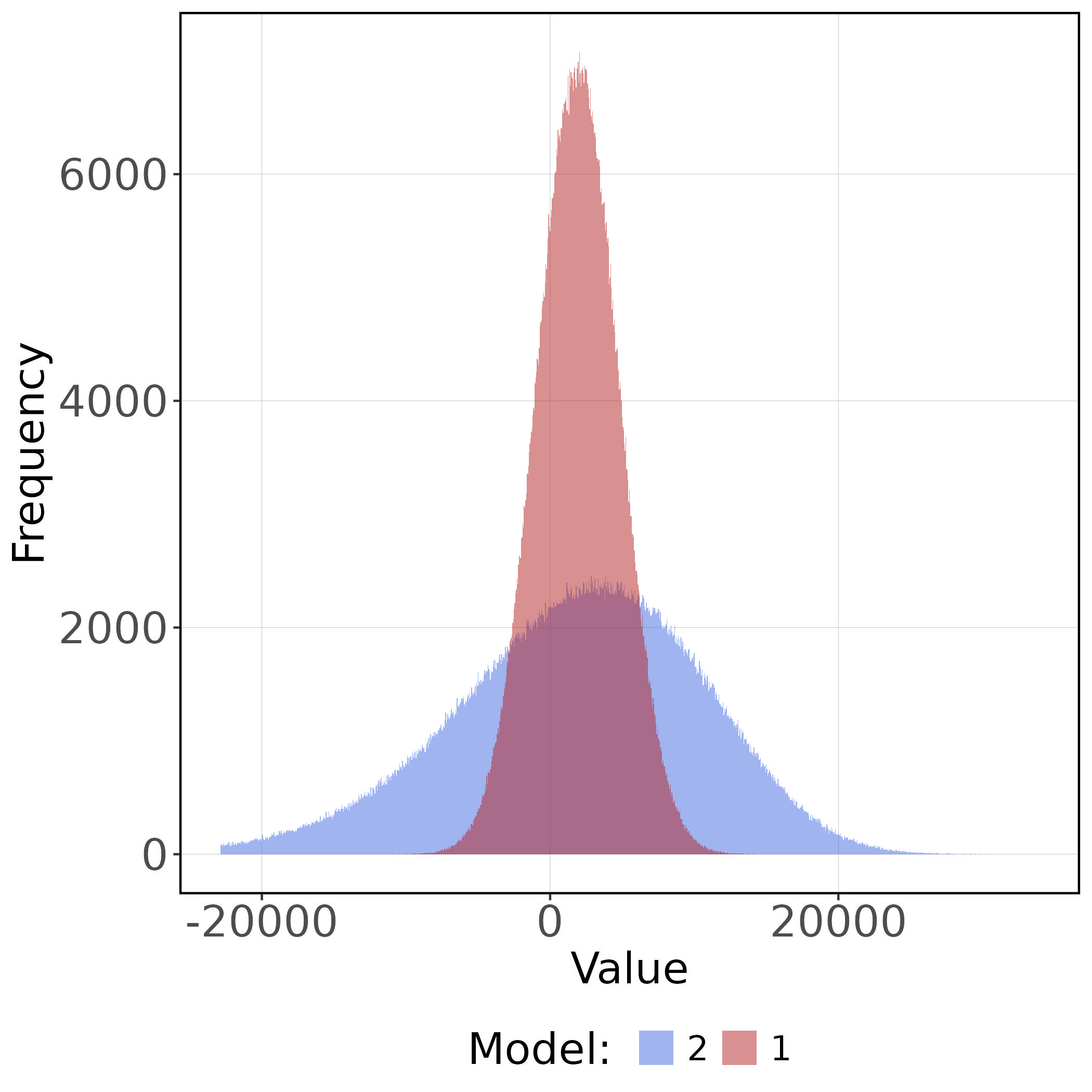}
		\caption{Insurer 2}
	\end{subfigure}
	\begin{subfigure}[b]{0.48\textwidth}
		\includegraphics[width=\linewidth]{./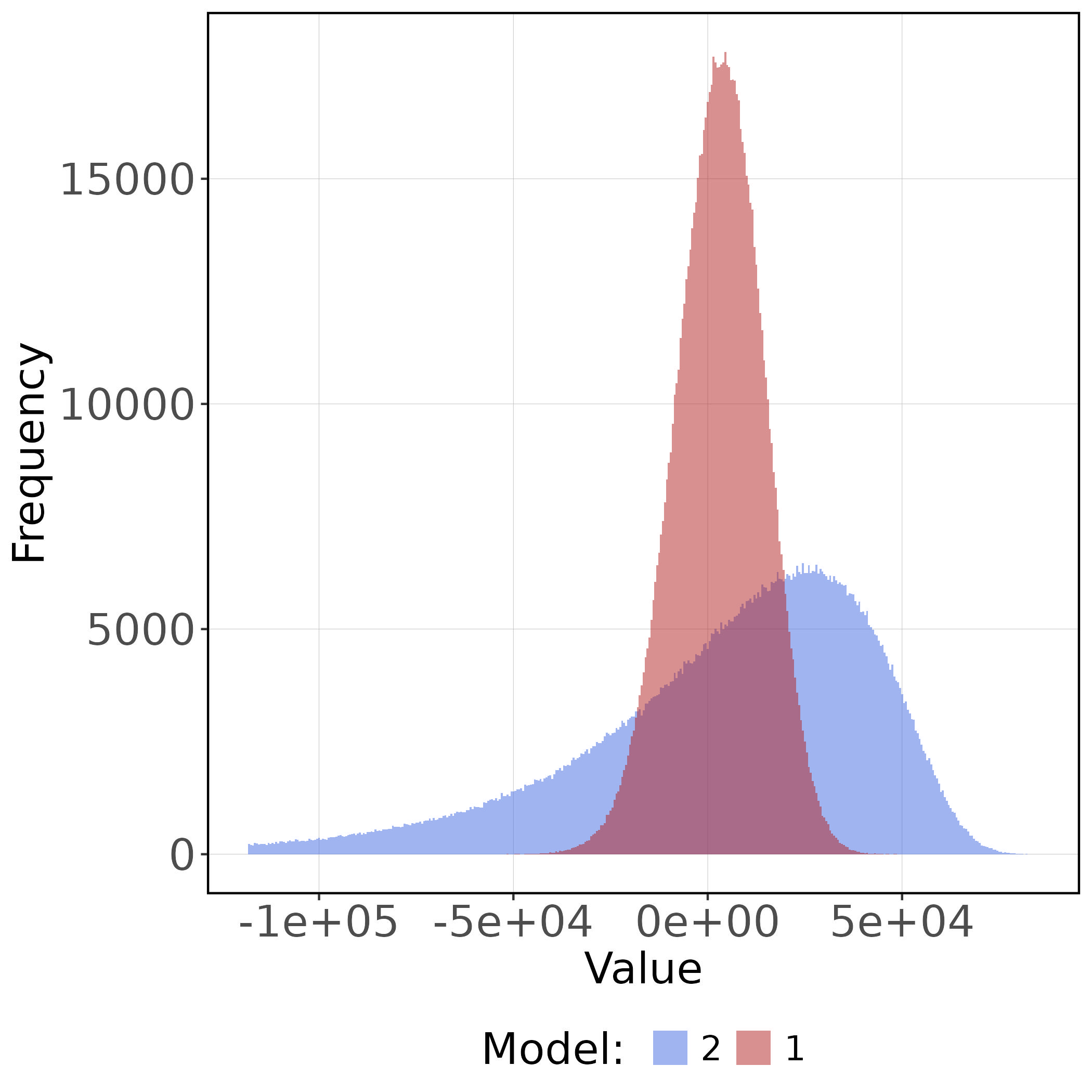}
		\caption{Insurer 4}
	\end{subfigure}
	\hfill
	\begin{subfigure}[b]{0.48\textwidth}
		\includegraphics[width=\linewidth]{./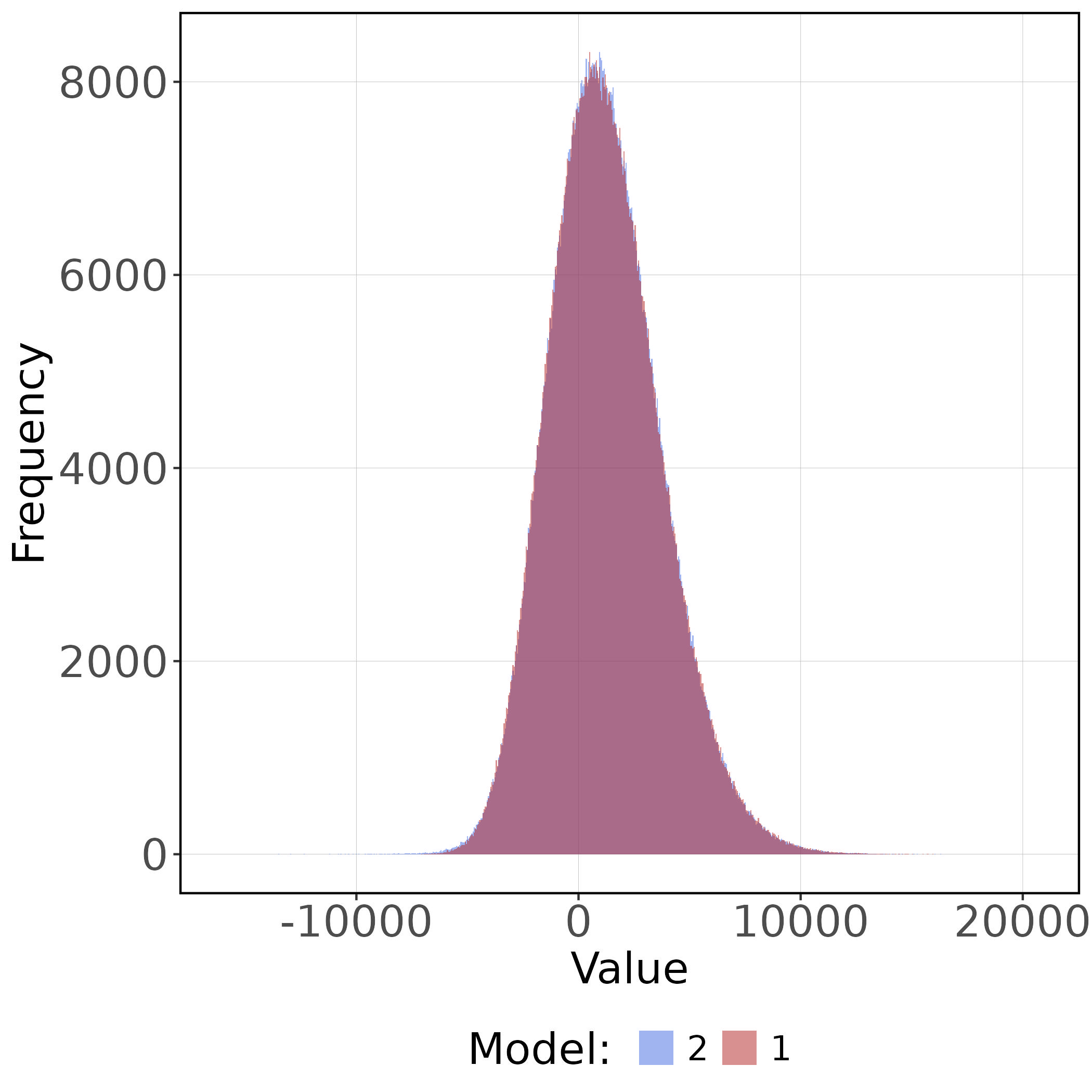}
		\caption{Insurer 6}
	\end{subfigure}
	\captionsetup{font=footnotesize}
	\caption{Histograms of the simulated equity capitals of insurers $1$, $2$, $4$ and $6$.}
	\label{fig:hist_firms}
\end{figure}

In addition, note that negative equity capital values would refer to bankruptcy of an insurer at time $1$. This is a usual way of modeling. For instance, the standard formula of Solvency II is motivated by a multivariate normal distribution, which means that negative equity capital values can occur with a non-negligible probability.

To analyse the situation of insurer $i$, we use its PELVE curve, defined as: $(0,1)\ni\lambda\mapsto \pelve{\lambda}(X_i)$. Note, a PELVE curve does not need to be monotone, see~\cite[Example 4]{assa_calibrating_2024}. Moreover, we present corresponding PELVE curves for our Multi-PELVE methods. By the boxplots in Figure~\ref{fig:boxplot} we see that insurers $1$ and $4$ do not manage their risks adequately in model $2$. On the other side, these insurers admit the largest equity capitals and asset volumes in Table~\ref{tab:liability_parameters_insurers} and therefore they have a strong influence on the market. To deal with these aspects, we test different possible weighting vectors $\omega$ for the A-PELVE and the MSE-PELVE. Of course, we test equal weights, i.e.,~$\omega_i=\frac{1}{6}$ for all $i$. A second variant is then to use weights based on each insurer's assets relative to the total assets of all insurers, i.e.,~$\omega_i = A_i / \left(\sum_{i=1}^{6}A_i\right)$. Also, weights are set to be inversely proportional to the asset volumes of the insurers. Indeed, with the help of $\tilde{\omega}_i = A_i / \left(\sum_{i=1}^{6}A_i\right)$ we set $\omega_i = (1/\tilde{\omega_i}) / (\sum_{i=1}^{6} 1 / \tilde{\omega}_i)$.  

In model 2, the latter choice prevents smaller insurers from being disadvantaged in favor of a single large insurer. For instance, in case of the MSE-PELVE the same change in the capital reserve due to a shift from VaR to ES can be more easily compensated by a large insurer than by a small one. The exact values of $\omega_i$ are given in~\prettyref{tab:optim_weights}. For simplicity, we also speak in case of unequal weights of weighted A-PELVE and weighted MSE-PELVE. For the Sys-PELVE, we set $g(x)=\max\{x,0\}$. However, this does not affect the results, as both the VaR and the optimal ES values are positive.

\begin{table}
	\begin{tabular}{c|rrrrrr}
		\hline
		Version  & 1 & 2 & 3 & 4 & 5 & 6 \\
		\hline
	V1 & 0.1667	& 0.1667 & 0.1667 & 0.1667 & 0.1667 & 0.1667\\
	V2 & 0.5608 & 0.1116 & 0.0794 & 0.1619 & 0.0505 & 0.0359\\
	V3 & 0.0231 & 0.1161 & 0.1633 & 0.0800 & 0.2565 & 0.3610 \\
		\hline
	\end{tabular}
	\captionsetup{font=footnotesize}
	\caption{Three versions (V1, V2, V3) of weighting vectors: V1 refers to equal weights (V1); V2 refers to weights $\omega_i = A_i/\left(\sum_{i=1}^{6} A_i\right)$; V3 refers to weights $\omega_i = (1/\tilde{\omega_i}) / (\sum_{i=1}^{6} 1 / \tilde{\omega}_i)$ with $\tilde{\omega}_i = A_i / \left(\sum_{i=1}^{6}A_i\right)$.}
	\label{tab:optim_weights}
\end{table}

In Figure~\ref{fig:single_pelve}, we present the PELVE curves of the individual insurers in models $1$ and $2$ with interpolation between grid points. Note that since the underlying distributions come from Monte-Carlo simulations,~\prettyref{cor:discontinuityPELVEcurves} implies that the curves are discontinuous. We see that in model $1$ the curves are close to each other, which indicates that also the Multi-PELVE methods should not differ significantly. Indeed, we justify this in Section~\ref{sec:model1} below. In contrast, the PELVE curves in model 2 differ significantly, which will give us significant differences in the Multi-PELVE versions. We analyse this meaningful situation in more detail in~\prettyref{sec:model2} below.

\begin{figure}
	\begin{subfigure}[b]{0.48\textwidth}
		\includegraphics[width=\linewidth]{./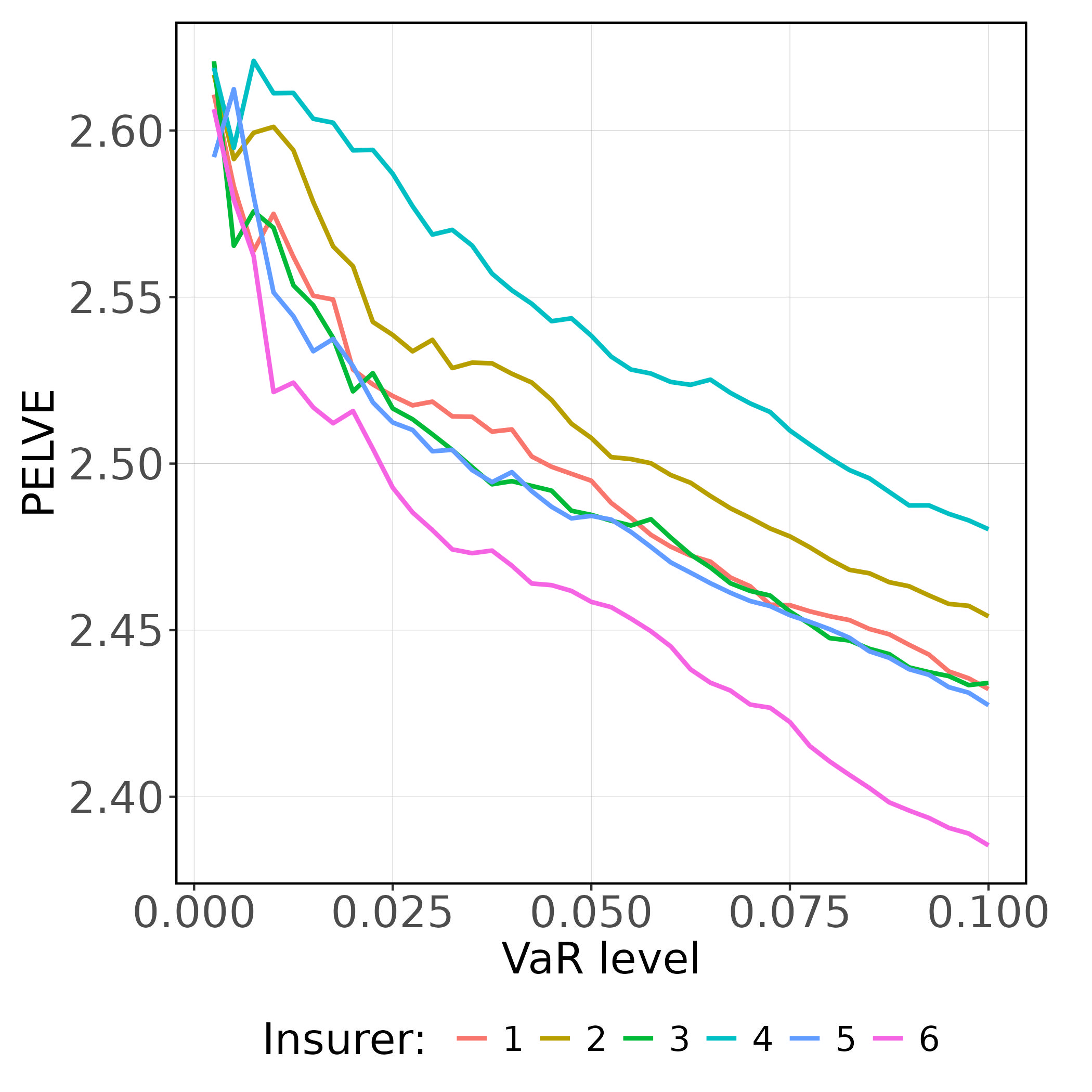}
		\caption{Model $1$}
	\end{subfigure}
	\hfill
	\begin{subfigure}[b]{0.48\textwidth}
		\includegraphics[width=\linewidth]{./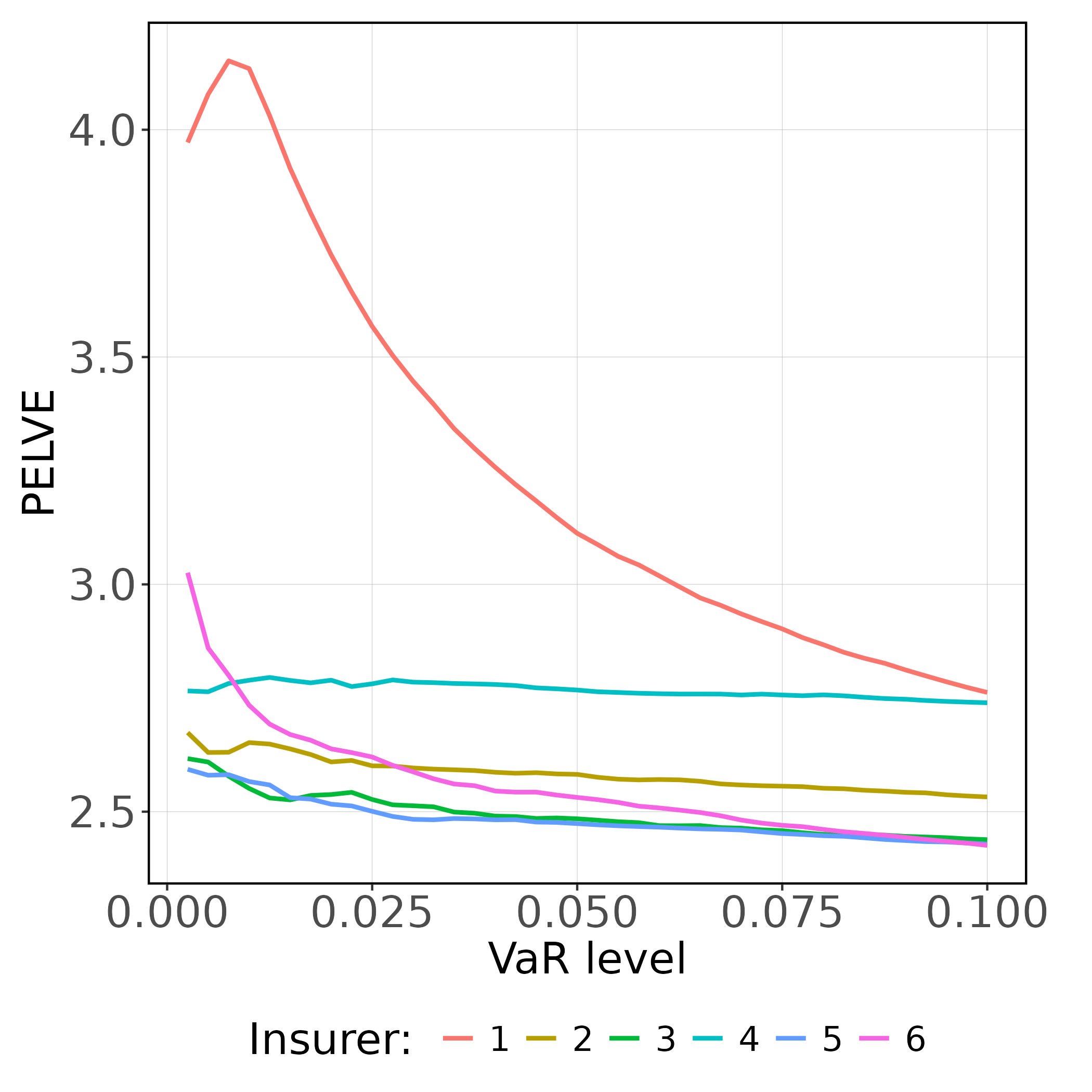}
		\caption{Model $2$}
	\end{subfigure}
	\captionsetup{font=footnotesize}
	\caption{PELVE curves of individual insurers for different VaR levels.}
	\label{fig:single_pelve}
\end{figure}

\subsection{Results: Model 1}\label{sec:model1}

All curves on the left-hand side (LHS) in Figure~\ref{fig:single_pelve} lie below $e\approx 2.718$, reflecting the light-tailed behavior of the underlying distributions, recall the discussion in~\prettyref{sec:heavyTails}. This is expected, as liabilities are modeled as gamma distributions, which are light-tailed. Furthermore, all PELVE curves admit the same shape, which looks roughly like a linear decrease, and they are close to each other. The largest difference, about $0.1$, occurs between insurers $4$ and~$6$. 

\begin{figure}
	\begin{subfigure}[b]{0.48\textwidth}
		\includegraphics[width=\linewidth]{./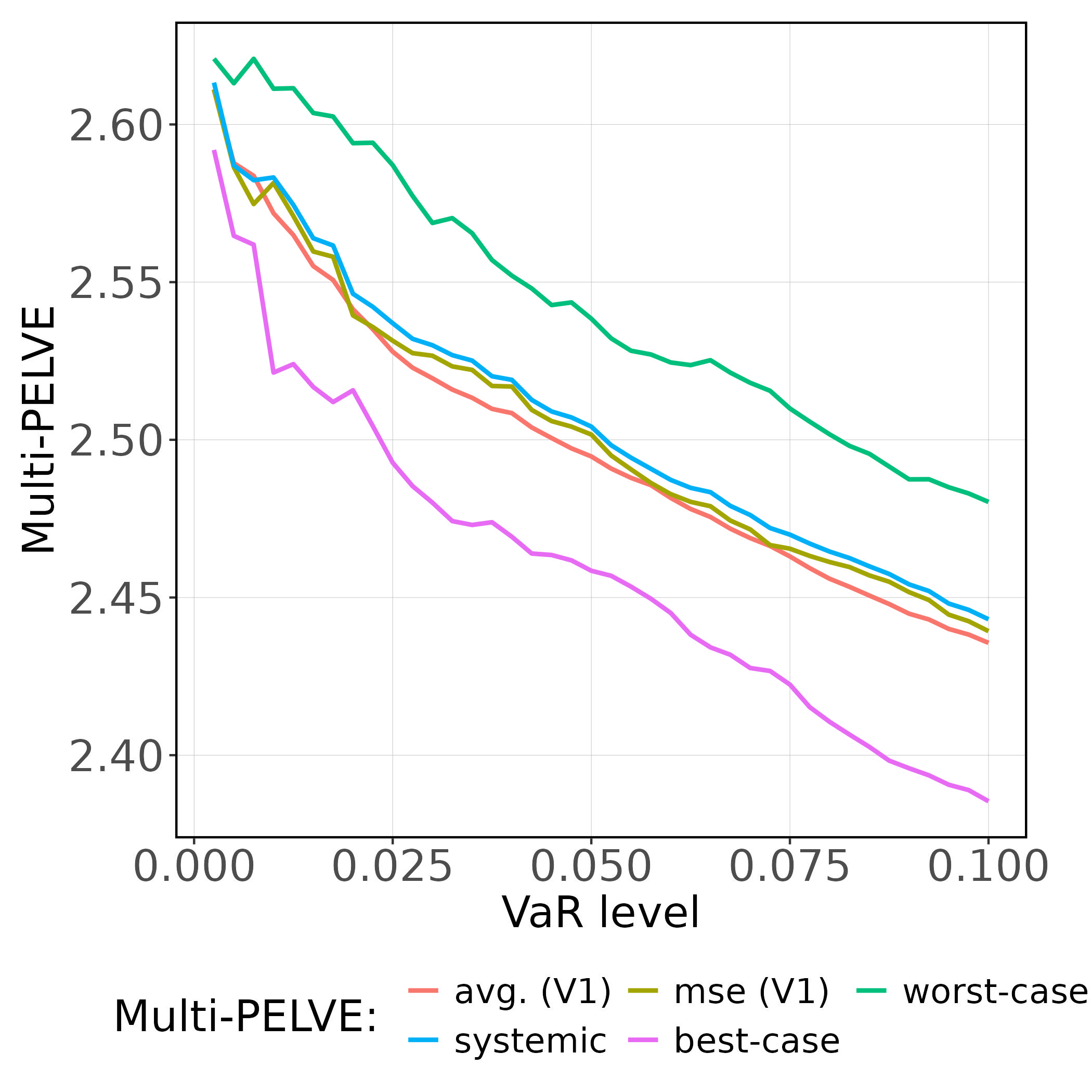}
	\end{subfigure}
	\hfill
	\begin{subfigure}[b]{0.48\textwidth}
		\includegraphics[width=\linewidth]{./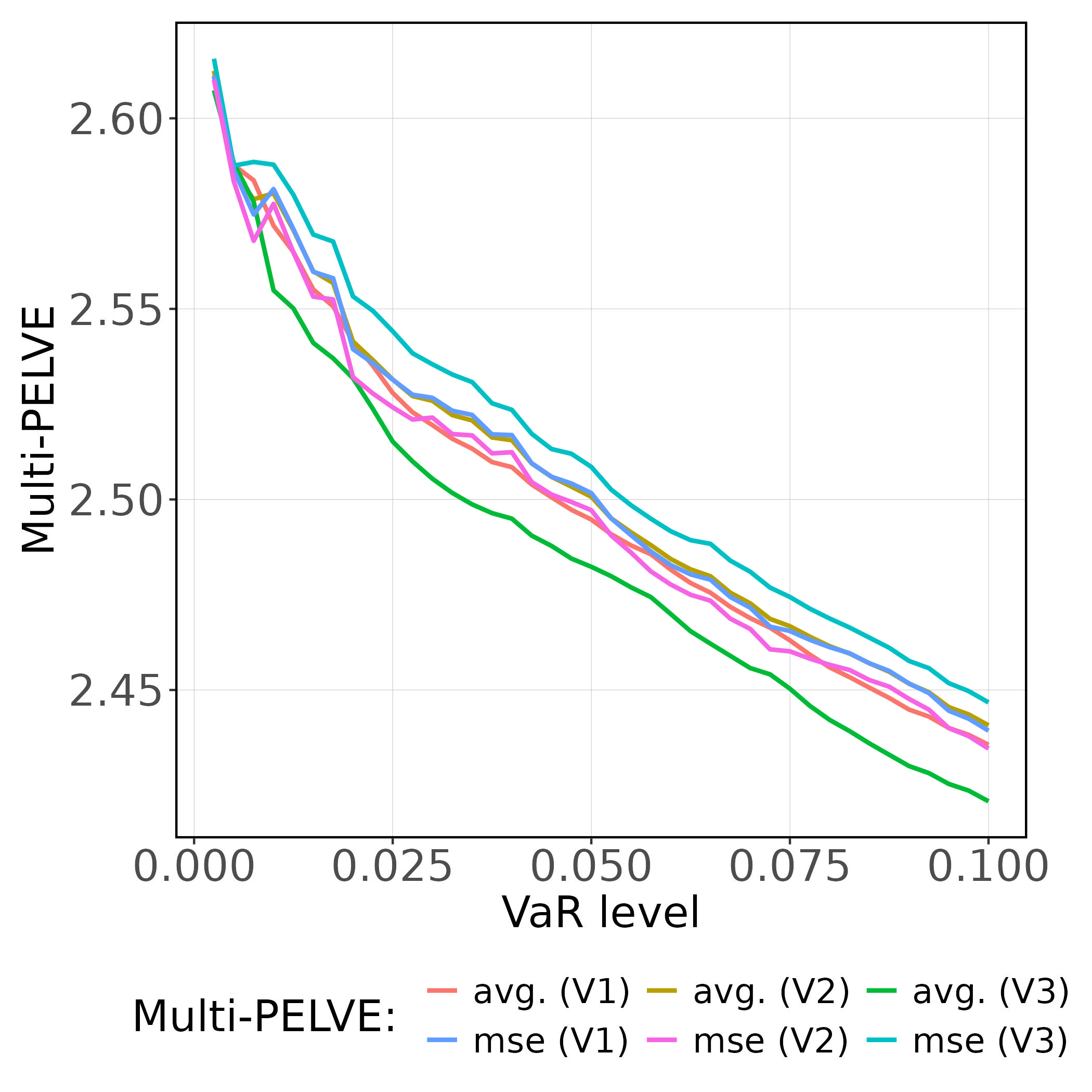}
	\end{subfigure}
	\captionsetup{font=footnotesize}
	\caption{Multi-PELVE methods for different VaR levels in model 1.}
	\label{fig:pelve_lognormalGamma}
\end{figure}

Also the Multi-PELVE curves are close to each other, see~\prettyref{fig:pelve_lognormalGamma}. Besides the two extremes of WC- and BC-PELVE, the differences between the other Multi-PELVE methods are difficult to discern. Next, in~\prettyref{fig:comparing_diffs_in_sum_es_var}, we plot, for each $\Pi^{V}_\lambda\in\{\apelve{\lambda,\omega},\wcpelve{\lambda},\msepelve{\lambda,\omega},\syspelve{\lambda,g}\}$,
\begin{align*}
	\lambda\mapsto\sum_{i=1}^{n}\Big(\expectedShortfall{\Pi^{V}_\lambda(\mathbf{X})\lambda}{X_i}-\valueAtRisk{\lambda}{X_i}\Big),\quad \lambda\mapsto\sum_{i=1}^{n}\Big|\expectedShortfall{\Pi^{V}_\lambda(\mathbf{X})\lambda}{X_i}-\valueAtRisk{\lambda}{X_i}\Big|,
\end{align*}
where these maps are displayed on the LHS and the right-hand side (RHS) of~\prettyref{fig:comparing_diffs_in_sum_es_var}, respectively. They are statistics of the overall change in the capital reserves of the six insurers when VaR is replaced by ES. On the LHS, it is not a surprise that the Sys-PELVE yields a constant zero line, as it satisfies $$\sum_{i=1}^{n}\left(\max\left\{\expectedShortfall{\syspelve{\lambda,g}(\mathbf{X})\lambda}{X_i},0\right\}-\max\{\valueAtRisk{\lambda}{X_i},0\}\right) = 0,$$ 
and thus, if all VaR and ES values are positive, the plotted key figure is zero. 
In contrast, on the RHS, the MSE-PELVE and the (weighted) A-PELVE outperform the Sys-PELVE, indicating that, on the LHS, positive and negative changes in individual insurers' capital reserves offset each other. 

\begin{figure}
	\begin{subfigure}[b]{0.48\textwidth}
		\includegraphics[width=\linewidth]{./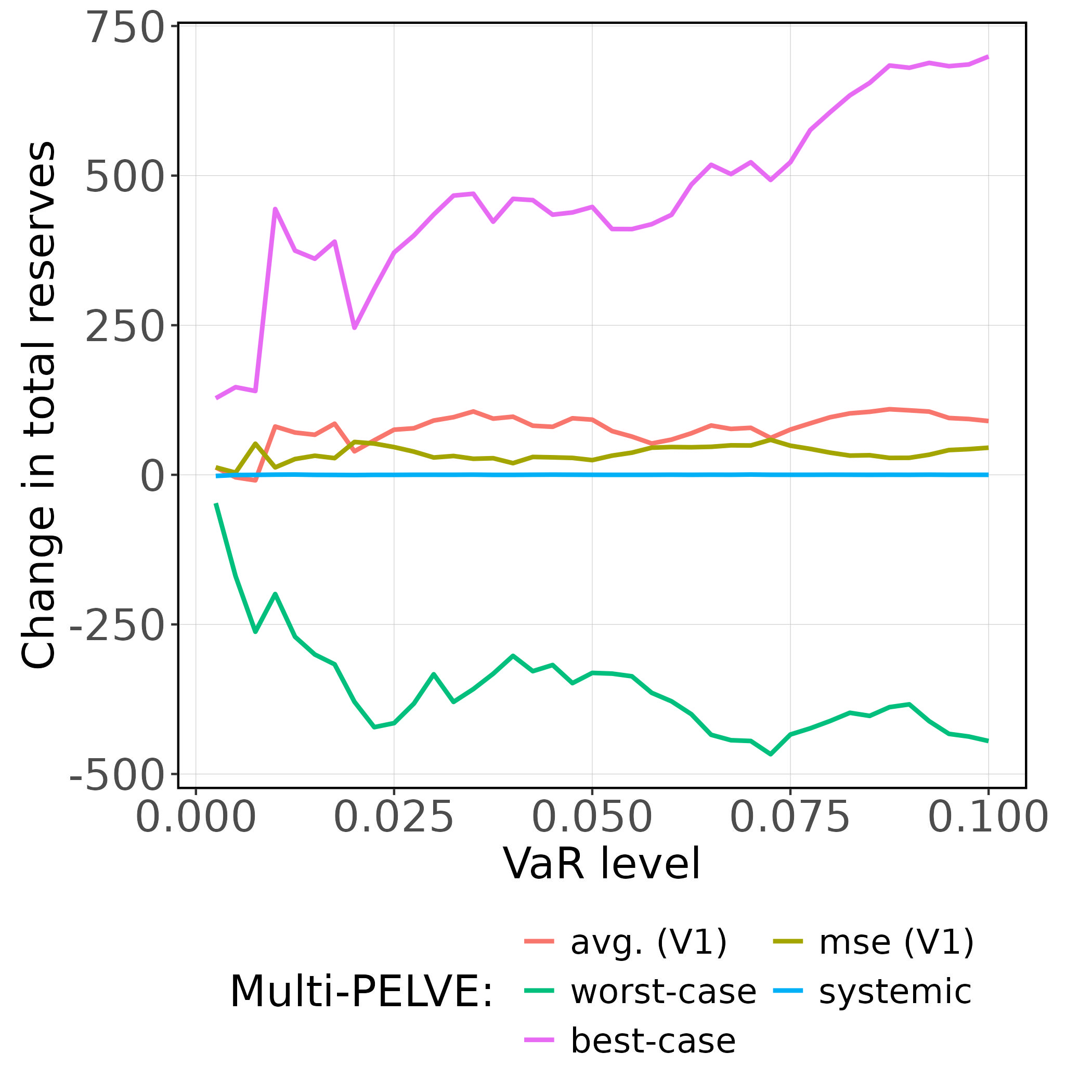}
		\caption{Total change in sum of reserves}
	\end{subfigure}
	\hfill
	\begin{subfigure}[b]{0.48\textwidth}
		\includegraphics[width=\linewidth]{./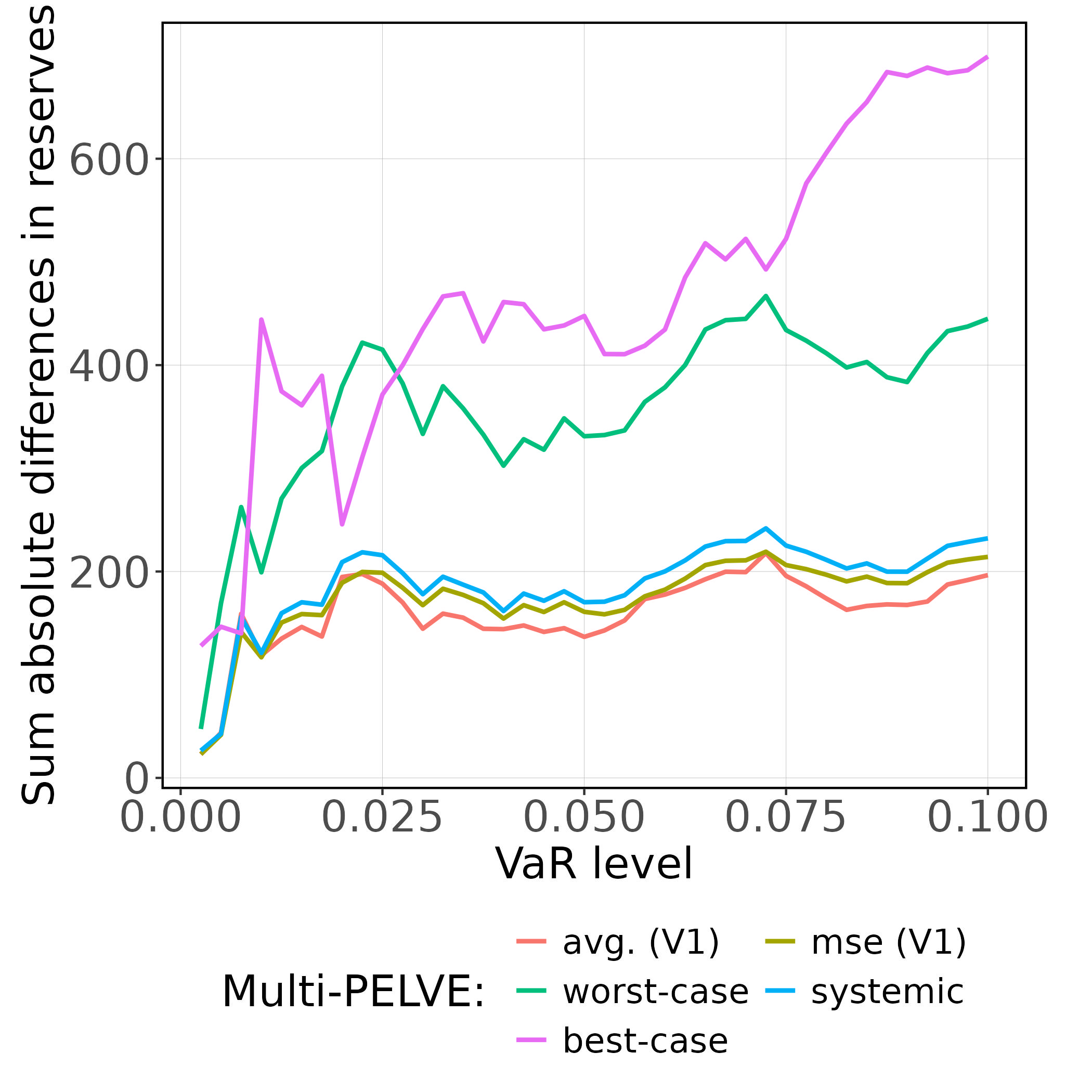}
		\caption{Sum of absolute differences in reserves}
	\end{subfigure}
	\captionsetup{font=footnotesize}
	\caption{Total change in sum of capital reserves and sum of absolute differences in the capital reserves of all insurers under model 1.}
	\label{fig:comparing_diffs_in_sum_es_var}
\end{figure}

Also note that differences among the A-PELVE, MSE-PELVE, and Sys-PELVE are negligible relative to the total equity of $6335$ (cf.~\prettyref{tab:equity_insurers}); for instance, the changes for the A-PELVE and the MSE-PELVE are always below $150$ on the LHS ($2.4\%$ relative to $6335$), and the maximum difference on the RHS is about $50$ ($0.8\%$ relative to $6335$). Furthermore, the RHS demonstrates that the difference between ES and VaR for each insurer goes to zero when the VaR level goes to zero. This fact is interesting, because it does not hold in model 2, cf.~Figure~\ref{fig:comparing_diffs_in_sum_es_var_perturbed} (B) below.  For brevity, detailed explanations are shifted to Appendix~\ref{sec:gammaDistributions}.

\prettyref{fig:capitalChanges_gamma} illustrates the relative change in each insurer's capital reserves using the WC-PELVE (LHS) or MSE-PELVE (RHS), i.e.,~for $\Pi_\lambda^{V}\in\{\wcpelve{\lambda},\msepelve{\lambda,\omega}\}$ and insurer $i\in[6]$ it is the map 
\begin{align*}
	\lambda\mapsto \frac{\expectedShortfall{\Pi_\lambda^{V}(\mathbf{X})\lambda}{X_i}-\valueAtRisk{\lambda}{X_i}}{\valueAtRisk{\lambda}{X_i}}. 
\end{align*}

For the WC-PELVE, the new ES level is given by the PELVE curve of insurer $4$, since it is the maximum among all, cf.~\prettyref{fig:pelve_lognormalGamma}. This results in relative reductions of the capital reserves for the other insurers; for instance, for insurer $6$ we observe reductions below $-2\%$. Allowing all insurers to reduce reserves may threaten the market stability and conflict with regulatory interests. In contrast, the other Multi-PELVE methods -- beside BC-PELVE -- account for all six insurers, see e.g.,~the MSE-PELVE on the RHS in Figure~\ref{fig:capitalChanges_gamma}.

The MSE-PELVE closely follows the PELVE curve of insurer $1$, so the relative changes for insurer~$1$ are near zero.  With its large equity capital and asset amount, insurer $1$ acts as a market leader. However, 
small relative changes for the market leader under the MSE-PELVE are not a common feature, as we will see for model 2 in the next subsection. \newline
\begin{figure}
	\begin{subfigure}[b]{0.48\textwidth}
		\includegraphics[width=\linewidth]{./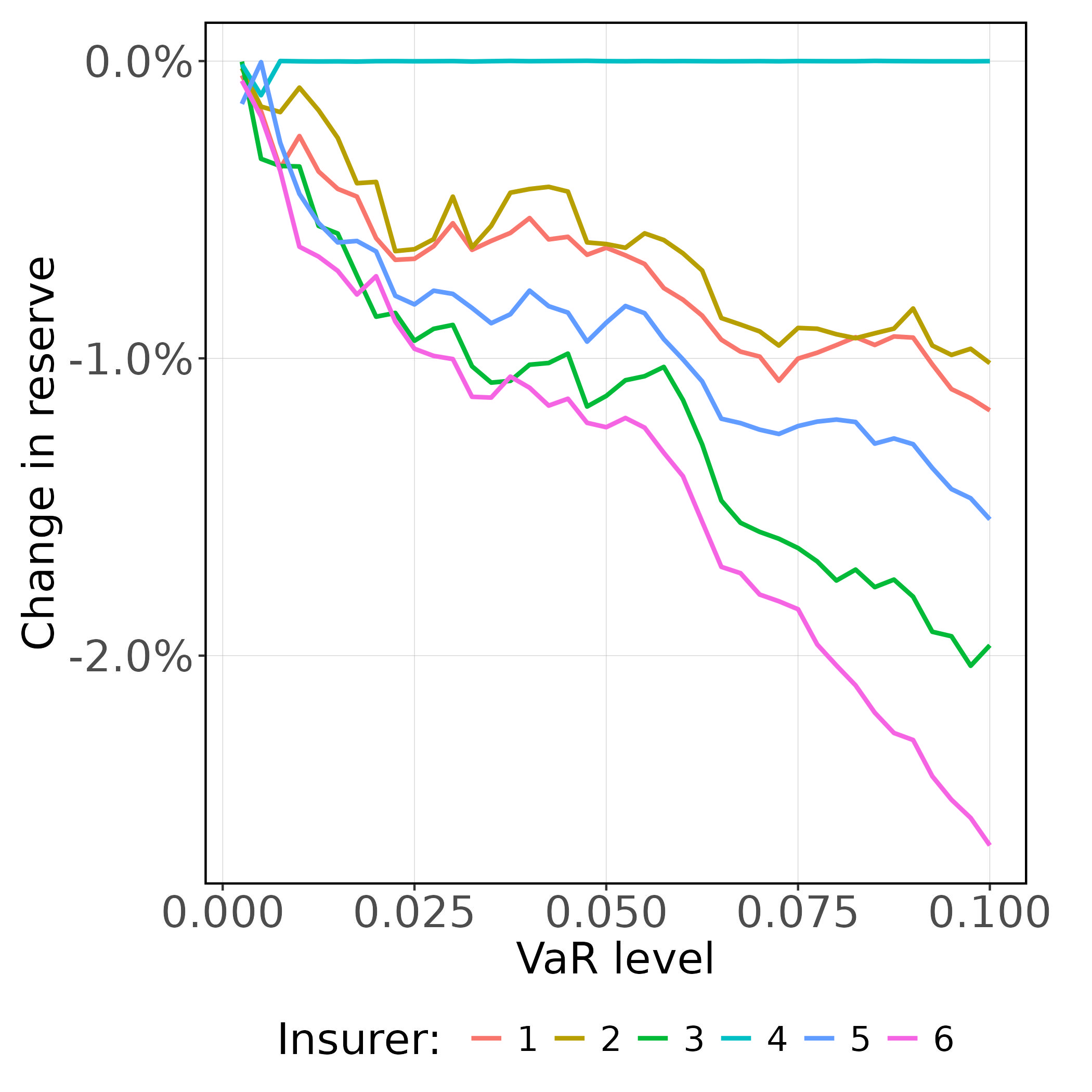}
		\caption{WC-PELVE}
	\end{subfigure}
	\hfill
	\begin{subfigure}[b]{0.48\textwidth}
		\includegraphics[width=\linewidth]{./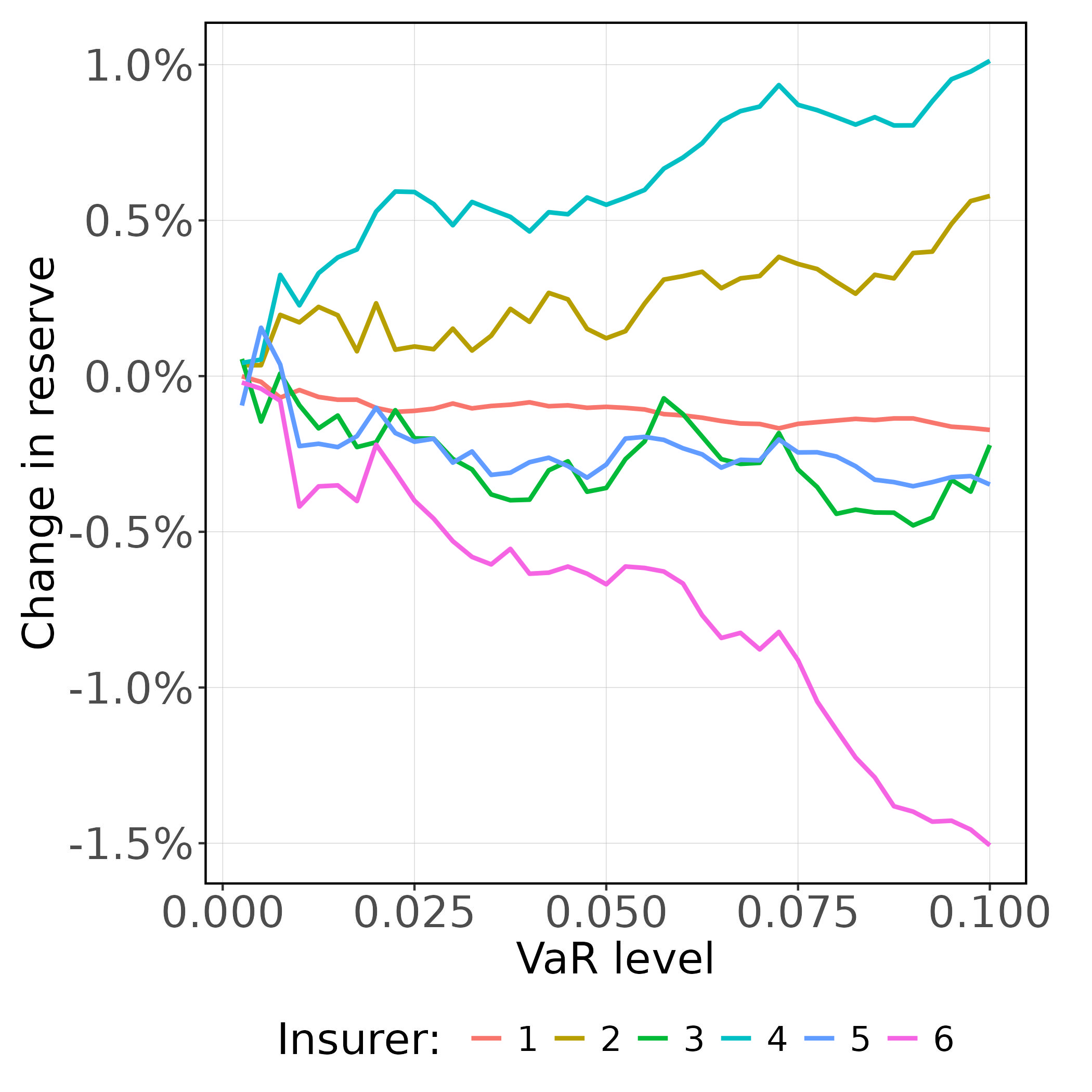}
		\caption{MSE-PELVE (V1)}
	\end{subfigure}
	\captionsetup{font=footnotesize}
	\caption{Changes in the capital reserve of the insurers under model 1, if they use the new ES level stemming from the WC-PELVE or the MSE-PELVE (V1).}
	\label{fig:capitalChanges_gamma}
\end{figure}

\noindent\fbox{\parbox{\textwidth}{\textit{Before we continue with model 2, let us conclude:} For model 1, where all insurers' distributions are of the same type (shifted lognormal minus gamma), all Multi-PELVE methods -- except the extreme ones, namely WC- and BC-PELVE -- yield similar results. This is favorable for the regulator, as the choice of the Multi-PELVE method is practically inconsequential.}}

\subsection{Results: Model 2}\label{sec:model2}

Recall that on the RHS in~\prettyref{fig:single_pelve}, we illustrate the PELVE curves for model 2. In contrast to model 1, the PELVE curves of insurers $1$, $2$, $4$ and $6$ admit significantly larger values. In particular, the GPD distribution of insurer $1$ leads to PELVE values exceeding $4$ at small levels. 

\begin{figure}
	\begin{subfigure}[b]{0.48\textwidth}
		\includegraphics[width=\linewidth]{./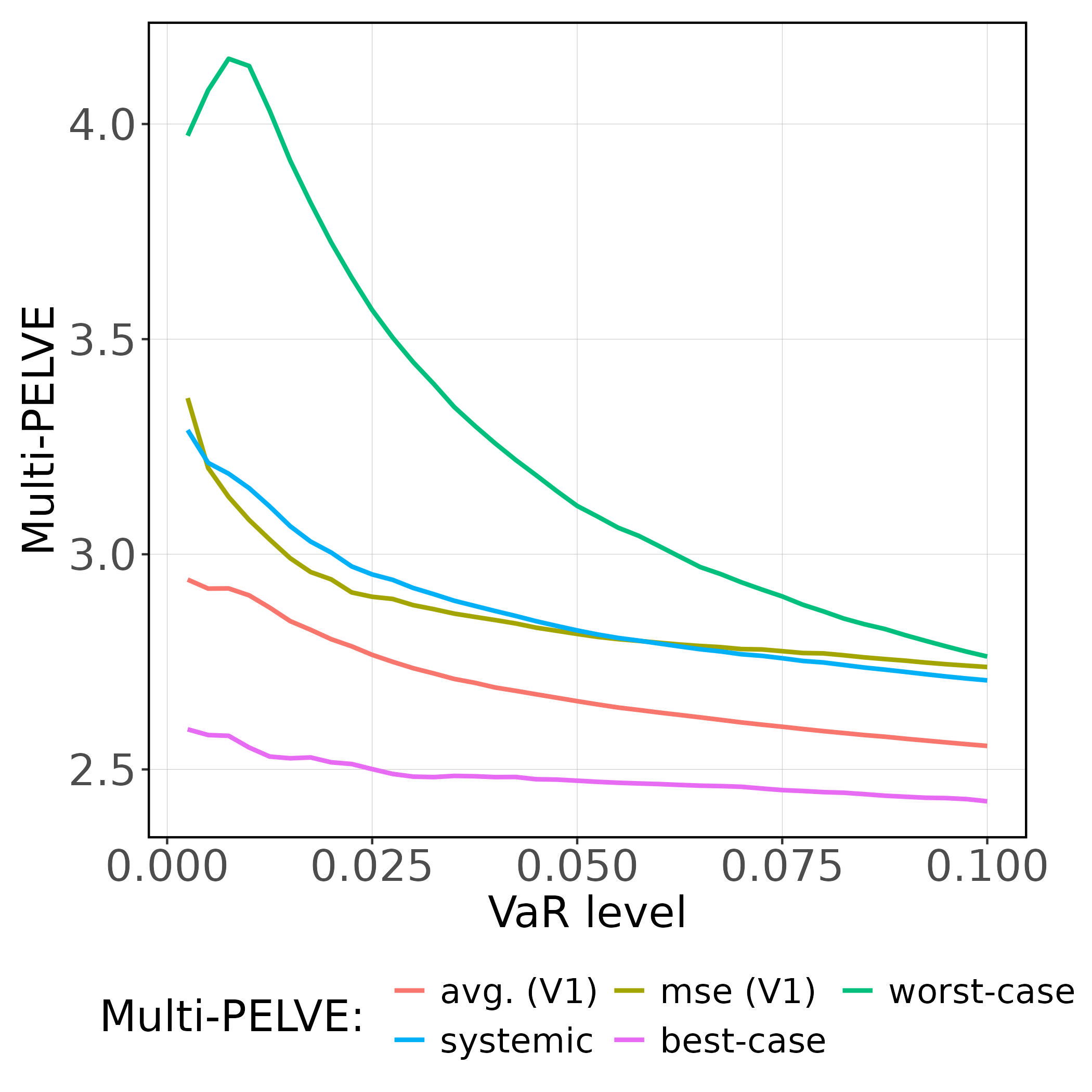}
	\end{subfigure}
	\hfill
	\begin{subfigure}[b]{0.48\textwidth}
		\includegraphics[width=\linewidth]{./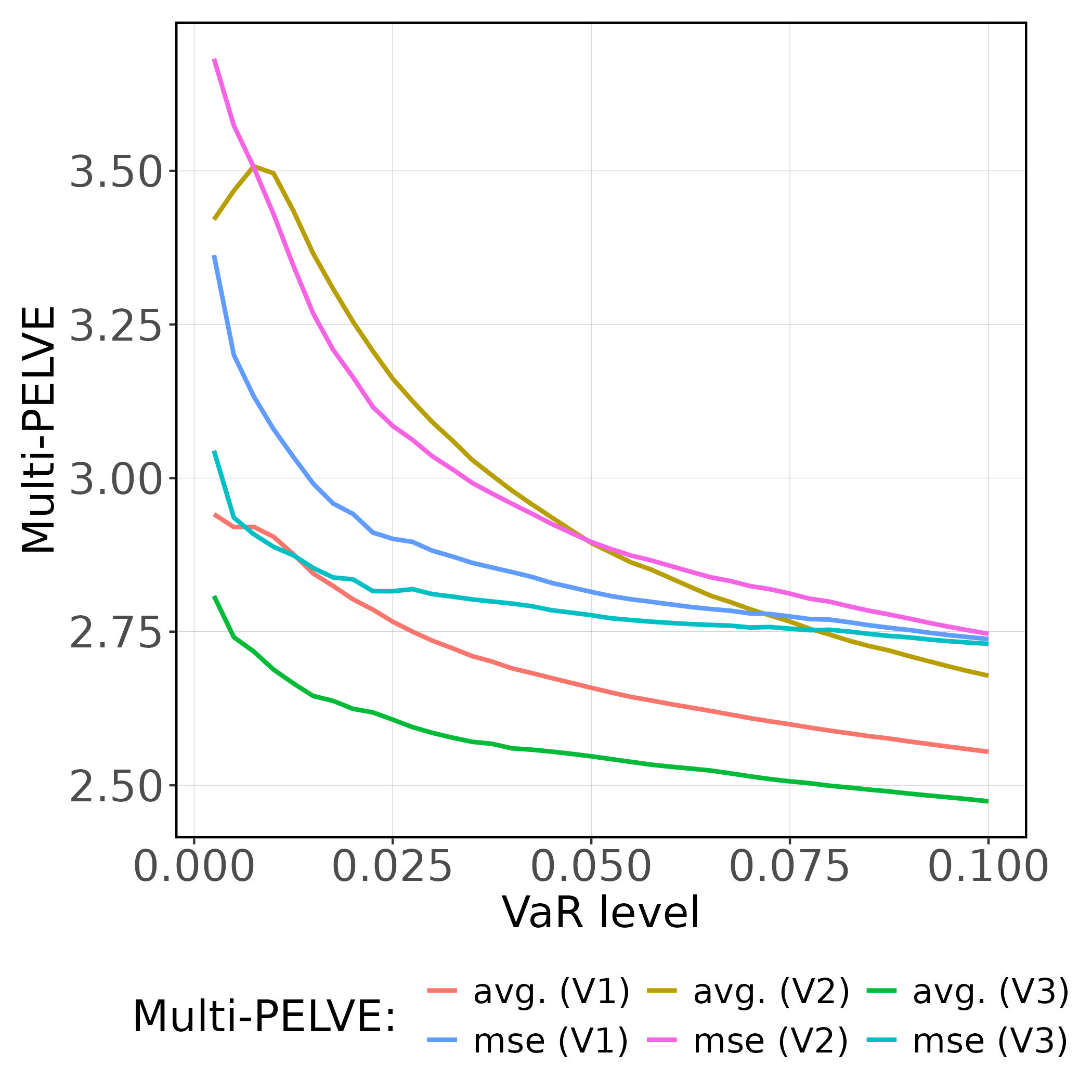}
	\end{subfigure}
	\captionsetup{font=footnotesize}
	\caption{Multi-PELVE methods for different VaR levels in model 2.}
	\label{fig:pelve_perturbed}
\end{figure}

The values of our Multi-PELVE methods differ substantially from those for model $1$ (cf.~\prettyref{fig:pelve_perturbed}) making the choice of the method important: (1) On the LHS of~\prettyref{fig:pelve_perturbed}, except for the WC-PELVE, the MSE-PELVE and Sys-PELVE admit the largest values. Both are significantly larger than the A-PELVE, which is itself larger than the BC-PELVE; (2) The RHS shows that A-PELVE and MSE-PELVE for (V2) are quite large and behave similarly. By increasing the VaR level, the MSE-PELVE (V2) is close to the other two MSE-PELVE versions (V1) and (V3); (3) Sys-PELVE and MSE-PELVE (V1) are comparable.

We elaborate on the reasons for these effects. For (1), note that the MSE-PELVE and the Sys-PELVE focus on sums of (transformed) differences in ES and VaR values of the individual insurers. These (non-robust) statistics are sensitive to large differences between ES and VaR for individual insurers and therefore the focus on the extreme insurer $1$ is stronger than for the A-PELVE, which simply averages new ES-levels. Regarding (2), the A-PELVE (V1) is smaller than its weighted version (V3) due to insurer $1$'s influence as a market leader. Furthermore, by increasing the VaR level, the influence of the extreme tail in the calculation of ES and VaR is reduced (shifting the focus to the body of the distribution). Therefore, the MSE-PELVE versions are close to each other. This does not hold for the A-PELVE versions, because the single PELVE curves for VaR levels are not close to each other, cf.~\prettyref{fig:single_pelve} (B).

For (3), the similarity between the Sys-PELVE and MSE-PELVE (V1) is surprising, given their apparently different methodologies. However, given that ES and VaR values are positive, the Sys-PELVE is based on the sum of ES values minus VaR values, i.e.,~it seeks the minimal $c$ subject to $\frac{1}{6}\sum_{i=1}^{6} \Big(\expectedShortfall{c\lambda}{X_i}-\valueAtRisk{\lambda}{X_i}\Big)\leq 0$, where the left side of this inequality is small when the ES-VaR difference for insurer $1$ is small. On the other hand, the MSE-PELVE is also based on (squared) ES-VaR differences. Furthermore, both Sys- and MSE-PELVE (V1) use equal weights, namely $\frac{1}{6}$, so the importance of reducing the ES-VaR difference for an individual insurer is the same, explaining their similarity.

An interesting question is about the dependence of the results on the correlation structure of the equity capital distributions. In our setting, we can control this by the proportions invested in the common and the idiosyncratic stock, cf.~with the (idiosyncratic) factor of $0.15$ in the formula $\pi_2^i = 0.15 \cdot S_i/x_0^i$ from Table~\ref{tab:equity_insurers}. On the LHS in~\prettyref{fig:pelve_diversification}, we plot the Multi-PELVE methods for a fixed VaR level of $1\%$ and different idiosyncratic factors. The curves only show a slight decreasing behavior. On the RHS we present the absolute changes in the capital reserves, which also admit a decreasing behavior, i.e.,~we obtain the smallest values when equity distributions are independent. Nevertheless, one should be careful, as different log-standard deviations of the common and idiosyncratic stocks may also lead to the opposite effect, i.e.,~increasing curves. Furthermore, for a new Sys-PELVE that takes the dependence structure into account, we refer the reader to Section~\ref{sec:beyondRegulatoryChanges} below.

\begin{figure}
	\begin{subfigure}[b]{0.48\textwidth}
		\includegraphics[width=\linewidth]{./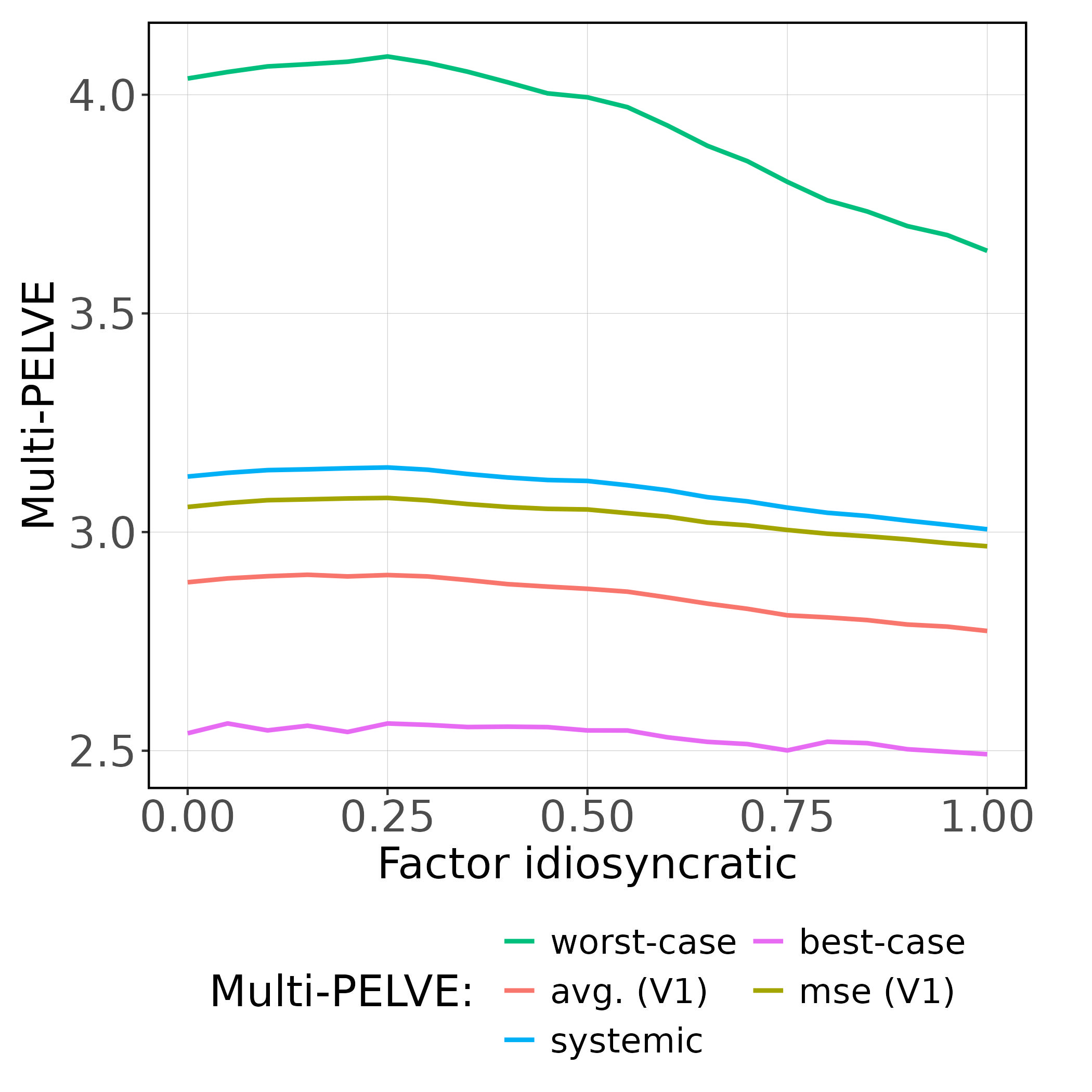}
	\end{subfigure}
	\hfill
	\begin{subfigure}[b]{0.48\textwidth}
		\includegraphics[width=\linewidth]{./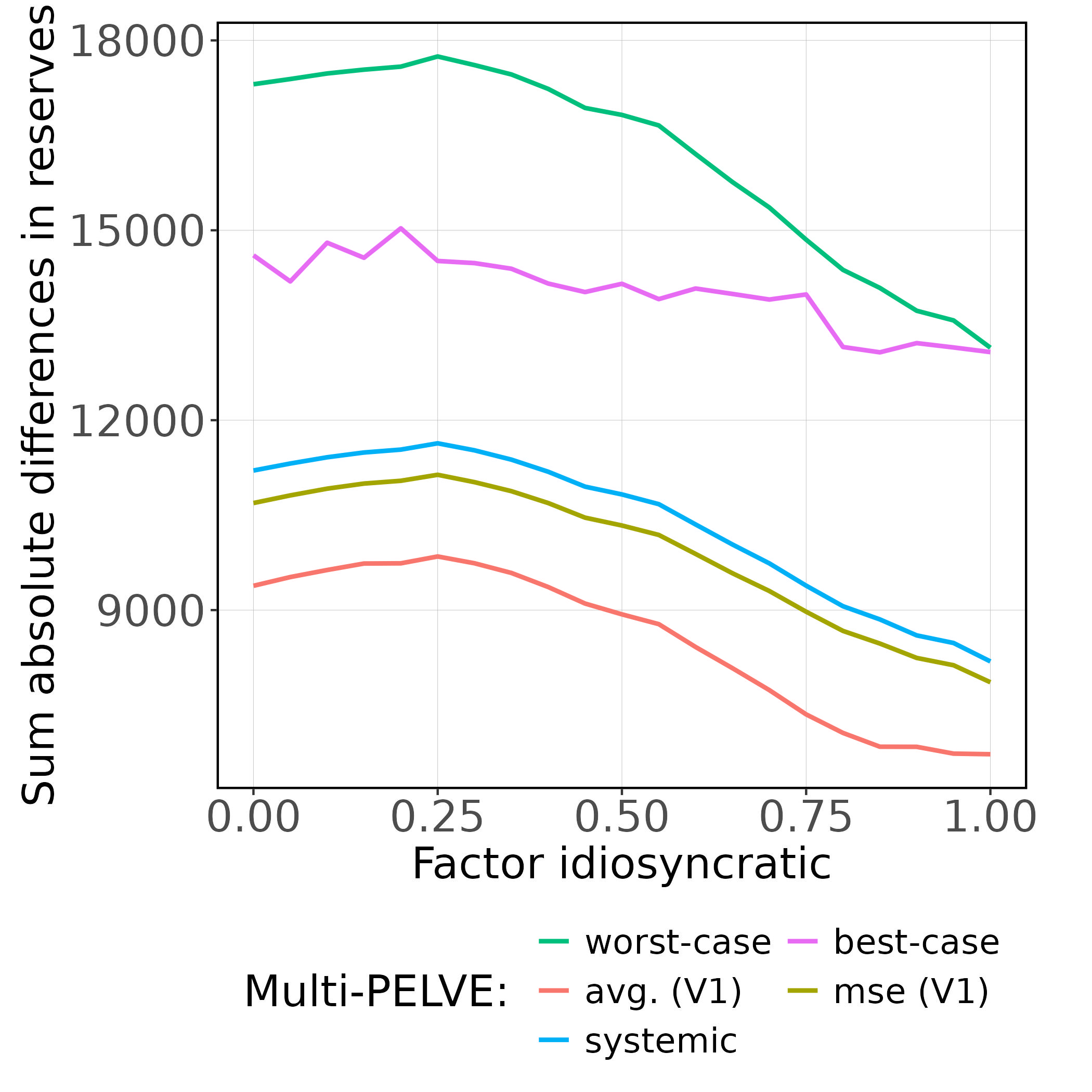}
	\end{subfigure}
	\captionsetup{font=footnotesize}
	\caption{Multi-PELVE methods for different idiosyncratic factors (LHS) and sum of absolute differences in capital reserves (RHS) in model 2. The applied VaR level is $1\%$.}
	\label{fig:pelve_diversification}
\end{figure}

Next, with the help of~\prettyref{fig:comparing_diffs_in_sum_es_var_perturbed}, we analyse the overall change in capital reserves. As for model~1, the Sys-PELVE is optimal for the total change in the sum of reserves (LHS). By plot (A), the A-PELVE requires a substantial increase in the overall capital reserves, meaning that all insurers together must provide a large amount of additional capital immediately. The MSE-PELVE is again close to the Sys-PELVE. Regarding plot (C), for levels above $5\%$, the remaining MSE-PELVE methods are close to each other; below $5\%$, the Sys- and MSE-PELVE (V1) outperform the other A- and MSE-PELVE versions in this criterion.

So far, the Sys-PELVE and the MSE-PELVE (V1) seem to be a good choice. The plots (B) and~(D) on the RHS in~\prettyref{fig:comparing_diffs_in_sum_es_var_perturbed} confirm this choice in terms of absolute changes in the capital reserves. In addition, also the MSE-PELVE (V3) leads to small values. Surprisingly, at small levels, the A-PELVE (V1) yields the smallest values. Compared to the LHS, where the MSE-PELVE (V1) is near zero -- indicating offsetting increases and decreases in insurers' reserves -- the A-PELVE (V1) results in smaller overall capital flows, with a tendency toward large reserve increases for some insurers, cf.~\prettyref{fig:capitalChanges_perturbed}~(A) below. However, beyond the $2.5\%$ level on the RHS of~\prettyref{fig:comparing_diffs_in_sum_es_var_perturbed}, the A-PELVE (V1) remains nearly constant. At these levels, both LHS and RHS show a clear preference for the Sys- and the MSE-PELVE (V1 and V3).

\begin{figure}
	\begin{subfigure}[b]{0.48\textwidth}
		\includegraphics[width=\linewidth]{./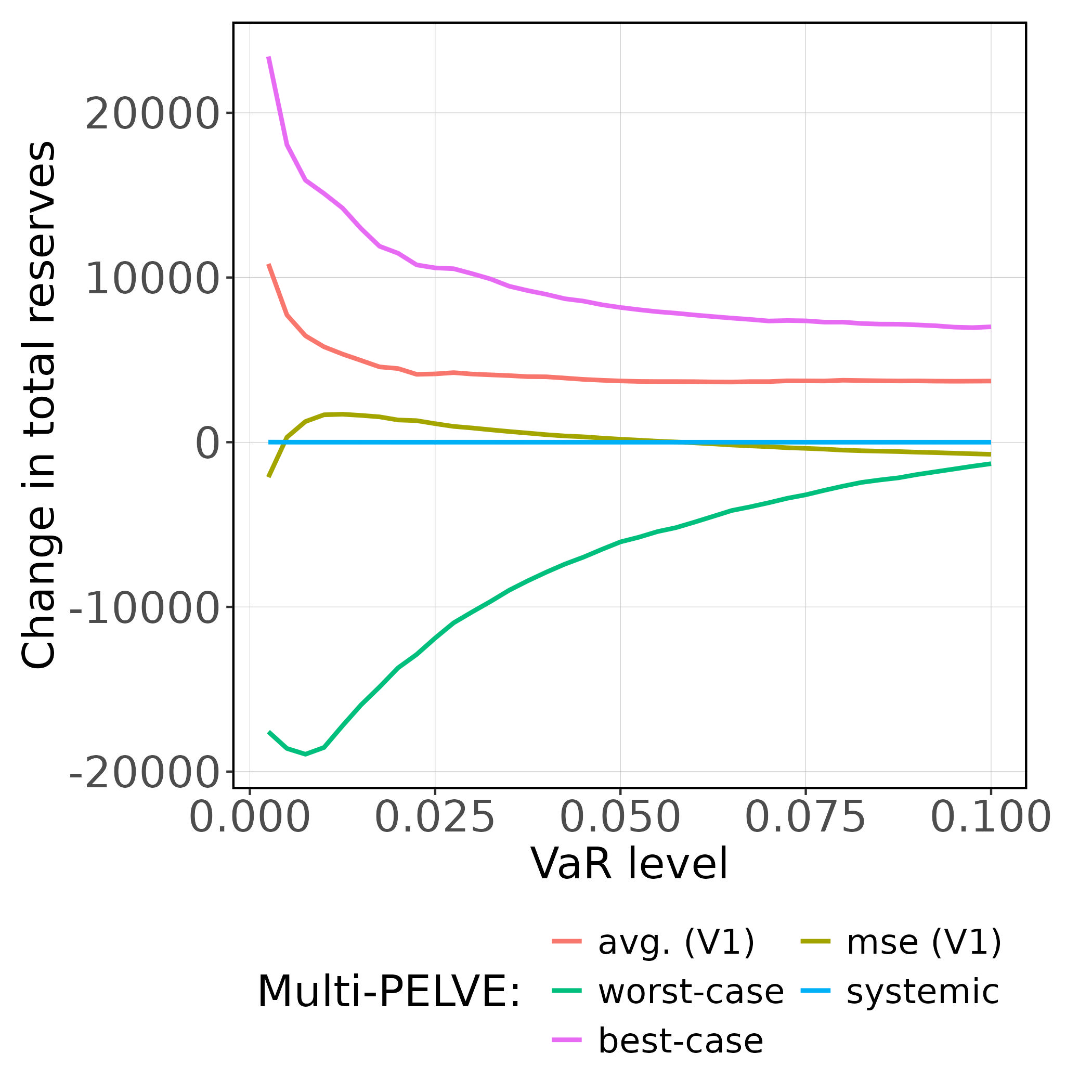}
		\caption{Total change in sum of reserves}
	\end{subfigure}
	\hfill
	\begin{subfigure}[b]{0.48\textwidth}
		\includegraphics[width=\linewidth]{./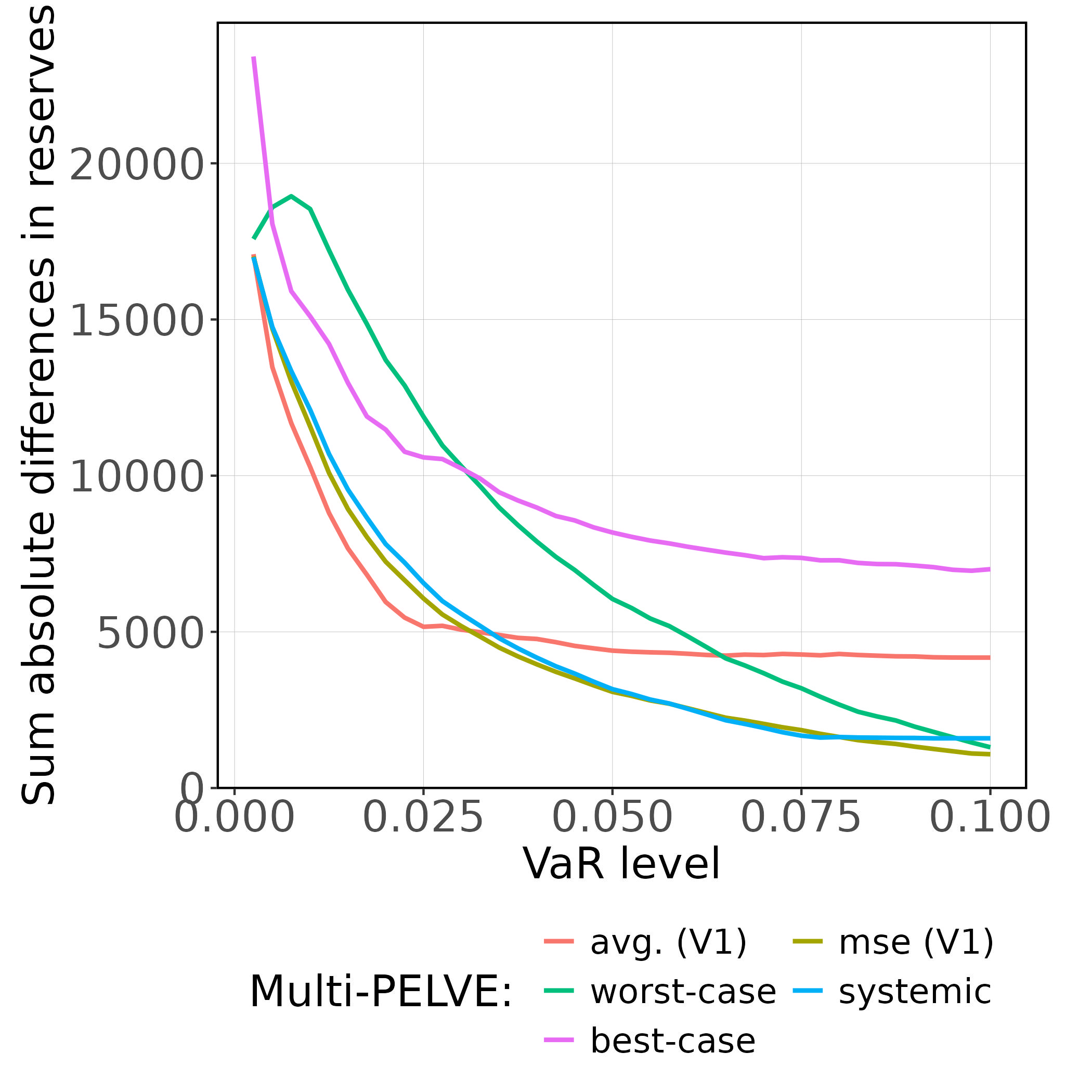}
		\caption{Sum of absolute differences in reserves}
	\end{subfigure}
	\begin{subfigure}[b]{0.48\textwidth}
		\includegraphics[width=\linewidth]{./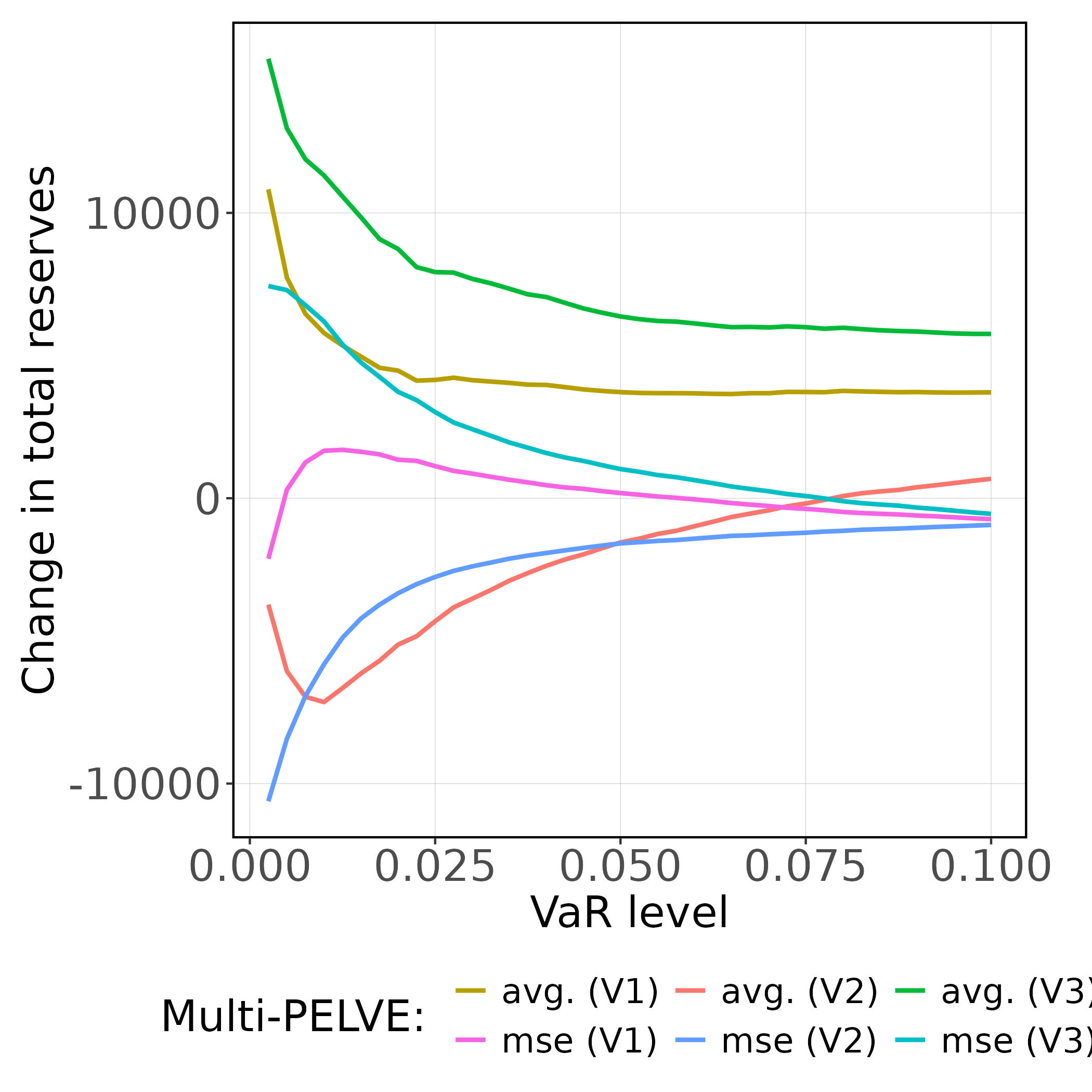}
		\caption{Comparing weighting vectors (V1, V2, V3)}
	\end{subfigure}
	\hfill
	\begin{subfigure}[b]{0.48\textwidth}
		\includegraphics[width=\linewidth]{./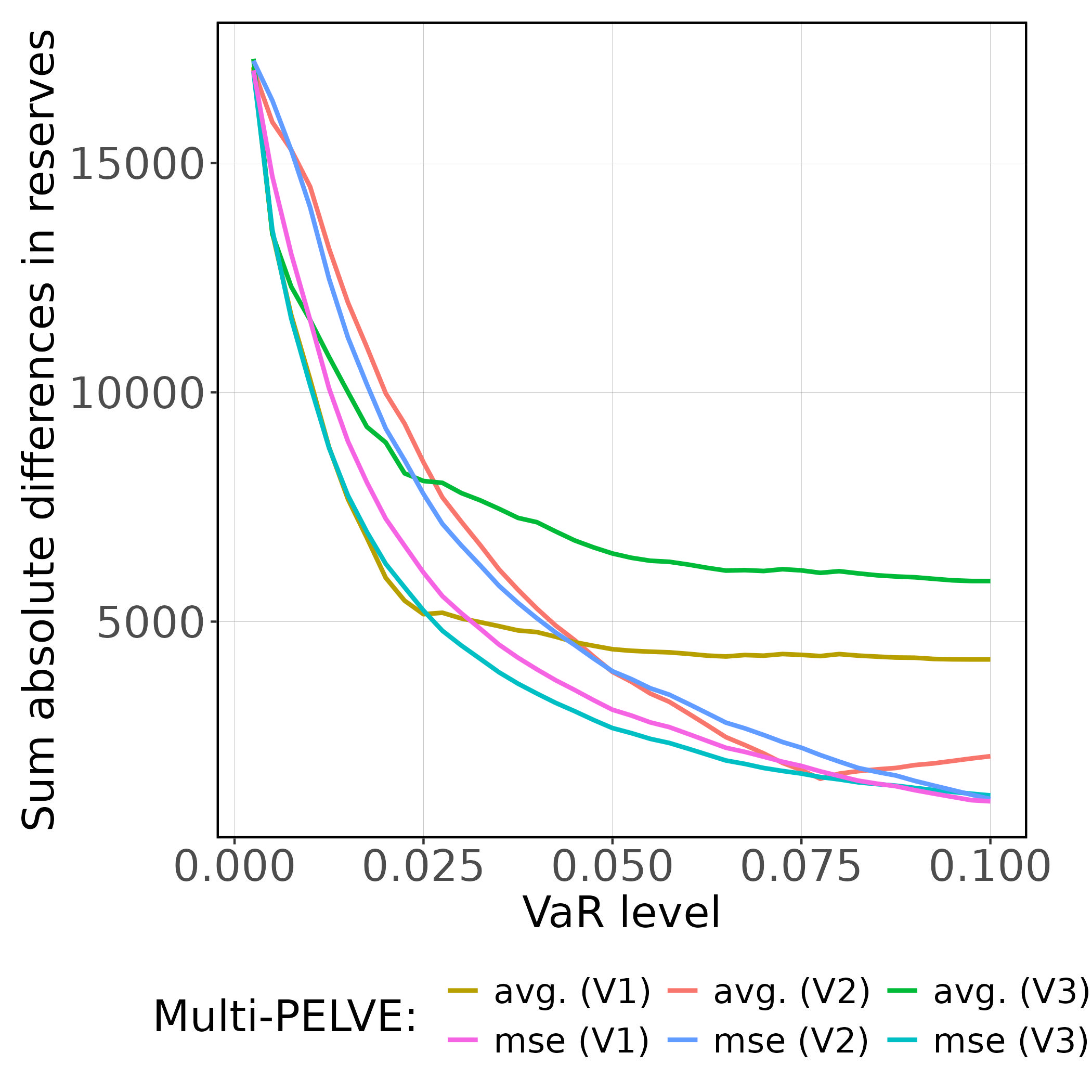}
		\caption{Comparing weighting vectors (V1, V2, V3)}
	\end{subfigure}
	\captionsetup{font=footnotesize}
	\caption{Total change in sum of capital reserves and sum of absolute differences in the capital reserves of all insurers under model 2.}
	\label{fig:comparing_diffs_in_sum_es_var_perturbed}
\end{figure}

Regarding the surprising A-PELVE graph, consider the relative changes in insurers' reserves;~\prettyref{fig:capitalChanges_perturbed}. In plot (A) -- the A-PELVE (V1) case -- the decrease for insurer $1$ is compensated by an increase for insurer $4$. Comparing the actual equity values from~\prettyref{tab:equity_insurers}, insurer $4$ has the second-largest equity ($19\%$ market share) and insurer $1$ has a market participation of $40\%$. Hence, the decrease in the curve of insurer $1$ is partly compensated by the increase in the curve of insurer $4$. 

Furthermore, the curves for insurers $2$, $3$, $5$ and $6$ in plots (A) and (E) are closer to zero than in plots (B), (C), (D) and (F). Hence, for these insurers, the A-PELVE (V1) and (V3) lead to smaller relative changes in the capital reserve than one of the other Multi-PELVE methods. This is of particular importance for VaR levels which are not close to zero, because in such cases the MSE-PELVE versions lead to significantly larger changes in the reserves for insurers $2$, $3$, $5$ and $6$. Only for small levels, the MSE-PELVE (V3) tends to be closer to the A-PELVE (V1). The largest relative deviations at small levels for insurers $2$, $3$, $5$ and $6$ are observed for the A-PELVE (V2) and the MSE-PELVE (V2), due to the strong influence of insurer $1$ ($\omega_1=0.5608$). Consequently, the versions (V2) put the focus on insurer~$1$, also visible by the relative changes for this insurer, which are significantly smaller in the plots (C) and (D) than in the plots (A), (B), (E) and (F). 

Nevertheless, this prioritization of insurer $1$ comes at the expense of the other insurers, explaining their large deviations from zero in the plots (C) and (D). In contrast, plots (E) and (F) show the largest deviations for insurer $1$. This is due to the small weight $\omega_1=0.0231$, producing the opposite effect than before, i.e.,~neglecting insurer $1$. In other words, the capital reserves of the remaining insurers are not artificially reduced by insurer $1$'s financial distress, as modeled via the GPD. So, in this sense, the versions (V3) are preferable, as they require insurer $1$ to increase reserves for its more riskier tail, rather than benefiting at the expense of the others.\newline
\begin{figure}
	\begin{subfigure}[b]{0.45\textwidth}
		\includegraphics[width=\linewidth]{./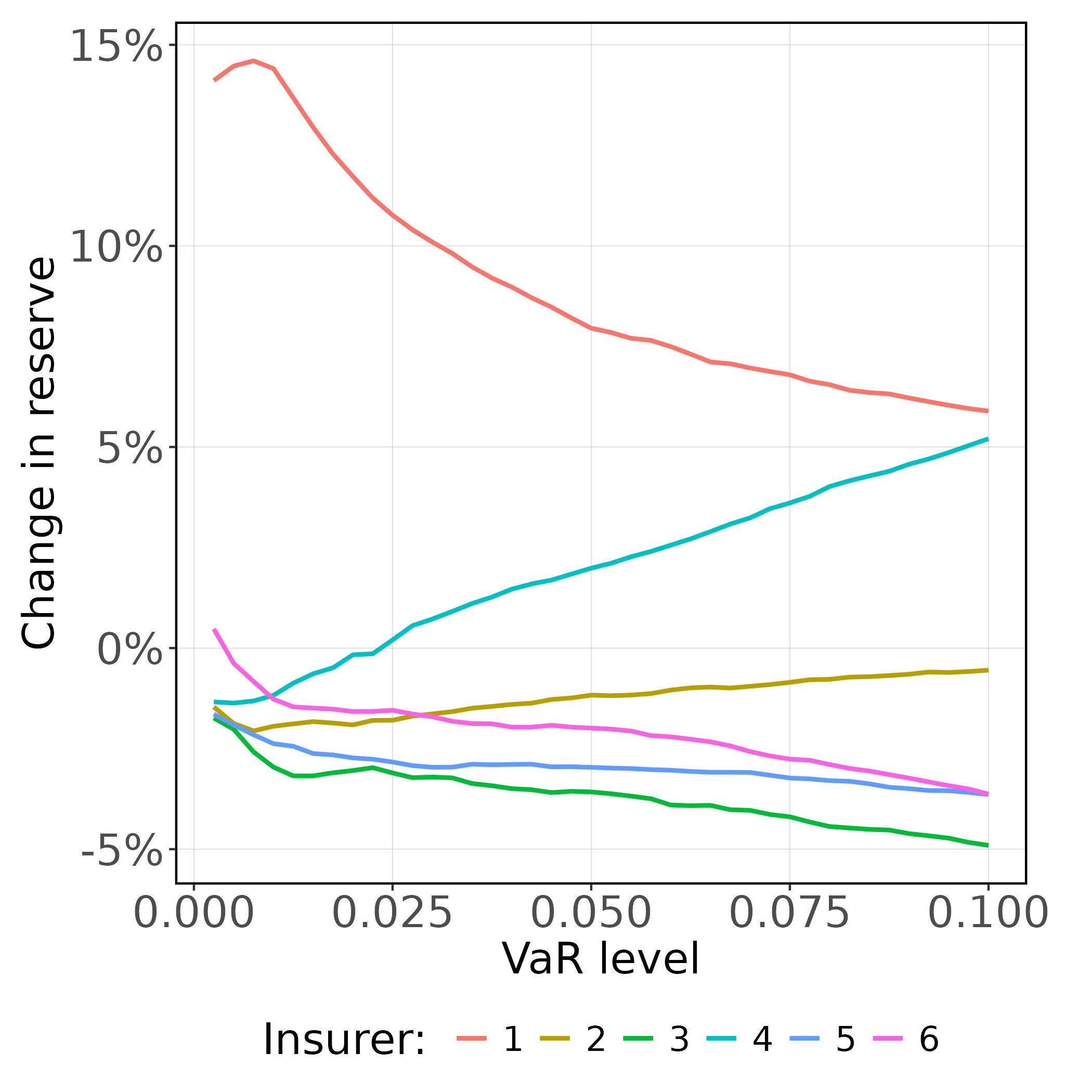}
		\caption{A-PELVE (V1)}
	\end{subfigure}
	\hfill
	\begin{subfigure}[b]{0.45\textwidth}
		\includegraphics[width=\linewidth]{./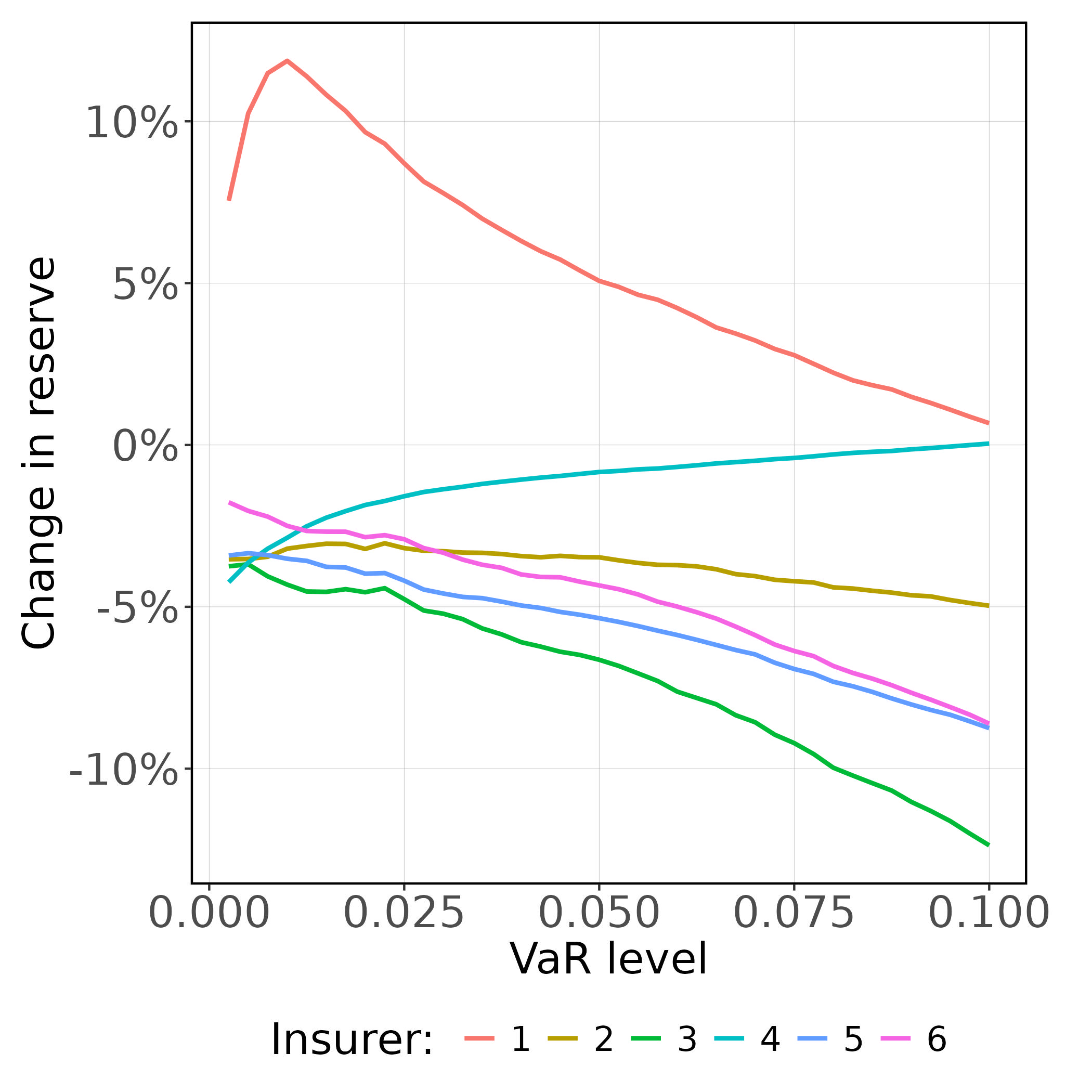}
		\caption{MSE-PELVE (V1)}
	\end{subfigure}
	\begin{subfigure}[b]{0.45\textwidth}
		\includegraphics[width=\linewidth]{./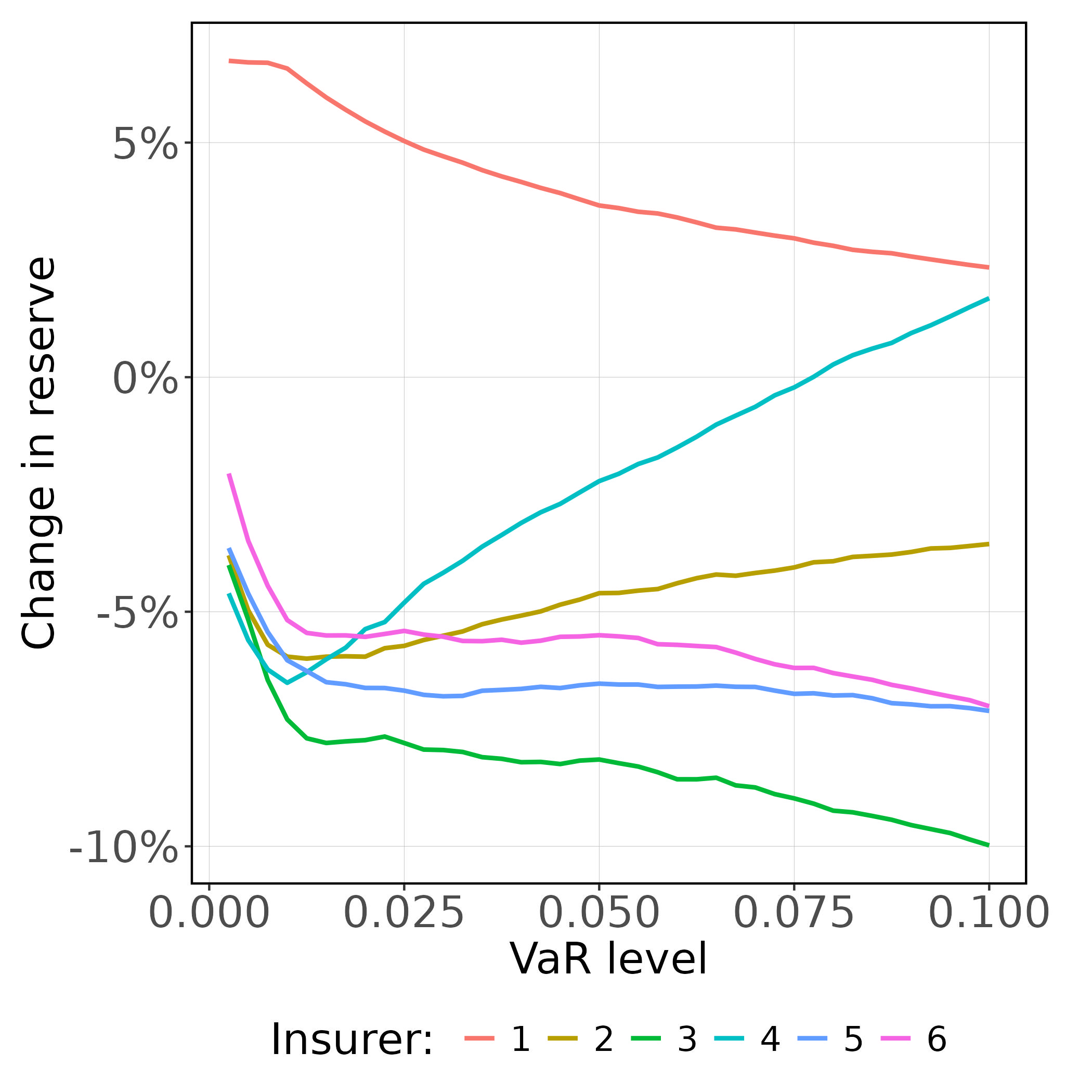}
		\caption{A-PELVE (V2)}
	\end{subfigure}
	\hfill
	\begin{subfigure}[b]{0.45\textwidth}
		\includegraphics[width=\linewidth]{./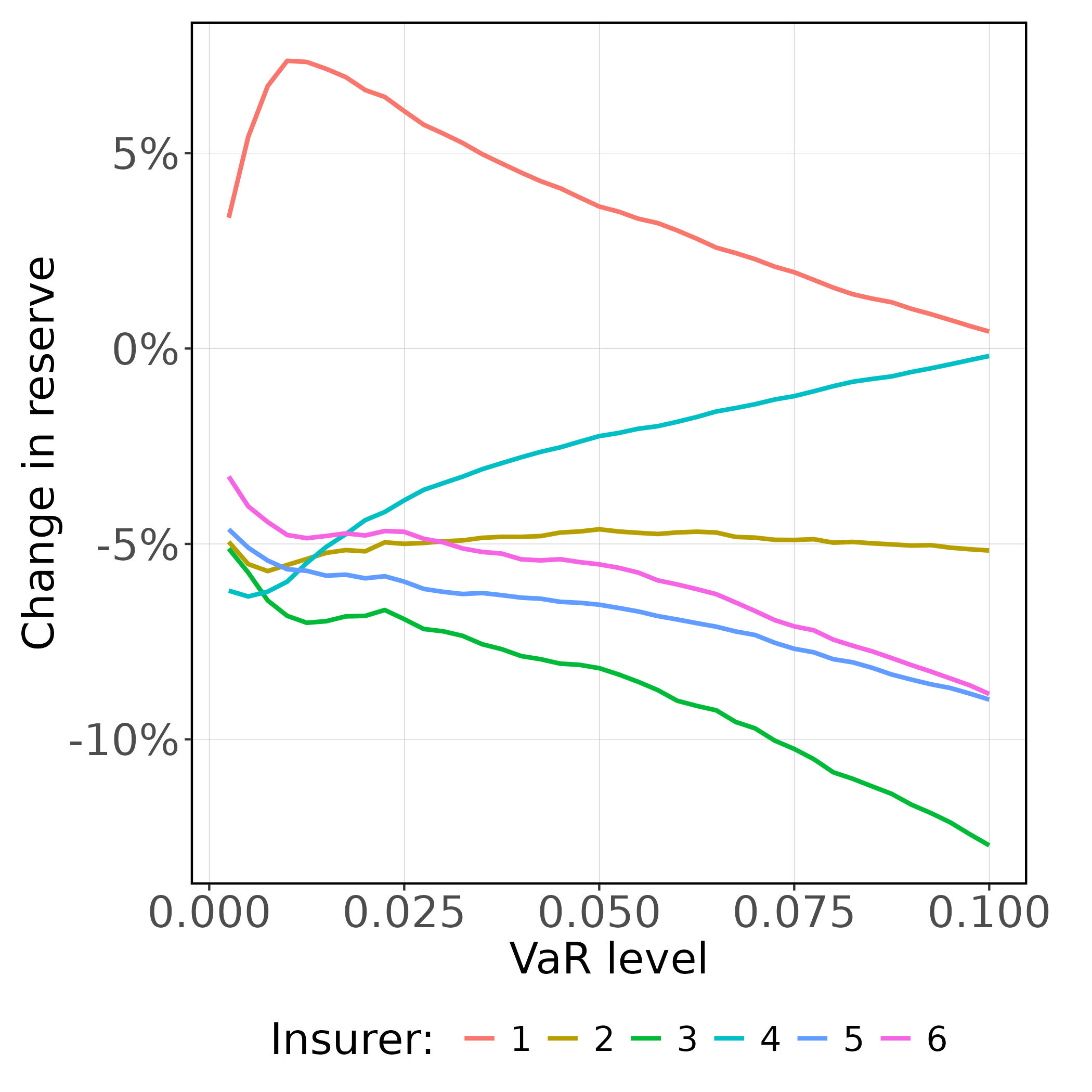}
		\caption{MSE-PELVE (V2)}
	\end{subfigure}
	\begin{subfigure}[b]{0.45\textwidth}
		\includegraphics[width=\linewidth]{./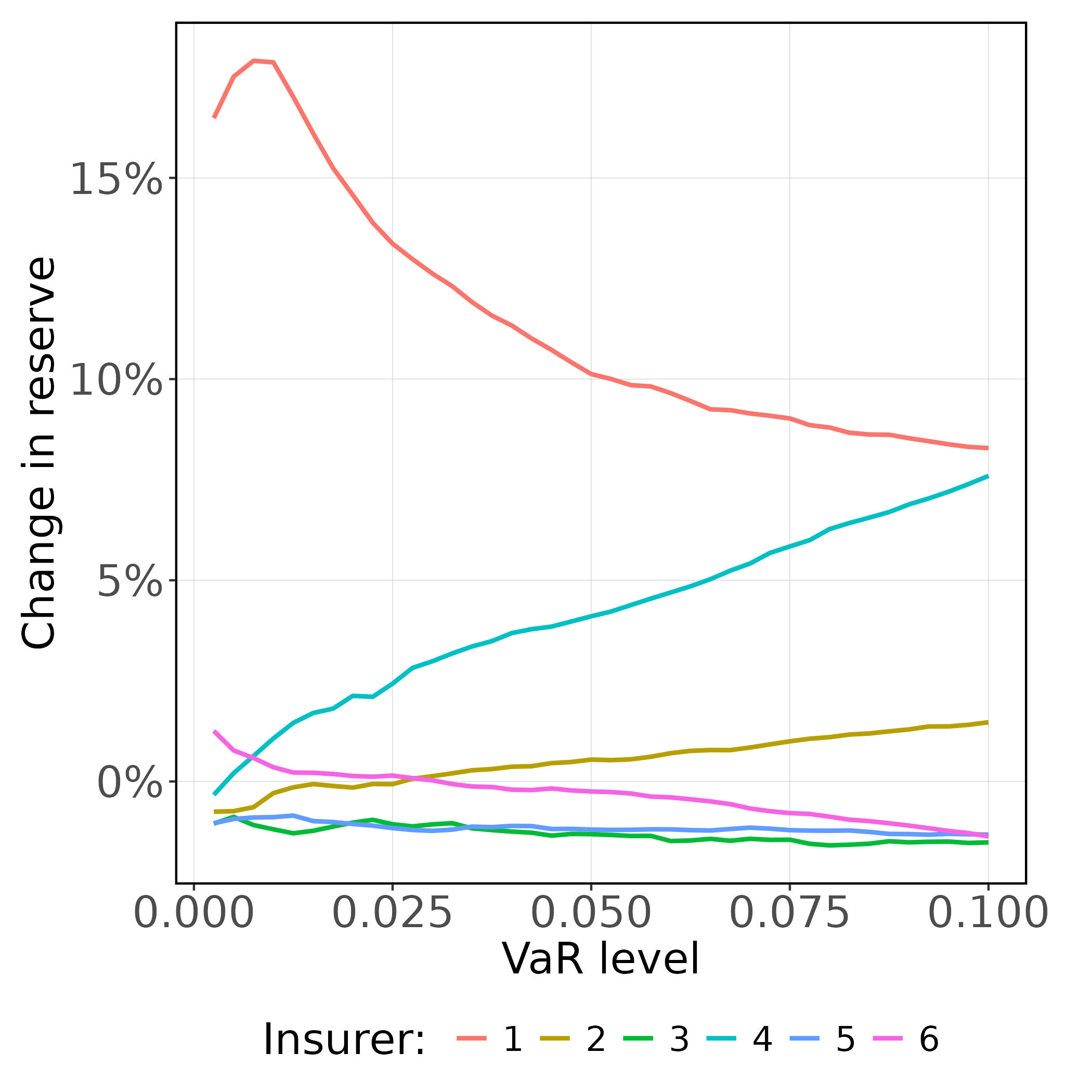}
		\caption{A-PELVE (V3)}
	\end{subfigure}
	\hfill
	\begin{subfigure}[b]{0.45\textwidth}
		\includegraphics[width=\linewidth]{./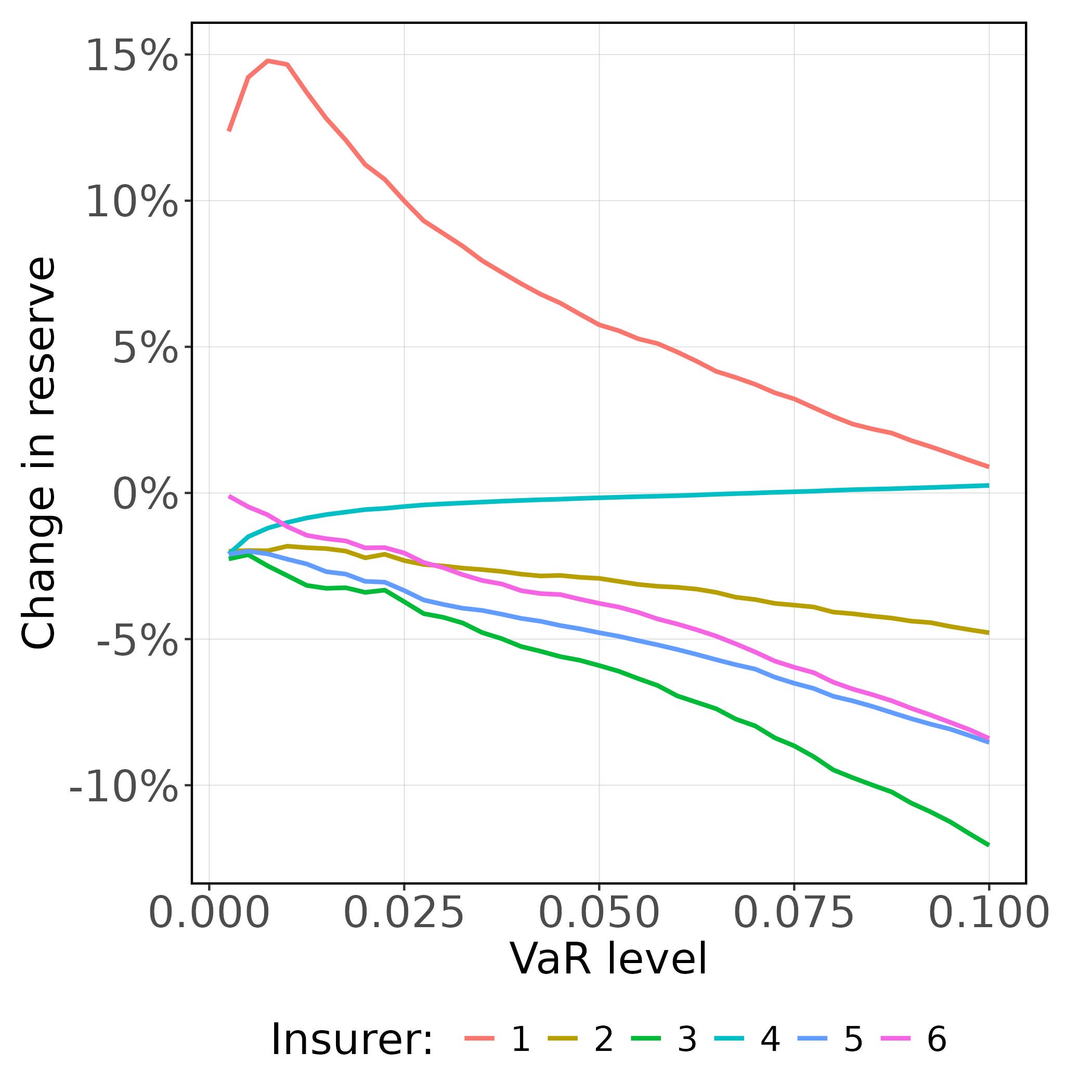}
		\caption{MSE-PELVE (V3)}
	\end{subfigure}
	\captionsetup{font=footnotesize}
	\caption{Changes in the capital reserves of the insurers under model 2, if they use the new ES level stemming from different versions (V1, V2 and V3) of A-PELVE or MSE-PELVE.}
	\label{fig:capitalChanges_perturbed}
\end{figure}

\noindent\fbox{\parbox{\textwidth}{\textit{The analysis of model 2 can be concluded by the following main observations:} Considering overall capital changes, the Sys- and MSE-PELVE (V1) and (V3) perform well. In case of VaR levels close to zero and given their strong overall performance, the A-PELVE (V1) and the MSE-PELVE (V3) seem to be good choices for model 2. For other VaR levels, the application of an MSE-PELVE is questionable because it favors insurers with heavier-tailed risk profiles.}}

\subsection{Beyond changes in regulation}\label{sec:beyondRegulatoryChanges}

Recall that the Multi-PELVE methods are intended to deal with the situation of a possible change from VaR to ES in regulation. Such a change affects the capital reserve of each insurer individually. This means that the methods rely on the marginal distributions ignoring possible dependence structures. 
	
It is natural to ask if multivariate extensions of PELVE are also helpful in other contexts. One could be the usage of risk measures to describe the risk of a network of insurers in total, i.e.,~utilize VaR or ES as management and analysis tools for a regulator. This is done by using VaR or ES acceptability criteria for systemic risk measures. In this regard, dependencies between the marginal distributions matter and should definitely not be ignored. To illustrate this possible extension, we introduce the following new version of a Sys-PELVE for $\lambda\in(0,1)$ and $\mathbf{X}\in L^1_n$:
\begin{align}\label{eq:systemic_pelve_2}
	\syspelve{\lambda,\Lambda}(\mathbf{X})\defgl \inf\{c\in[1,\lambda^{-1}]\,|\,\symbolExpectedShortfall{c\lambda}(\Lambda(\mathbf{X}))\leq\symbolValueAtRisk{\lambda}(\Lambda(\mathbf{X}))\},
\end{align}
where the map $\Lambda:\mathbb{R}^n\rightarrow \mathbb{R}$ is an aggregation rule, also called aggregation map. Here, the ES and the VaR expressions in~\eqref{eq:systemic_pelve_2} refer to systemic risk measures following the ``first aggregate, then allocate'' principle. This new Sys-PELVE gives an ES level such that the systemic risk of the bunch of insurers remains equal when switching from a VaR to an ES acceptability criteria. Note, this simplifies to a PELVE of the aggregated position, i.e.,~$\syspelve{\lambda,\Lambda}(\mathbf{X}) = \pelve{\lambda}(\Lambda(\mathbf{X}))$.

There are now several possibilities to choose an aggregation function $\Lambda$. In~\prettyref{fig:sys_pelve_versions}, we apply the following standard aggregation function (see~\cite{biagini_unified_2019, brunnermeier_measuring_2019}):
\[
	\Lambda(x) = \sum_{i=1}^{n}\alpha_i\min\{x_i,0\} + \beta\max\{x_i,0\}.
\] 
The parameters $\alpha_i$ and $\beta$ describe the preferences of the regulator. On the LHS we illustrate the impact of different values for $\beta$. The impact of $\beta$ is negligible for VaR levels in a small neighborhood of zero. Otherwise, larger values of $\beta$ reduce the Sys-PELVE value. 

On the RHS we set $\beta=0$ and choose three alternatives for the weights $\alpha_i$, namely the ones from Table~\ref{tab:optim_weights}. As expected, for version (V3), for which $\alpha_1$ has the largest value, we obtain the largest Sys-PELVE values. For comparison, we also plot the Sys-PELVE as introduced earlier in~\prettyref{sec:def_systemic_pelve}. We see that the shape of this original Sys-PELVE differs significantly from the Sys-PELVE based on ``first aggregate'' (V1). We compare these two methods by the changes in the capital reserves for the individual insurers in~\prettyref{fig:sys_pelve_capital_changes}. It is important to note that for the Sys-PELVE ``first aggregate'' (V1) we have situations for which all insurers are allowed to reduce their capital reserve, which is of course an undesirable behavior for a possible change in regulation. Hence, the main message out of  Figure~\ref{fig:sys_pelve_capital_changes} is the following: Systemic risk measures that in some way aggregate the individual positions before calculating VaR or ES are inappropriate to calculate a new ES level for a regulatory change from VaR to ES. 

\begin{figure}
	\begin{subfigure}[b]{0.48\textwidth}
		\includegraphics[width=\linewidth]{./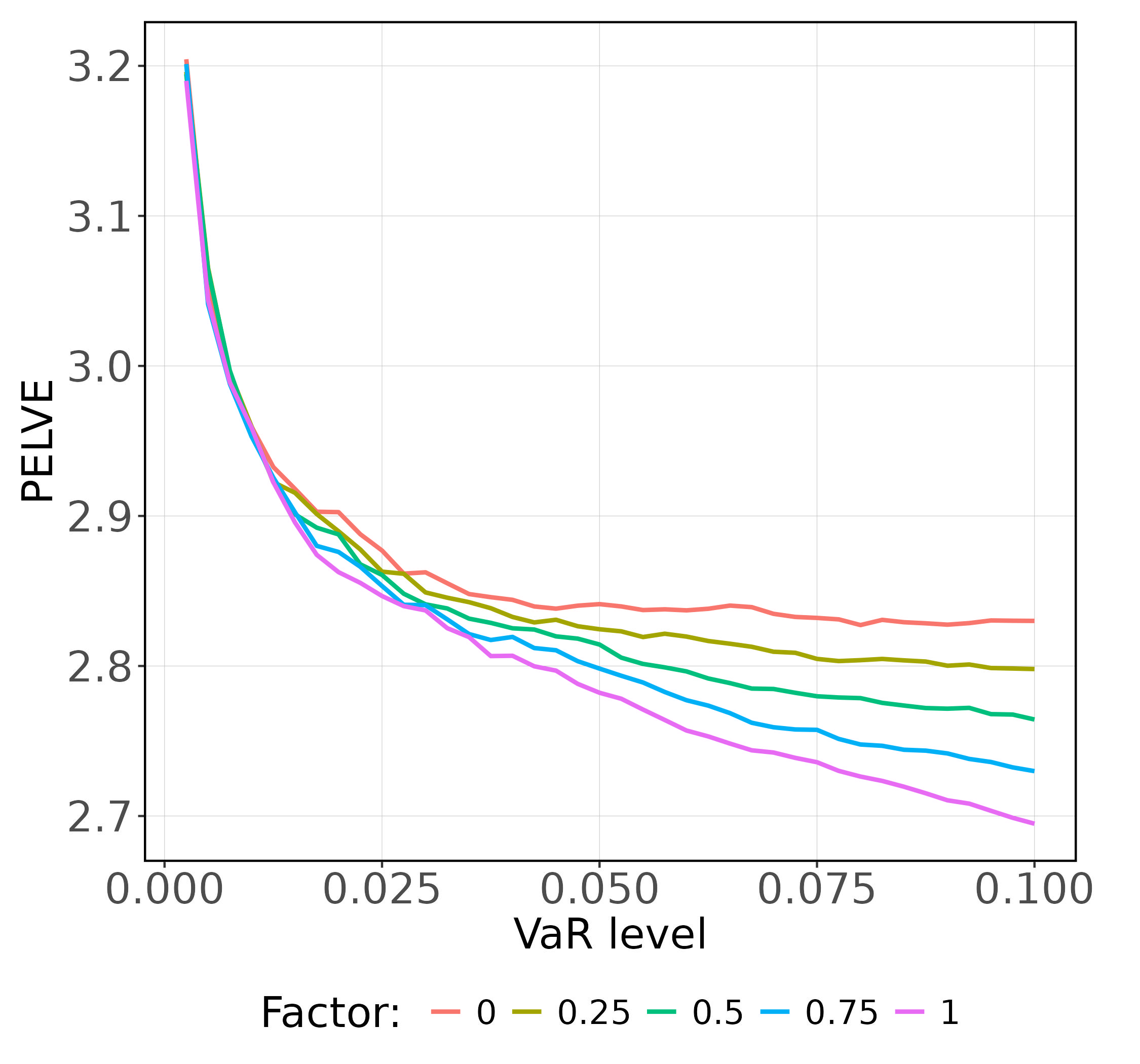}
	\end{subfigure}
	\hfill
	\begin{subfigure}[b]{0.48\textwidth}
		\includegraphics[width=\linewidth]{./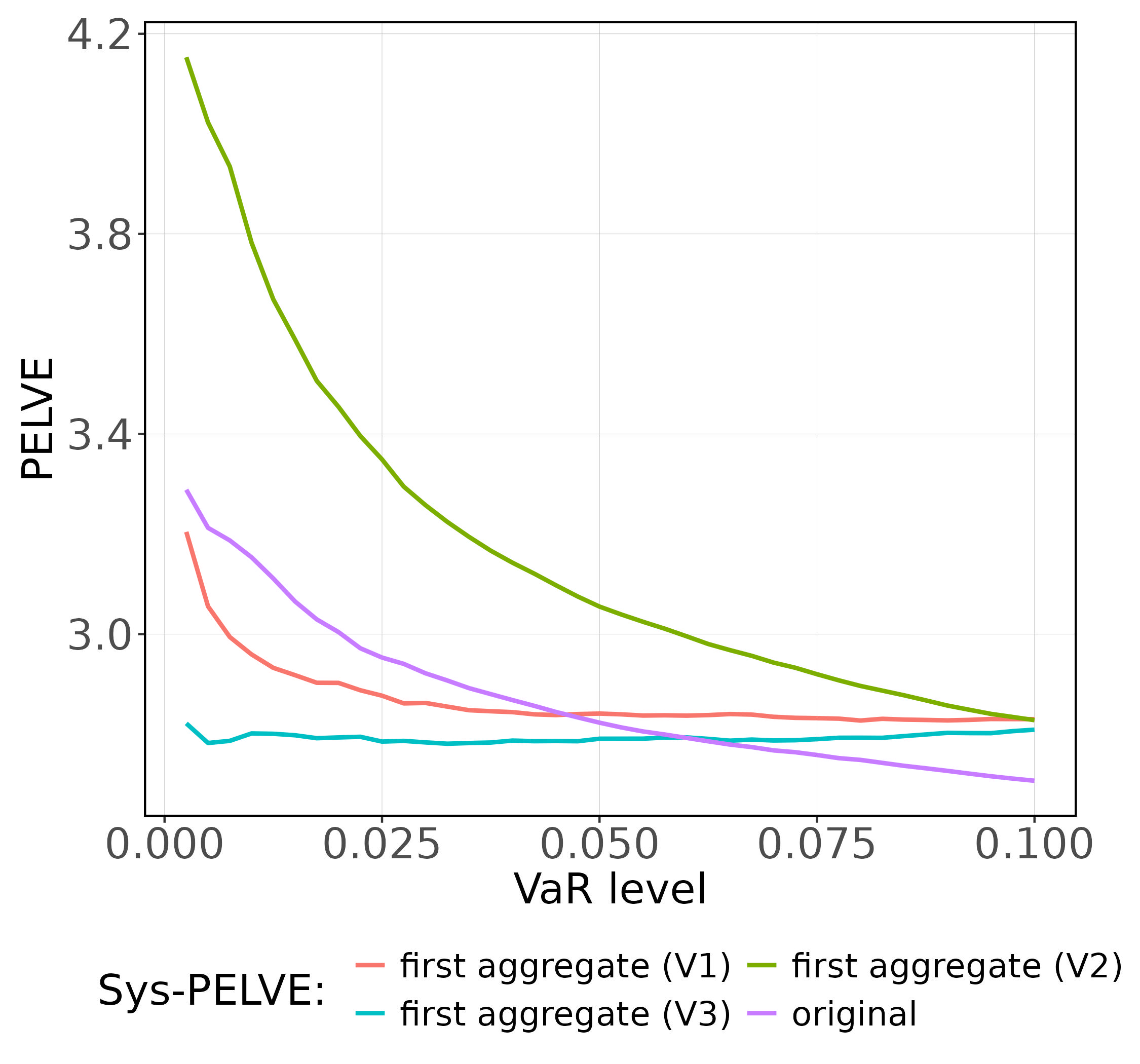}
	\end{subfigure}
	\captionsetup{font=footnotesize}
	\caption{Comparison of systemic versions of PELVE in model 2. LHS: Aggregation function $\Lambda(x) =\sum\limits_{i=1}^n \min\{x_i,0\} + \beta\max\{x_i,0\}$ is applied for different values of the factor $\beta$. RHS: Aggregation function $\Lambda(x) = \sum\limits_{i=1}^n \alpha_i\min\{x_i,0\}$ and weights $\alpha_i$ from  Table~\ref{tab:optim_weights} are used. The terminology ``original'' refers to the Sys-PELVE from Section~\ref{sec:def_systemic_pelve}. }
	\label{fig:sys_pelve_versions}
\end{figure}

\begin{figure}
	\begin{subfigure}[b]{0.48\textwidth}
		\includegraphics[width=\linewidth]{./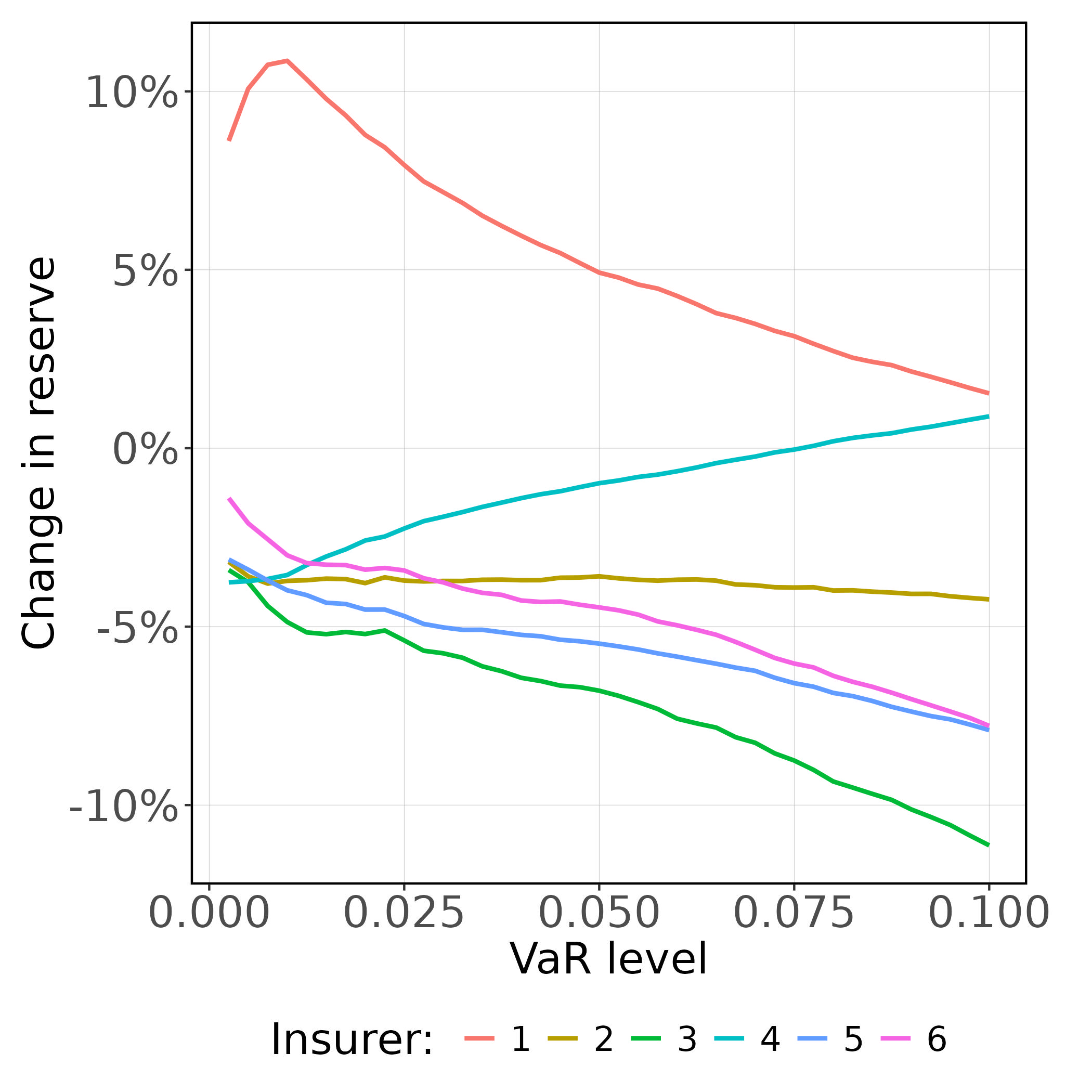}
		\caption{Sys-PELVE ``original''}
	\end{subfigure}
	\hfill
	\begin{subfigure}[b]{0.48\textwidth}
		\includegraphics[width=\linewidth]{./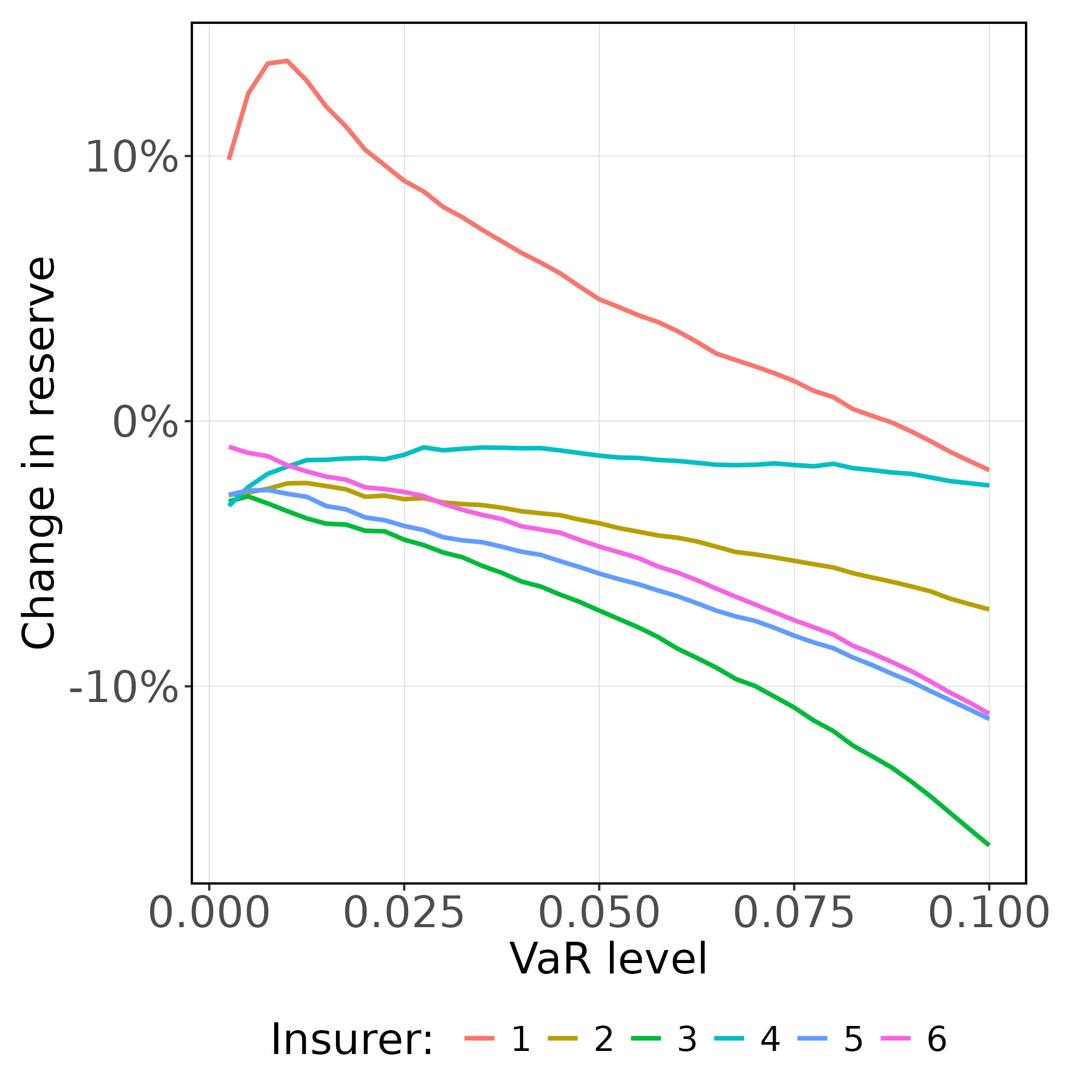}
		\caption{Sys-PELVE ``first aggregate (V1)''}
	\end{subfigure}
	\captionsetup{font=footnotesize}
	\caption{Changes in the capital reserves of the insurers under model 2, if thes use the new ES level stemming from different systemic versions of PELVE.}
	\label{fig:sys_pelve_capital_changes}
\end{figure}

	\section{Conclusion and outlook}

\subsection{Conclusion}

We have introduced PELVE-inspired methods to determine an appropriate ES level for the transition from VaR to ES. Unlike the original PELVE, our Multi-PELVE methods account for multiple insurers. Under suitable conditions, most of them yield unique solutions. For the MSE-PELVE, we provide a comprehensive example showing non-uniqueness in general. Continuity results are available for all methods. For elliptically distributed payoffs, most Multi-PELVE methods reduce to the PELVE of a one-dimensional elliptically distributed random variable. For multivariate regularly varying distributions, all methods converge to the PELVE of a Pareto distribution. As a by-product, we obtain that PELVE curves are discontinuous for empirical distribution functions and discuss several characterizations of ES curves.

In a case study where all equity capitals follow the same distribution family, the Multi-PELVE methods -- except the BC- and WC-PELVE -- closely match the PELVE of the individual insurers. So, moving from VaR to ES does not substantially change reserves, which is convenient for a regulator, since the ES level is independent of the chosen Multi-PELVE method. This changes drastically, if equity distributions come from different distribution families. In our case, one insurer acts as a market leader with heavy-tailed distribution. The smallest overall deviations of the reserves are obtained by the Sys- and the (weighted) MSE-PELVE. Unlike other versions, the A-PELVE (equal or inverse proportional weights with respect to asset volume) and for small VaR levels the corresponding MSE-PELVE versions do not prioritize insurers with risky equity profiles at the expense of the other insurers. Thus, in the mentioned situations, these methods seem to be a fair choice to determine a new ES level.

\subsection{Outlook}

Considering multivariate extensions of PELVE opens an unexpectedly rich landscape of future research directions, such as: (1) Testing the Multi-PELVE methods using a more advanced balance sheet model, like e.g.,~the one in~\cite{diehl_long-term_2023}. (2) Developing additional PELVE variants to address unresolved and undesirable effects, such as robustified versions. (3)~Accounting for insurers' reactions, for instance, portfolio restructuring after a transition from VaR to ES, motivated by the considerations in~\cite{embrechts_robustness_2022}. (4) Inspired by the discussion in~\prettyref{sec:beyondRegulatoryChanges}, a next step could be to develop other multivariate PELVE-inspired method to describe the difference in systemic risk measures for financial networks (of banks or (re-)insurers). Models incorporating corresponding contagion effects are of particular interest, most notably the  seminal one in~\textcite{EisenbergNoe}. Its combination with ES to construct a systemic risk measure is demonstrated in~\cite{FeinsteinRudloffWeber}.

\section*{Acknowledgements}

The authors are grateful to the two anonymous referees for their careful reading and insightful comments, which substantially improved the quality and clarity of the manuscript. The authors thank Philip Biegel for his  assistance on the proof of~\prettyref{prop:ratio_es_var_gamma}.
	\section*{Declarations}

\textbf{Competing interests:} The authors have no competing interests to declare that are relevant to the content of this article.\newline

\noindent\textbf{Data availability:} The balance sheet figures were taken from the official reports of the corresponding insurance companies. Apart from these publicly available numbers, no additional data were used in this contribution; simulated numbers are considered for illustrative purposes only.
    
    
    \printbibliography
    
    \appendix
    
    \section{Results and proofs accompanying Section~\ref{sec:esCurves}}\label{sec:auxiliaryEScurves}

\begin{lem}\label{lem:nonConcavityEsCurve}
	Let $X\in L^1$. The map $(0,1]\rightarrow\mathbb{R},t\mapsto t\expectedShortfall{t}{X}$ is concave and continuous.
\end{lem}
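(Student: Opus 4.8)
The plan is to reduce everything to an elementary statement about the indefinite integral of a monotone integrable function. Straight from the definition of ES,
\[
t\,\expectedShortfall{t}{X} \;=\; \int_{0}^{t}\valueAtRisk{u}{X}\,\diff u \;=:\; g(t),\qquad t\in(0,1],
\]
so it suffices to show that $g$ is concave and continuous on $(0,1]$.

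First I would record two standard facts about the integrand. Since $\valueAtRisk{u}{X}=-\upperQuantileFunction{X}{u}$ and the upper quantile function is non-decreasing, the map $u\mapsto\valueAtRisk{u}{X}$ is non-increasing on $(0,1)$; and because $X\in L^{1}$ the quantile function is integrable on $(0,1)$ (see e.g.~\cite{follmer_stochastic_2016}), so $\valueAtRisk{\cdot}{X}\in L^{1}(0,1)$. Hence $g$ is well defined and finite on $(0,1]$, and being the indefinite integral of an $L^{1}$ function it is absolutely continuous on $[0,1]$ with $g(0)=0$; in particular it is continuous on $(0,1]$, which gives the continuity assertion.

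For concavity I would exploit the monotonicity of the integrand via the three-chord inequality. Fix $0<t_{1}<t<t_{2}\le 1$. Because $\valueAtRisk{\cdot}{X}$ is non-increasing, its average over $[t_{1},t]$ is at least $\valueAtRisk{t}{X}$ while its average over $[t,t_{2}]$ is at most $\valueAtRisk{t}{X}$, i.e.
\[
\frac{g(t)-g(t_{1})}{t-t_{1}} \;=\; \frac{1}{t-t_{1}}\int_{t_{1}}^{t}\valueAtRisk{u}{X}\,\diff u \;\ge\; \valueAtRisk{t}{X} \;\ge\; \frac{1}{t_{2}-t}\int_{t}^{t_{2}}\valueAtRisk{u}{X}\,\diff u \;=\; \frac{g(t_{2})-g(t)}{t_{2}-t}.
\]
This non-increasing-slopes property for all intermediate points, together with the continuity of $g$ established above, is equivalent to concavity of $g$ on $(0,1]$; alternatively one may invoke that an absolutely continuous function whose a.e.\ derivative $g'(t)=\valueAtRisk{t}{X}$ is non-increasing is concave.

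There is no genuine obstacle in this lemma; the only points that call for a little care are the integrability of the quantile function near $0$ — which is precisely where $X\in L^{1}$ enters, and which also furnishes the continuous extension to $t=0$ — and quoting the correct elementary characterisation of concavity from the three-chord inequality rather than re-deriving it from scratch.
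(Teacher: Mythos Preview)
Your argument is correct. Writing $g(t)=t\,\expectedShortfall{t}{X}=\int_{0}^{t}\valueAtRisk{u}{X}\,\diff u$ and using that $u\mapsto\valueAtRisk{u}{X}$ is non-increasing and integrable on $(0,1)$ (from $X\in L^{1}$) gives absolute continuity of $g$, hence continuity, and the three-chord inequality (equivalently: an absolutely continuous function with a non-increasing a.e.\ derivative is concave) yields concavity. No step is missing.

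Your route differs from the paper's. The paper observes that $-t\,\expectedShortfall{t}{X}$ is, by \cite[Lemma~A.26]{follmer_stochastic_2016}, the restriction to $(0,1]$ of the Fenchel--Legendre transform of $x\mapsto \expectation{}{(x-X)^{+}}$; convexity of conjugates then gives concavity of $t\mapsto t\,\expectedShortfall{t}{X}$ in one stroke, and continuity is obtained separately via dominated convergence. Your approach is more elementary and fully self-contained: it works directly from the definition of ES as an integral of VaR and needs nothing beyond the fundamental theorem of calculus and monotonicity. The paper's approach is shorter once one is willing to quote the duality lemma, and it situates the result in a broader convex-analytic framework, but it imports more machinery. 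Either argument is perfectly adequate here.
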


\begin{proof}
	Concavity follows since the function $h:(0,1]\ni t\mapsto-t\expectedShortfall{t}{X}$ is a restriction of the Fenchel-Legendre transformation of the function $g(x):=E[(x-X)^{+}]$ (with domain $\mathbb{R}$), see~\cite[Lemma A.26]{follmer_stochastic_2016}. Then, $g\not\equiv\infty$ implies that $h$ is convex (and lower semicontinuous). Continuity of $h$ is a consequence of dominated convergence. 
\end{proof}

In contrast to the previous lemma, \prettyref{exam:nonConcavityES} provides a random variable $X\in L^1$ such that $(0,1]\rightarrow\mathbb{R},t\mapsto \expectedShortfall{t}{X}$ is neither concave nor convex. 
\begin{exam}\label{exam:nonConcavityES}
	Consider the random variable $X$ with cumulative distribution function:
	\begin{align*}
		F^X(x) = \sum_{i=1}^{3} p_i 1_{[x_i,\infty)}(x),
	\end{align*}
	by choosing the concrete values $x_i=i$ and $p_i = 3^{-1}$ for all $i\in[3]$, we obtain:
	\begin{align*}
		f(t):=\expectedShortfall{t}{X} = \begin{cases}
			0, & t\in\left(0,\frac{1}{3}\right],\\
			\frac{1}{3t}-1, & t\in\left(\frac{1}{3},\frac{2}{3}\right],\\
			-2+\frac{1}{t},& t\in\left(\frac{2}{3},1\right].
		\end{cases}
	\end{align*}
	Note, $f$ is differentiable on $(0,1]\setminus\left\{\frac{1}{3},\frac{2}{3}\right\}$ and it holds that
	\begin{align*}
		f^{\prime}(t)= \begin{cases}
			0, & t\in\left(0,\frac{1}{3}\right),\\
			-\frac{1}{3t^2}, & t\in\left(\frac{1}{3},\frac{2}{3}\right),\\
			-\frac{1}{t^2},& t\in\left(\frac{2}{3},1\right],
		\end{cases}\quad f^{\prime\prime}(t)= \begin{cases}
			0, & t\in\left(0,\frac{1}{3}\right),\\
			\frac{2}{3t^3}, & t\in\left(\frac{1}{3},\frac{2}{3}\right),\\
			\frac{2}{t^3},& t\in\left(\frac{2}{3},1\right].
		\end{cases}
	\end{align*}
	This shows that $f$ is strictly convex on $\left(\frac{1}{3},1\right]\setminus\left\{\frac{2}{3}\right\}$. Furthermore, the limits of the left and right derivative of $f$ in $\frac{1}{3}$ are 
	\begin{align*}
		\lim\limits_{t\uparrow\frac{1}{3}}f^{\prime}(t) = 0,\quad \lim\limits_{t\downarrow\frac{1}{3}}f^{\prime}(t) = -3.
	\end{align*}
	This shows that $f$ is not locally convex in $\frac{1}{3}$.  
\end{exam}

\prettyref{exam:nonConcavityES} also shows that ES curves are not necessarily strictly decreasing. Accordingly, assumptions guaranteeing this property are important. In this regard, the next result presents a slight refinement of the statement in~\prettyref{rem:existenceUniquenessPELVE}.  For completeness, we include a proof of this tailored version, which is used in the proof of~\prettyref{prop:continuityPELVE}. 
\begin{lem}\label{lem:esStrictlyDecreasing}
	Let $X\in L^1$ and $\alpha\in(0,1]$. Assume there exists  $\gamma\in[0,\alpha)$ such that for all $\lambda\in(\gamma,\alpha)$ it holds that $\symbolUpperQuantile{X}$ is not constant on $(0,\lambda)$. Then, $(\gamma,1]\rightarrow \mathbb{R},t\mapsto \expectedShortfall{t}{X}$ is strictly decreasing.
\end{lem}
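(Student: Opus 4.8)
The plan is to argue directly from $\expectedShortfall{t}{X} = \tfrac{1}{t}\int_0^t \valueAtRisk{u}{X}\,\diff u$, using that $u\mapsto\valueAtRisk{u}{X}$ is non-increasing on $(0,1)$ (it equals $-\upperQuantileFunction{X}{u}$ and $\symbolUpperQuantile{X}$ is a non-decreasing quantile function) and that all integrals below are finite because $X\in L^1$. As a preliminary step I would observe that the hypothesis propagates to the whole interval $(\gamma,1]$: if $\symbolUpperQuantile{X}$ were constant on $(0,\lambda)$ for some $\lambda\in[\alpha,1]$, it would also be constant on $(0,\lambda')$ for every $\lambda'\le\lambda$, in particular for some $\lambda'\in(\gamma,\alpha)$, contradicting the assumption. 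Hence $\symbolUpperQuantile{X}$ is not constant on $(0,\lambda)$ for any $\lambda\in(\gamma,1]$.

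Next I would fix $\gamma<s<t\le 1$ and split the defining integral as $t\,\expectedShortfall{t}{X} = s\,\expectedShortfall{s}{X} + \int_s^t\valueAtRisk{u}{X}\,\diff u$, which rearranges to $t\bigl(\expectedShortfall{t}{X}-\expectedShortfall{s}{X}\bigr) = \int_s^t\bigl(\valueAtRisk{u}{X}-\expectedShortfall{s}{X}\bigr)\diff u$. For $u\in(s,t)$, monotonicity gives $\valueAtRisk{u}{X}\le\valueAtRisk{s}{X}$, while averaging the same monotonicity over $(0,s)$ gives $\valueAtRisk{s}{X}\le\tfrac{1}{s}\int_0^s\valueAtRisk{u}{X}\,\diff u = \expectedShortfall{s}{X}$. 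Thus the integrand on the right is $\le 0$, and $t\mapsto\expectedShortfall{t}{X}$ is at least non-increasing on $(\gamma,1]$.

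To get \emph{strict} monotonicity it suffices to show $\valueAtRisk{s}{X}<\expectedShortfall{s}{X}$ for every $s\in(\gamma,1)$, since then the integrand above is strictly negative on $(s,t)$. If equality held, then $\int_0^s\bigl(\valueAtRisk{u}{X}-\valueAtRisk{s}{X}\bigr)\diff u = 0$ with a non-negative integrand, so $\upperQuantileFunction{X}{u}=\upperQuantileFunction{X}{s}$ for a.e.\ $u\in(0,s)$; since $\symbolUpperQuantile{X}$ is non-decreasing, the exceptional set $\{u\in(0,s):\upperQuantileFunction{X}{u}<\upperQuantileFunction{X}{s}\}$ is an initial subinterval of $(0,s)$ and hence, being Lebesgue-null, empty. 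So $\symbolUpperQuantile{X}\equiv\upperQuantileFunction{X}{s}$ on $(0,s)$, contradicting the preliminary step. Consequently $\int_s^t\valueAtRisk{u}{X}\,\diff u\le(t-s)\valueAtRisk{s}{X}<(t-s)\expectedShortfall{s}{X}$, and combining with the split above gives $\expectedShortfall{t}{X}<\expectedShortfall{s}{X}$, as desired.

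The one genuinely delicate point is this last step: turning ``the $(0,s)$-average of $\symbolUpperQuantile{X}$ equals its infimum $\upperQuantileFunction{X}{s}$'' into ``$\symbolUpperQuantile{X}$ is constant on $(0,s)$'' relies on monotonicity of the quantile function to discard a null exceptional set. An essentially equivalent route would use \prettyref{lem:nonConcavityEsCurve} instead of the integral splitting: $g(t):=t\,\expectedShortfall{t}{X}$ is concave with $g(0+)=0$, so the chord slope $g(t)/t=\expectedShortfall{t}{X}$ is non-increasing, and equality of two such slopes forces $g$ to be linear on the corresponding initial interval $[0,s]$, i.e.\ $\symbolUpperQuantile{X}$ constant there — the same contradiction. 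The direct computation keeps matters elementary and avoids any appeal to differentiability.
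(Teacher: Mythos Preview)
Your proof is correct and follows essentially the same route as the paper's: both split $t\,\expectedShortfall{t}{X}$ at an intermediate point, bound the tail piece by $\valueAtRisk{s}{X}$, and then use non-constancy of $\symbolUpperQuantile{X}$ on $(0,s)$ to obtain the strict inequality $\valueAtRisk{s}{X}<\expectedShortfall{s}{X}$. The only cosmetic differences are that you make the propagation of the hypothesis from $(\gamma,\alpha)$ to $(\gamma,1]$ explicit (the paper uses it implicitly) and that you phrase the strict step as a contradiction via an a.e.\ argument, whereas the paper directly exhibits an $\eta<\kappa$ with $\valueAtRisk{\eta}{X}>\valueAtRisk{\kappa}{X}$; these are equivalent.
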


\begin{proof}
	Assume $\beta,\kappa\in(\gamma,1]$ with $\kappa <\beta$. Then we obtain:
	\begin{align*}
		\expectedShortfall{\beta}{X} &= \frac{1}{\beta}\int\limits_{0}^{\beta}\text{VaR}_{u}(X)\,\diff u =\frac{\kappa}{\beta} \expectedShortfall{\kappa}{X}+ \frac{\beta-\kappa}{\beta}\left(\frac{1}{\beta-\kappa}\int\limits_{\kappa}^{\beta}\text{VaR}_{u}(X)\,\diff u\right)\\
		&\leq \frac{\kappa}{\beta} \expectedShortfall{\kappa}{X}+ \frac{\beta-\kappa}{\beta}\text{VaR}_{\kappa}(X)\\
		&<\frac{\kappa}{\beta} \expectedShortfall{\kappa}{X}+ \frac{\beta-\kappa}{\beta}\expectedShortfall{\kappa}{X} = \expectedShortfall{\kappa}{X}.
	\end{align*} 
	The strict inequality is obtained as follows: By the assumption that for all $\lambda\in(\gamma,\alpha)$ the map $\symbolUpperQuantile{X}$ is not constant on $(0,\lambda)$, we obtain that there exists  $\eta\in(0,\kappa)$ such that $\valueAtRisk{\eta}{X}>\valueAtRisk{\kappa}{X}$ and we conclude that 
	\begin{align*}
		\valueAtRisk{\kappa}{X} &\leq  \frac{1}{\kappa}\left(\int\limits_{0}^{\eta}\text{VaR}_{\kappa}(X)\,\diff u + \int\limits_{\eta}^{\kappa}\text{VaR}_{\kappa}(X)\,\diff u \right)\\ &<\frac{1}{\kappa}\left(\int\limits_{0}^{\eta}\text{VaR}_{u}(X)\,\diff u + \int\limits_{\eta}^{\kappa}\text{VaR}_{\kappa}(X)\,\diff u \right)\leq \expectedShortfall{\kappa}{X}.\qedhere
	\end{align*}
\end{proof}

\begin{proof}[Proof of~\prettyref{thm:existenceOfEScurve}.]
	We first prove that (a) implies (b). We have $$\lim\limits_{t\downarrow 0}\tES(t) = \lim\limits_{t\downarrow 0}tf(t) = -\lim\limits_{t\downarrow 0}\int_{0}^{t}q_X^{+}(u)\diff u =0,$$
	by integrability of $X$, which is equivalent to $\int_0^1|\upperQuantileFunction{X}{u}|\diff u<\infty$. Then, for any finite collection of disjoint intervals $(a_1,b_1),\dots, (a_n,b_n)\subseteq [0,1]$, we get
	\[
	\sum_{k=1}^n \big| \tES(b_k) - \tES(a_k) \big|
	= \sum_{k=1}^n \left| \int_{a_k}^{b_k} q_X(u)\,du \right|
	\le \sum_{k=1}^n \int_{a_k}^{b_k} |q_X(u)|\,du
	= \int\limits_{\bigcup_{k=1}^{n} (a_k,b_k)} |q_X(u)|\,du.
	\]
	Together with $X\in L^1$, i.e.,~$\int_0^1|\upperQuantileFunction{X}{u}|\diff u<\infty$, we obtain that $\tES$ is absolutely continuous. By $\tES$ being absolutely continuous, the fundamental theorem of calculus for Lebesgue integrals implies that~$-\tES^{\prime}$ exists a.e.~on $[0,1]$ and it is a.e.~equal to $\symbolUpperQuantile{X}$, which means we can find a null set $A\subseteq [0,1]$ such that $-\tES^{\prime}(t) = \upperQuantileFunction{X}{t}$ for all $t\in[0,1]\setminus A$. This also implies that $-\tES^{\prime}$ is increasing and right-continuous on $[0,1]\setminus A$. 
	
	Now, we prove that (b) implies (a). Note, $(0,1]\setminus A$ is dense in $(0,1]$. Indeed, let $I = (a,b)\subseteq (0,1]$ be nonempty. If $I\subseteq A$, then $A$ cannot be a null set. Otherwise, $I\setminus A \neq \emptySet$, which means $I$ and $(0,1]\setminus A$ cannot be disjoint. This shows that $(0,1]\setminus A$ is dense in $(0,1]$. 
	
	For all $t\in(0,1]\setminus A$ set $q(t):=-\tES^{\prime}(t)$. For $t\in A$ set $$q(t):=\lim\limits_{\substack{u\downarrow t\\ u\in (0,1]\setminus A}} q(u).$$
	
	Then, by the fundamental theorem of calculus for the Lebesgue integral we obtain 
	\begin{align*}
		-\int_{0}^{t}q(u)\diff u = \int_{(0,t]\setminus A}(-q(u))\diff u= \int_{[0,t]\setminus A}\tES^{\prime}(u)\diff u = \tES(t)-\tES(0) = \tES(t).
	\end{align*}
	
	So, if $q$ satisfies all properties of an upper quantile function, then we are done. To prove this, note first that $q$ being increasing is a direct consequence of the assumption that $-\tES^{\prime}$ is increasing on $(0,t]\setminus A$. To show that $q$ is right-continuous, we point out that for all $t\in(0,1]$ it holds that
	\begin{align*}
		\lim\limits_{\substack{u\downarrow t\\ u\in (0,1]\setminus A}} q(u) = \lim\limits_{u\downarrow t} q(u). 
	\end{align*}
	This equality is derived as follows: If necessary, subdivide a decreasing sequence $(u_n)_n$ with $u_n\downarrow t$ into two subsequences $(v_n)_n\subseteq (0,1]\setminus A$ and $(w_n)_n\subseteq A$. The case of $(v_n)_n$ is then clear. For the case of $(w_n)_n$, note that for each $n$ we can create sequences $(w_n^l)_n,(w_n^u)_n\subseteq (t,1]\setminus A$ converging to~$t$ such that for each $n$ it holds that $w_n^l< w_n< w_n^u$. This then shows $q(w_n)\downarrow q(t)$. Finally, for $t\in A$, $$q(t) =  \lim\limits_{\substack{u\downarrow t\\ u\in (0,1]\setminus A}} q(u) = \lim\limits_{u\downarrow t} q(u),$$ where the first inequality holds by definition of $q$ in $t\in A$, as well as by the fact that for all $t\in(0,1]\setminus A$ we obtain
	\begin{align*}
		\lim\limits_{u\downarrow t} q(u) = \lim\limits_{\substack{u\downarrow t\\ u\in (0,1)\setminus A}} (-\tES^{\prime}(u))=-\tES^{\prime}(t) = q(t), 
	\end{align*}
	where the second equality follows by the right-continuity of $-\tES^{\prime}$ on $(0,1]\setminus A$. 
\end{proof}
	
\begin{proof}[Proof of~\prettyref{cor:conclusionTheoremEScurves}]
	Towards a contradiction, assume there exist $0<b<c\leq 1$ and $\delta>0$ such that $f$ is constant on $(b,c]$ and strictly decreasing on $(\max\{0,b-\delta\},b)$. Note, $\lim\limits_{u\uparrow b} (-\tES^{\prime}(u)) = f(b)-b\lim\limits_{u\uparrow b}f^{\prime}(u)$. 
	If $\lim\limits_{u\uparrow b}f^{\prime}(u) <0$, then it is obvious that the function $(0,1]\setminus A\rightarrow\mathbb{R},t\mapsto -\tES^{\prime}(t)$ cannot be increasing, contradicting the properties of $\tES$ in~\prettyref{thm:existenceOfEScurve}.  
	The case $\lim\limits_{u\uparrow b}f^{\prime}(u) =0$ is only possible if $f$ is convex in a small neighborhood of $b$, i.e.~convex on $(-b-\epsilon,b+\epsilon)\subseteq(0,1)$ for some $\epsilon>0$. Now, $t\mapsto -\tES^{\prime}(t)$ being increasing on $(0,1]\setminus A$ implies that for all $t\in(0,b)\setminus A$ it has to hold that 
	\begin{align*}
		-f(t)-tf^{\prime}(t)\leq -f(b) \quad\Longleftrightarrow\quad f(b)-f(t) - tf^{\prime}(t)\leq 0.
	\end{align*}
	So, for $t\in(b-\epsilon,b)$ we obtain from the convexity of $f$ on $(-b-\epsilon,b+\epsilon)$ that
	\begin{align*}
		f(b)-f(t) - t\frac{f(b)-f(t)}{b-t}\leq 0 \quad\Longleftrightarrow\quad 	 \frac{b-2t}{b-t}(\underbrace{f(b)-f(t)}_{<0}) \leq 0.
	\end{align*}
	But, for $t\in(b-\epsilon,b)$ close enough to $b$, the left-hand side in this inequality becomes strictly positive, a contradiction.
\end{proof}

\begin{proof}[Proof of~\prettyref{cor:existenceOfEScurve}]
	For all $t\in(0,1]$ set $\tES(t):=tf(t)$. Then, $f$ being decreasing and concave on $(0,1]$ implies $f$ being bounded. Hence, we can extend $\tES$ to a continuously differentiable function on $[0,1]$, by setting $\tES(0):=\lim\limits_{u\downarrow 0}\tES(u) = 0$. This extended version of $\tES$ to the compact interval $[0,1]$ is continuously differentiable and therefore, it is also absolutely continuous. In particular, $-\tES^{\prime}(t) = -tf^{\prime}(t)-f(t)$ is continuous on $(0,1]$. By $f$ being decreasing and concave, we have for all $t\in(0,1)$ that $-2f^{\prime}(t)-tf^{\prime\prime}(t)\geq 0$, i.e.,~$-\tES^{\prime}$ is increasing on $(0,1)$. The claim follows then by~\prettyref{thm:existenceOfEScurve}.
\end{proof}

\begin{exam}\label{exam:counterxampleCorollaryEScurve}
	Let $f:(0,1]\rightarrow\mathbb{R},t\mapsto (1-t)^{2}$. Then, for all $t\in(0,1]$ we have $f^{\prime}(t) = 2t-2$ and $f^{\prime\prime}(t) = 2$. Hence, $2f^{\prime}(t)+tf^{\prime\prime}(t) = 6t-4$, which is strictly positive for all $t\in\left(\frac{2}{3},1\right]$. So, $t\mapsto -f(t)-tf^{\prime}(t)$ is decreasing on $\left(\frac{2}{3},1\right]$.
\end{exam}

For brevity of the next proof, we denote the restriction of a function $f$ to a set $A$ by $f|_{A}$.

\begin{proof}[Proof of~\prettyref{prop:continuityPELVE}.]
	We first prove (a) implies (b), by a proof of contradiction. To do so, we assume first that there exists $\lambda\in(0,\alpha)$ such that $\expectation{}{-X}>\valueAtRisk{\lambda}{X}$. By~\cite[Proposition 1]{li_pelve_2023} we obtain $f(\lambda)=\pelve{\lambda}(X)=\infty$, which contradicts the finiteness of $f$. 
	
	Then, assume that for all $\lambda\in (0,\alpha)$ we have $\expectation{}{-X}\leq \valueAtRisk{\lambda}{X}$ and $q_X^{+}$ is discontinuous at a point $\gamma\in(0,\alpha)$. By~\cite[Lemma A.19]{follmer_stochastic_2016} we know that  $\symbolUpperQuantile{X}$ is right-continuous. Hence, $\lim\limits_{\lambda\uparrow \gamma}\upperQuantileFunction{X}{\lambda}<\upperQuantileFunction{X}{\gamma}$. Now, by~\cite[Proposition 2]{li_pelve_2023} we obtain that $c^{\star}:=\pelve{\gamma}(X)$ is the unique value such that $\expectedShortfall{c^{\star}\gamma}{X} = \valueAtRisk{\gamma}{X}$. By the continuity of $f$, we get for each sequence $(\gamma_n)_n\subseteq (0,\alpha)$ with $\gamma_n\uparrow\gamma$ that $c_n:=\pelve{\gamma_n}(X)\rightarrow c^{\star}$. In total,  $c_n\gamma_n\rightarrow c^{\star}\gamma$ and $$\valueAtRisk{\gamma}{X}=\expectedShortfall{c^{\star}\gamma}{X}=\lim\limits_{n\rightarrow\infty}\expectedShortfall{c_n\gamma_n}{X}=\lim\limits_{n\rightarrow\infty}\valueAtRisk{\gamma_n}{X},$$ 
	where the second equality follows by the continuity of  $(0,1]\ni\lambda\mapsto\expectedShortfall{\lambda}{X}$ and the last equality by~\cite[Proposition 1]{li_pelve_2023}. But this contradicts $\lim\limits_{\lambda\uparrow \gamma}\upperQuantileFunction{X}{\lambda}<\upperQuantileFunction{X}{\gamma}$, i.e.,~the discontinuity of $\symbolUpperQuantile{X}$ in $\gamma$. 
	
	Now, we prove that (b) implies (a). By assumption for all $\lambda\in(0,\alpha)$ it holds $\expectation{}{-X}\leq \valueAtRisk{\lambda}{X}$, which implies by \cite[Proposition 1]{li_pelve_2023} that for each such $\lambda$ there exists $c\in[1,\lambda^{-1}]$ with $\expectedShortfall{c\lambda}{X}=\valueAtRisk{\lambda}{X}$. The latter implies that $f$ is finite-valued.
	
	Next, we prove continuity of $f$. To do so, assume first that  $\symbolUpperQuantile{X}$ is constant on $(0,\gamma)$ for some $\gamma\in(0,\alpha]$, then the map $(0,\gamma)\ni\lambda\mapsto \expectedShortfall{\lambda}{X}$ is constant (with value $-\upperQuantileFunction{X}{\lambda}$). This implies that $\pelve{\lambda}(X) = 1$ for all $\lambda\in(0,\gamma)$, which means that $f|_{(0,\gamma)}$ is continuous. In particular, if $\gamma=\alpha$, then we are done. Furthermore, if $\gamma <\alpha$, then the continuity of $\symbolUpperQuantile{X}$ also implies that $f|_{(0,\gamma]}$ is constant.
	
	Second, if there exists $\gamma\in[0,\alpha)$ such that for all $\lambda\in(\gamma,\alpha)$, $\symbolUpperQuantile{X}$ is not constant on $(0,\lambda)$, then for all $\beta\in(\gamma,\alpha)$, we obtain from~\cite[Proposition 2]{li_pelve_2023} that $\pelve{\beta}(X)$ is the unique value in $c\in[1,\beta^{-1}]$ for which $\expectedShortfall{c\beta}{X} = \valueAtRisk{\beta}{X}$. So, assuming a sequence $(\lambda_n)_n\subseteq (\gamma,\alpha)$ with $\lambda_n\rightarrow \lambda^{\star}$ for some $\lambda^{\star}\in(\gamma,\alpha)$ and setting $c_n\defgl \pelve{\lambda_n}(X)$ and $c^{\star}\defgl\pelve{\lambda^{\star}}(X)$, then
	\begin{align*}
		\lim\limits_{n\rightarrow\infty}\expectedShortfall{c_n\lambda_n}{X} = \lim\limits_{n\rightarrow\infty}\valueAtRisk{\lambda_n}{X}=\valueAtRisk{\lambda^{\star}}{X} = \expectedShortfall{c^{\star}\lambda^{\star}}{X}.
	\end{align*}
	
	Now, by~\prettyref{lem:esStrictlyDecreasing}, we know that the map  $(\gamma,1]\ni\lambda\mapsto \expectedShortfall{\lambda}{X}$ is strictly decreasing. Further, by the continuity of this map, it has to hold that $c_n\lambda_n\rightarrow c^{\star}\lambda^{\star}$, which in turn implies $c_n\rightarrow c^{\star}$. The latter means that $f|_{(\gamma,\alpha)}$ is continuous. If $\gamma=0$, the desired result follows.
	
	Third, assume that there exists $\gamma\in(0,\alpha)$ such that $\symbolUpperQuantile{X}$ is constant on $(0,\gamma]$ and for all $\lambda\in(\gamma,\alpha)$ it holds that $\symbolUpperQuantile{X}$ is not constant on $(0,\lambda)$. Then, it remains to show that $f$ is right-continuous at $\gamma$. To prove this, let $(\lambda_n)_n\subseteq(\gamma,\alpha)$ be a sequence with $\lambda_n\rightarrow \gamma$. Then (recall $c_n\defgl \pelve{\lambda_n}(X)$),
	\begin{align*}
		\lim\limits_{n\rightarrow\infty}\expectedShortfall{c_n\lambda_n}{X} = \lim\limits_{n\rightarrow\infty}\valueAtRisk{\lambda_n}{X}=\valueAtRisk{\gamma}{X} = \expectedShortfall{\gamma}{X}.
	\end{align*}
	The result $c_n\rightarrow 1$ follows by the fact that the map $(0,1]\ni\lambda\rightarrow\expectedShortfall{\lambda}{X}$ is continuous in combination with~\prettyref{lem:esStrictlyDecreasing}.
\end{proof}

\begin{proof}[Proof of~\prettyref{cor:discontinuityPELVEcurves}.]
	This follows from~\prettyref{prop:continuityPELVE} in combination with the fact that a constant area of $F^{X}$ transfers to a jump of $\symbolUpperQuantile{X}$.
\end{proof}

\section{Proofs accompanying Section~\ref{sec:problemRegulator}}\label{sec:exampleMsePelve}

\begin{proof}[Proof of~\prettyref{prop:existenceMsePELVE}.]
	Let $Y\in L^1$. The map $(0,1)\mapsto \mathbb{R},p\mapsto \symbolExpectedShortfall{p}(Y)$ is continuous. This implies that~\prettyref{eq:mse} is also continuous. Together with the fact that $[1,\lambda^{-1}]$ is compact, the extreme value theorem implies that the map~\prettyref{eq:mse} attains a minimum. 
\end{proof}

For the proofs of~\prettyref{thm:convergenceMSE} and~\prettyref{prop:convergenceSys}, we need the following auxiliary result. It utilizes slightly different assumptions than the classical Dini Theorem~\parencite[Theorem 7.13]{rudinPrinciplesMathematicalAnalysis1976}. Instead of assuming a monotone sequence of functions, every function is assumed to be decreasing. This necessitates a distinct approach in order to establish the result.

\begin{lem}\label{lem:alternativeDini}
	Let $[a,b]\subset \mathbb{R}$ with $a,b\in\mathbb{R}$ such that $a<b$. Suppose $(f_n)_n$ is a sequence of functions $f_n:[a,b]\rightarrow \mathbb{R}$ such that every $f_n$ is decreasing on $[a,b]$. If $(f_n)_n$ converges pointwise to a continuous function $f:[a,b]\rightarrow\mathbb{R}$, then the convergence is uniform. 
\end{lem}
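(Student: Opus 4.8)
The plan is to exploit compactness of $[a,b]$ together with uniform continuity of the limit function $f$, substituting for the monotonicity-in-$n$ hypothesis of the classical Dini theorem the monotonicity-in-$x$ of each $f_n$. First I would record two elementary preliminary facts. Since each $f_n$ is decreasing and $f_n(x)\to f(x)$ for every $x\in[a,b]$, the limit $f$ is decreasing as well. Moreover $f$ is continuous on the compact interval $[a,b]$ by hypothesis, hence uniformly continuous.

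Next, fix $\varepsilon>0$. By uniform continuity of $f$ choose a partition $a=x_0<x_1<\dots<x_k=b$ with mesh small enough that $f(x_j)-f(x_{j+1})<\varepsilon$ for every $j\in\{0,\dots,k-1\}$; here we use that $f$ is decreasing, so these increments all have the stated sign. Since the partition consists of only finitely many points, pointwise convergence at each $x_j$ yields an index $N$ such that $|f_n(x_j)-f(x_j)|<\varepsilon$ for all $n\ge N$ and all $j\in\{0,\dots,k\}$.

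Then I would estimate $|f_n(x)-f(x)|$ at an arbitrary $x\in[a,b]$ for $n\ge N$. Pick the index $j$ with $x_j\le x\le x_{j+1}$. Using that $f_n$ is decreasing,
\begin{align*}
	f_n(x)&\le f_n(x_j)\le f(x_j)+\varepsilon\le f(x_{j+1})+2\varepsilon\le f(x)+2\varepsilon,\\
	f_n(x)&\ge f_n(x_{j+1})\ge f(x_{j+1})-\varepsilon\ge f(x_j)-2\varepsilon\ge f(x)-2\varepsilon,
\end{align*}
where in each line the last inequality uses that $f$ is decreasing (so $f(x_{j+1})\le f(x)\le f(x_j)$), the middle inequality uses $f(x_j)-f(x_{j+1})<\varepsilon$, and the first uses the choice of $N$. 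Hence $\sup_{x\in[a,b]}|f_n(x)-f(x)|\le 2\varepsilon$ for all $n\ge N$, and since $\varepsilon>0$ was arbitrary the convergence is uniform.

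The only point requiring care — and the reason the classical Dini argument does not apply verbatim — is that the sequence $(f_n)_n$ need not be monotone, so one cannot simply control a single tail; instead one squeezes $f_n(x)$ between its values at the two neighbouring grid points and transfers control from the finite grid to all of $[a,b]$ via the uniform continuity of $f$. Beyond carefully tracking the signs in the chain of inequalities, I expect no genuine obstacle.
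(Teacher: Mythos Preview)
Your argument is correct and complete. It is the classical P\'olya-style direct proof: partition $[a,b]$ finely enough that the oscillation of the continuous (and necessarily decreasing) limit $f$ on each subinterval is below $\varepsilon$, control $f_n$ at the finitely many grid points by pointwise convergence, and then squeeze $f_n(x)$ between its values at the two neighbouring grid points using that $f_n$ is decreasing. The chains of inequalities are checked line by line and hold as written.

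The paper takes a different, indirect route: it argues by contradiction, extracting a subsequence $(f_{n_k})_k$ and points $x_k$ with $|f_{n_k}(x_k)-f(x_k)|\ge\varepsilon$, passes to a further subsequence with $x_k\to x^\star$ by compactness, and then uses monotonicity of $f_{n_k}$ at a single nearby test point $t$ (to the left of $x^\star$, or $t=a$ if $x^\star=a$) to transport the gap to a fixed point and contradict continuity of $f$ at $x^\star$. Your approach is more constructive and yields an explicit uniform bound ($2\varepsilon$) directly from the partition, while the paper's subsequence argument is shorter to write but less informative about the rate. Both rely on the same two ingredients---compactness of $[a,b]$ and monotonicity of each $f_n$---just packaged differently.
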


\begin{proof}
	By a proof of contradiction, assume that the convergence is not uniform. This means, there exist $\epsilon>0$, a subsequence $(f_{n_k})_k$ and corresponding points $x_{k}\in[a,b]$ with 
	\begin{align*}
		|f_{n_k}(x_k)-f(x_k)|\geq \epsilon\quad \text{for all } k\in\mathbb{N}.
	\end{align*} 
	By passing to a subsequence of $(x_k)_k$ if necessary, we can assume that $x_k\rightarrow x^{\star}$ for some $x^{\star}\in[a,b]$. W.l.o.g.~assume that $f_{n_k}(x_k)\geq f(x_k)+\epsilon$ occurs infinitely often. If $x^{\star}>a$, choose an arbitrary $t\in[a,x^{\star})$. Then, for $k$ large enough we have $x_k>t$ and hence,
	\begin{align*}
		f_{n_k}(t)\geq f_{n_k}(x_k)\geq f(x_k)+\epsilon.
	\end{align*}
	Pointwise convergence of $(f_n)_n$ to $f$ and continuity of $f$ give $f(t)\geq f(x^{\star})+\epsilon$, contradicting continuity of $f$. If $x^{\star} = a$, the same line of arguments leads $f(a)\geq f(a)+\epsilon$, again a contradiction.
\end{proof}

\begin{proof}[Proof of~\prettyref{thm:convergenceMSE}.]
	We perform this proof in three steps.
	
	\textit{Step 1:} For $i\in[n]$ the function
	\begin{align*}
		f_m^{i}:[\lambda,1]\rightarrow \mathbb{R},t\mapsto \expectedShortfall{t}{X_i^m}-\valueAtRisk{\lambda}{X_i^m}
	\end{align*}
	is decreasing and
	\begin{align*}
		f^i:[\lambda,1]\rightarrow \mathbb{R},t\mapsto \expectedShortfall{t}{X_i}-\valueAtRisk{\lambda}{X_i}
	\end{align*}
	is continuous. Then, assumption (a) guarantees that $(f_m^i)_m$ converges pointwise to $f^i$, which is by~\prettyref{lem:alternativeDini} even a uniform convergence.
	 
	Now, set $u_m(t):=(f_m^1(t),\dots,f_m^n(t))^{\intercal}$, $u(t):=(f^1(t),\dots,f^n(t))^{\intercal}$ and $\Phi:\mathbb{R}^n\rightarrow\mathbb{R},x\mapsto \sqrt{\sum_{i=1}^{n}x_i^2}$. Note, $\Phi$ is continuous.
	
	\textit{Step 2:} Now, we show that $\Phi\circ u_m$ uniformly converges to $\Phi\circ u$ as $m\rightarrow\infty$. To do so, note that by Step 1, we have
	\begin{align}\label{eq:proofContinuityMSE}
		\sup_{x\in[\lambda,1]}\max_{i\in[n]}\big(f^{i}_m(x)-f^{i}(x)\big)\xrightarrow{m\rightarrow\infty} 0.
	\end{align}
	Together with compactness of $[\lambda,1]$ and $u_m$ being continuous for all $m\in\mathbb{N}$, we can apply Theorem~4.14 and Theorem~7.12 in~\cite{rudinPrinciplesMathematicalAnalysis1976}. Hence, there exists a compact set $K\subseteq\mathbb{R}^n$ such that $u([\lambda,1])\subseteq K$ and $u_m([\lambda,1])\subseteq K$ for all $m\in\mathbb{N}$. For this $K$ we obtain by the Heine-Cantor theorem~\parencite[Corollary 3.31]{aliprantis_infinite_2006} that the restricted map $\Phi|_{K}$ is uniformly continuous. Now, let $\epsilon>0$ be arbitrary. By uniform continuity of $\Phi|_{K}$ choose $\delta>0$ such that for all $y,z\in K$ with $|y-z|_d<\delta$ it holds that $|\Phi(x)-\Phi(z)|<\epsilon$. In addition, choose $N\in\mathbb{N}$ such that the expression in~\prettyref{eq:proofContinuityMSE} is smaller than $\delta$ for all $m\geq N$. Thus, for all $t\in[\lambda,1]$ and $m\geq N$ we obtain that $|(\Phi\circ u_m)(x)-(\Phi\circ u)(x)|<\epsilon$.	
	
	\textit{Step 3:} Finally, we prove that $(c_m)_m:=(\msepelve{\lambda,\omega}(\mathbf{X}^m))_m$ converges to $c^{\star}:=\msepelve{\lambda,\omega}(\mathbf{X})$. Since $(c_m)_m\subseteq [1,\lambda^{-1}]$, there exists a convergent subsequence $(c_{m_k})_k$ such that $c_{m_k}\rightarrow c$ as $k\rightarrow\infty$ for some~$c\in[1,\lambda^{-1}]$. Now, for each $k$, note that $c_{m_k}\lambda$ is a global minimizer of $\Phi\circ u_{m_k}$, which implies
	\begin{align*}
		(\Phi\circ u_{m_k})(c_{m_k}\lambda)\leq (\Phi\circ u_{m_k})(c^{\star}\lambda).
	\end{align*}
	By uniform convergence of $(\Phi\circ u_{m_k})_k$ to $\Phi\circ u$ (Step 2), we get that 
	\begin{align*}
		(\Phi\circ u)(c\lambda)\leq (\Phi\circ u)(c^{\star}\lambda).
	\end{align*}
	Then, assumption (b) implies that $c = c^{\star}$. Hence, every convergent subsequence of $(c_m)_m$ converges to $c^{\star}$, which shows by compactness of $[\lambda,1]$ that $c_m\rightarrow c^{\star}$. 
\end{proof}

\begin{proof}[Proof of~\prettyref{prop:convergenceSys}.]
	For each $m\in\mathbb{N}$, the function
	\begin{align*}
		h_m:[\lambda,1]\rightarrow \mathbb{R},t\mapsto \sum_{i=1}^{n}g\left(\expectedShortfall{t}{X_i^m}\right)-g\left(\valueAtRisk{\lambda}{X_i^m}\right)
	\end{align*}
	is decreasing and the function 
	\begin{align*}
		h:[\lambda,1]\rightarrow \mathbb{R},t\mapsto \sum_{i=1}^{n}g\left(\expectedShortfall{t}{X_i}\right)-g\left(\valueAtRisk{\lambda}{X_i}\right)
	\end{align*}
	is continuous. Assumption (a) guarantees that $(h_m)_m$ converges pointwise to $h$. \prettyref{lem:alternativeDini}, implies that this convergence is even uniform. From assumption (c), we obtain for $m$ sufficiently large that $\sum_{i=1}^{n}g(\expectation{}{X_i^m})<\sum_{i=1}^{n}g(\valueAtRisk{\lambda}{X_i^m})$. Then, $\syspelve{\lambda,g}(\mathbf{X}^m)$ is the left-most root $c_m$ of the equation $h_m(c_m\lambda) = 0$ and by assumption (b), $\syspelve{\lambda,g}(\mathbf{X})$ is the unique root $c$ of the equation $h(c\lambda)=0$. The claim follows then by the uniform convergence of $(h_m)_m$ to $h$.
\end{proof}	

\section{Proofs accompanying Section~\ref{sec:concreteTheoreticalDistributions}}\label{sec:proofsSectionConcreteTheoreticalDistributions}

\begin{proof}[Proof of~\prettyref{prop:pelveForElliptical}.]
	Let $A\in\mathbb{R}^{n\times n}$ with $\Sigma=A A^{\intercal}$. By~\prettyref{rem:cholesky_elliptical} there exists $\mathbf{Y}\sim E_{n}(0,I_n,\phi)$ such that $\mathbf{X} \stackrel{d}{=} A\mathbf{Y} + \mu$.
	\begin{enumerate}[(i)]
		\item Let $i\in[n]$. By Theorem 6.16 in~\cite{embrechts_quantitative_2015} it holds that $X_i = \unitVec_i^{\intercal} \mathbf{X} \stackrel{d}{=} \mu_i + \unitVec_i^{\intercal} A\mathbf{Y}   \stackrel{d}{=} \mu_i +|A^{\intercal} \unitVec_i|_n Z$. Cash additivity and positive homogeneity of VaR and ES yield that $\valueAtRisk{\alpha}{X_i} = -\mu_i + \sqrt{\unitVec_i^{\intercal}\Sigma \unitVec_i}\valueAtRisk{\alpha}{Z}$ and $\expectedShortfall{\alpha}{X_i} = -\mu_i + \sqrt{\unitVec_i^{\intercal}\Sigma \unitVec_i}\expectedShortfall{\alpha}{Z}$ for a level $\alpha$. This shows that the PELVE of $X_i$ is independent of $\mu$ and $\Sigma$. Hence, $\pelve{\lambda}(Z) = \pelve{\lambda}(X_i)$. 
		\item The statement $\apelve{\lambda,\omega}(\mathbf{X}) = \pelve{\lambda}(Z)$ follows from part (i). $\wcpelve{\lambda}(\mathbf{X}) = \pelve{\lambda}(Z)$ follows also from part (i). The equality $\syspelve{\lambda,g}(\mathbf{X}) = \pelve{\lambda}(Z)$ is a consequence of the specific forms of the VaR and ES for elliptical distributions as already used in the proof of part (i).
		\item For an arbitrary $c\in[1,\lambda^{-1}]$ it holds that
		\begin{align*}
			\sqrt{\sum_{i=1}^{n}\omega_i\Big(\symbolExpectedShortfall{c\lambda}(X_i)-\symbolValueAtRisk{\lambda}(X_i)\Big)^2} = |\symbolExpectedShortfall{c\lambda}(Z)-\symbolValueAtRisk{\lambda}(Z)|\sqrt{\sum_{i=1}^{n}\omega_i (\unitVec_i^{\intercal}\Sigma \unitVec_i)}.  
		\end{align*}
		If $\expectedShortfall{1}{Z} = \expectation{}{-Z}\leq \valueAtRisk{\lambda}{Z}$, then by~\cite[Proposition 1]{li_pelve_2023} we have $\expectedShortfall{\pelve{\lambda}(Z)\lambda}{Z} = \valueAtRisk{\lambda}{Z}$. Otherwise, the map $c\mapsto \expectedShortfall{c\lambda}{Z}$ is decreasing towards $\expectedShortfall{1}{Z} = E[-Z]>\valueAtRisk{\lambda}{Z}$, which proves the second case.
		\item If $\expectation{}{-Z}>\valueAtRisk{\lambda}{Z}$, then by~\cite[Proposition 1]{li_pelve_2023} we have that $\pelve{\lambda}(Z)=\infty$ and the inequality is satisfied. Otherwise, note that the  value $\syspelve{\lambda,g}(\mathbf{X})$ is calculated subject to
		\begin{align}\label{eq:proof_sys_pelve_elliptical}
			\sum_{i=1}^{n}\left(\max\left\{\mu_i,\sqrt{\unitVec_i^{\intercal}\Sigma \unitVec_i}\expectedShortfall{c\lambda}{Z}\right\}-\max\left\{\mu_i,\sqrt{\unitVec_i^{\intercal}\Sigma \unitVec_i}\valueAtRisk{\lambda}{Z}\right\}\right)\leq 0.
		\end{align}
		Using again~\cite[Proposition 1]{li_pelve_2023}, we obtain $\expectedShortfall{c\lambda}{Z} = \valueAtRisk{\lambda}{Z}$ for $c = \pelve{\lambda}(Z)$, which shows that the constraint is satisfied for $c$. This proves the first part of the claim.
		
		For the second part, note that by the assumption $E[-Z]\leq \valueAtRisk{\lambda}{Z}$ we have $\pelve{\lambda}(Z)<\infty$. Furthermore, by the assumption that 
		 $\alpha\mapsto\expectedShortfall{c\lambda}{Z}$ is strictly decreasing we have $\pelve{\lambda}(Z)>1$. Now, choose an arbitrary $c\in[1,\pelve{\lambda}(Z))$. Then, by the existence of $i\in[n]$ such that $\unitVec_i^{\intercal} \Sigma \unitVec_i > 0$ and $\mu_i  \leq \sqrt{\unitVec_i^{\intercal} \Sigma \unitVec_i} \valueAtRisk{\lambda}{Z}$ we have for this $i$ together with the fact that the map $\alpha\mapsto\expectedShortfall{c\lambda}{Z}$ is strictly decreasing:
		\begin{align}\label{eq:proof_sys_pelve_elliptical_2}
			\begin{split}
			&\max\left\{\mu_i,\sqrt{\unitVec_i^{\intercal}\Sigma \unitVec_i}\expectedShortfall{c\lambda}{Z}\right\}-\max\left\{\mu_i,\sqrt{\unitVec_i^{\intercal}\Sigma \unitVec_i}\valueAtRisk{\lambda}{Z}\right\}\\
			&\quad\quad = \underbrace{\sqrt{\unitVec_i^{\intercal}\Sigma \unitVec_i}}_{>0}\big(\expectedShortfall{c\lambda}{Z}-\valueAtRisk{\lambda}{Z}\big)>0.
			\end{split} 
		\end{align}
		Furthermore, note that every summand on the left-hand side in~\prettyref{eq:proof_sys_pelve_elliptical} is greater or equal to zero for such a $c$, which implies by~\prettyref{eq:proof_sys_pelve_elliptical_2} that $\syspelve{\lambda,g}(\mathbf{X})\geq \pelve{\lambda}(Z)$. Together with the first part, i.e.,~$\syspelve{\lambda,g}(\mathbf{X})\leq \pelve{\lambda}(Z)$ we obtain $\syspelve{\lambda,g}(\mathbf{X})= \pelve{\lambda}(Z)$.
		
		To prove the remaining implication, assume that $\syspelve{\lambda,g}(\mathbf{X})= \pelve{\lambda}(Z)$ and for all $j\in[n]$ it holds that $\unitVec_j^{\intercal}\Sigma \unitVec_j = 0$ or $\mu_j\left(\unitVec_j^{\intercal}\Sigma \unitVec_j\right)^{-\frac{1}{2}}>\valueAtRisk{\lambda}{Z}$. If for all $j\in[n]$ it holds that $\unitVec_j^{\intercal}\Sigma \unitVec_j = 0$, we have  $\syspelve{\lambda,g}(\mathbf{X}) = 1<\pelve{\lambda}(Z)$. Otherwise, by the map $\alpha\mapsto\expectation{\alpha}{Z}$ being continuous, there exists $\tilde{c}\in[1,\pelve{\lambda}(Z))$ such that for all $j\in[n]$ with $\unitVec_j^{\intercal}\Sigma \unitVec_j > 0$ it holds that $\mu_j>\sqrt{\unitVec_j^{\intercal}\Sigma \unitVec_j}\expectedShortfall{\tilde{c}\lambda}{Z}$. This implies that the left-hand side in the inequality in~\prettyref{eq:proof_sys_pelve_elliptical} is equal to zero, which means that $\syspelve{\lambda,g}(\mathbf{X})\leq \tilde{c}<\pelve{\lambda}(Z)$, a contradiction. \qedhere
	\end{enumerate}
\end{proof}

\begin{proof}[Proof of~\prettyref{thm:pelveForMVR}.]
	Under the non-degeneracy condition, for $\mathbf{Y}=-\mathbf{X}$ we obtain for each $\xi\in\mathbb{R}^n$ with $\sum_{i=1}^{n}\xi_i = 1$ that $\xi^{\intercal} \mathbf{Y}\in \text{RV}_\gamma$, see~\cite{basrak_characterization_2002}. In particular, for all $i\in[n]$ we have $Y_i\in \text{RV}_{\gamma}$.
	\begin{enumerate}[(i)]
		\item This is~\cite[Theorem 3]{li_pelve_2023}, by noticing that $\valueAtRisk{\lambda}{X_i} = \upperQuantileFunction{Y_i}{1-\lambda}$ (use e.g.,~\cite[Lemma A.27]{follmer_stochastic_2016}) and hence, for $c\in[1,\lambda^{-1}]$ it holds by the Karamata Theorem~\parencite[Theorem A3.6]{embrechts_modelling_1997} and the asymptotic inverse theorem for regularly varying functions~\parencite[Theorem 1.5.12]{bingham_regular_1987} that
		\begin{align*}
			\lim\limits_{\lambda\downarrow 0} \frac{\expectedShortfall{c\lambda}{X_i}}{\valueAtRisk{\lambda}{X_i}} = \lim\limits_{\lambda\downarrow 0} \frac{\frac{1}{c\lambda} \int_{1-c\lambda}^{1}\upperQuantileFunction{Y_i}{u}\diff u}{\upperQuantileFunction{Y_i}{1-c\lambda}}\cdot\frac{\upperQuantileFunction{Y_i}{1-c\lambda}}{\upperQuantileFunction{Y_i}{1-\lambda}}= \frac{\gamma}{\gamma-1}c^{-\frac{1}{\gamma}}.
		\end{align*}
		\item The limit of the A-PELVE is a direct consequence of part (i). For the WC-PELVE, we have
		\begin{align*}
			\lim\limits_{\lambda\downarrow 0}\wcpelve{\lambda}(\mathbf{X}) = \lim\limits_{\lambda\downarrow 0}\sup_{i\in[n]}\pelve{\lambda}(X_i).
		\end{align*}
		By finiteness of the set $[n]$, we can exchange $\lim$ and $\sup$. Together with part (i) we obtain that
		\begin{align*}
			\lim\limits_{\lambda\downarrow 0}\wcpelve{\lambda}(\mathbf{X}) = \sup_{i\in[n]}\lim\limits_{\lambda\downarrow 0}\pelve{\lambda}(X_i) = \left(\frac{\gamma}{\gamma - 1}\right)^{\gamma}.
		\end{align*}
		For the Sys-PELVE with $g(x) = x$, an application of the Karamata Theorem shows that for each $i\in[n]$ it holds that
		\begin{align*}
			\frac{1}{c\lambda} \int_{1-c\lambda}^{1}\upperQuantileFunction{Y_i}{u}\diff u \sim \frac{\gamma}{\gamma-1}\upperQuantileFunction{Y_i}{1-c\lambda}\sim \frac{\gamma}{\gamma-1}c^{-\frac{1}{\gamma}}\upperQuantileFunction{Y_i}{1-\lambda},\quad \lambda\downarrow 0.
		\end{align*}
		From which we deduce that
		\begin{align*}
			\frac{\sum_{i=1}^{n}\expectedShortfall{c\lambda}{X_i}}{\sum_{i=1}^{n}\valueAtRisk{\lambda}{X_i}} =  \frac{\sum_{i=1}^{n}\frac{1}{c\lambda} \int_{1-c\lambda}^{1}\upperQuantileFunction{Y_i}{u}\diff u}{\sum_{i=1}^{n}\upperQuantileFunction{Y_i}{1-\lambda}}\sim \frac{\gamma}{\gamma-1}c^{-\frac{1}{\gamma}},\quad \lambda\downarrow 0.
		\end{align*}
		The result follows by the same reasoning as in the proof of~\cite[Theorem 3]{li_pelve_2023}.
		
		For the Sys-PELVE with $g(x) = \max\{x,0\}$ it suffices to note that  $\upperQuantileFunction{Y_i}{1-\lambda}\rightarrow \infty$ as $\lambda\downarrow 0$ for each $i\in[n]$. Hence, for sufficiently small  $\lambda$, the maximum operator becomes irrelevant, and the same reasoning as for $g(x) = x$ applies.
		
		\item For $i\in[n]$ and $Y_i\in \text{RV}_{\gamma}$, we have that $P(Y_i>x)\sim x^{-\gamma} L_i(x)$ as $x\rightarrow \infty$ for some slowly varying function $L_i$, see~\cite[Definition A3.1]{embrechts_modelling_1997} for the definition of slowly varying functions. Then, we have that
		\begin{align*}
			\upperQuantileFunction{Y_i}{\lambda} = (1-\lambda)^{-\frac{1}{\gamma}} \tilde{L}_{i}((1-\lambda)^{-1}),  
		\end{align*}
		where $\tilde{L}_{i}$ is defined via the connection that  $h(x):=\tilde{L}_{i}(x^{\gamma})$ is the de Bruijn conjugate of $(L_i)^{-\frac{1}{\gamma}}$, see also~\cite[Theorem 1.5.13]{bingham_regular_1987} for characterizing the de Bruijn conjugate. Then, for the objective of the MSE-PELVE (recall~\eqref{eq:mse}) we obtain
		\begin{align}\label{eq:extreme_MSE}
			\sqrt{\sum_{i=1}^{n}\omega_i\Big(\symbolExpectedShortfall{c\lambda}(X_i)-\symbolValueAtRisk{\lambda}(X_i)\Big)^2} \sim \sqrt{\sum_{i=1}^{n}\omega_i \Big(\lambda^{-\frac{1}{\gamma}}\tilde{L}_{i}(\lambda^{-1})\Big)^2\Big(\frac{\gamma}{\gamma-1}c^{-\frac{1}{\gamma}}-1\Big)^2},\quad \lambda\downarrow 0.
		\end{align}
		By~\cite[Theorem A3.3]{embrechts_modelling_1997} it holds that $\tilde{L}_{i}(\lambda^{-1})>0$ for small enough values of $\lambda$. Hence, the right-hand side in~\eqref{eq:extreme_MSE} attains its minimum for $c = \left(\frac{\gamma}{\gamma-1}\right)^{\gamma}$. \qedhere
	\end{enumerate}
\end{proof}

\section{Limiting behavior for gamma and Pareto distributions}\label{sec:gammaDistributions}

First, we illustrate the behavior of ES and VaR for a (negative) gamma distributed random variable, when the VaR level goes to zero, i.e.,~$\lambda\downarrow 0$. To do so, assume a random variable $X$ such that $-X\sim \Gamma(k,s)$, i.e.,~gamma distributed with shape $k$ and scale $s$. 

For $k=s=1$, $-X$ is exponentially distributed with parameter $1$. Then, by Example 5 (ii) in~\textcite{li_pelve_2023} it holds that $\pelve{\lambda}(X) = e$ if $\lambda \leq e^{-1}$. The heatmaps in  Figure~\ref{fig:ratios_gamma} demonstrate that the PELVE is also for other values of shape $k$, scale $s$ and level $\lambda$ close to the Euler number $e$. Also the values for the Multi-PELVE versions in the case study of model 1 are close to $e$, because the marginal liability distributions are given as (negative) gamma distributions, cf.~Figure~\ref{fig:pelve_lognormalGamma} in the main text. The next result shows that ES and VaR behave similar in case of a (negative) gamma distribution and small values of the level $\lambda$. 

\begin{prop}\label{prop:ratio_es_var_gamma}
	Let $X$ be a random variable with $-X\sim \Gamma(k,s)$ and $c>0$. Then,
	\[
	\lim\limits_{\lambda\downarrow 0}\frac{\expectedShortfall{c\lambda}{X}}{\valueAtRisk{\lambda}{X}} = 1.
	\]
\end{prop} 

\begin{proof}
	First, it holds that 
	\[
	\lim\limits_{\lambda\downarrow 0}\frac{\expectedShortfall{c\lambda}{X}}{\valueAtRisk{\lambda}{X}} = \lim\limits_{\lambda\downarrow 0}\left(\frac{\valueAtRisk{c\lambda}{X}}{\valueAtRisk{\lambda}{X}}\cdot\frac{\expectedShortfall{c\lambda}{X}}{\valueAtRisk{c\lambda}{X}}\right) = 1. 
	\]
	Therefore, it is enough to show that
	\[
	\lim\limits_{\lambda\downarrow 0}\frac{\expectedShortfall{\lambda}{X}}{\valueAtRisk{\lambda}{X}} = 1\quad\text{and}\quad \lim_{\lambda\downarrow 0}\frac{\valueAtRisk{c\lambda}{X}}{\valueAtRisk{\lambda}{X}} = 1.
	\]
	For the ES at level $\lambda$, we obtain
	\[
	\expectedShortfall{\lambda}{X} = E[-X|-X\geq \quantileFunction{-X}{1-\lambda}] = s\frac{\Gamma\left(k+1,\frac{\quantileFunction{-X}{1-\lambda}}{s}\right)}{\Gamma\left(k,\frac{\quantileFunction{-X}{1-\lambda}}{s}\right)},
	\]
	where $\Gamma(r,x) = \int_{x}^{\infty}u^{r-1}e^{-u}\diff u$ denotes the upper incomplete gamma function. For brevity, let us write $z(p):=\frac{\quantileFunction{-X}{1-p}}{s}$ for all $p\in(0,1)$. Then, by using the recursion formula $\Gamma(r+1,x) = x^{r}e^{-x}+r\Gamma(r,x)$ it holds that
	\[
	\expectedShortfall{\lambda}{X} = sk+s\frac{z(\lambda)^k e^{-z(\lambda)}}{\Gamma\left(k,z(\lambda)\right)}.
	\]

	Applying the asymptotic formula in~\textcite[Equation (1.01)]{Olver_1974} for $\lambda\downarrow 0$, we obtain
	\[
	\expectedShortfall{\lambda}{X} = sk + s\frac{z(\lambda)^k e^{-z(\lambda)}}{z(\lambda)^{k-1}e^{-z(\lambda)}(1+O(z(\lambda)^{-1}))} = sk + \frac{q_{-X}(1-\lambda)}{1+\mathcal{O}(z(\lambda)^{-1})}.
	\]
	Hence,
	\[
	\frac{\expectedShortfall{\lambda}{X}}{\valueAtRisk{\lambda}{X}} = \frac{\expectedShortfall{\lambda}{X}}{\quantileFunction{-X}{1-\lambda}}= \frac{sk}{\quantileFunction{-X}{1-\lambda}} + \frac{1}{1+\mathcal{O}(z(\lambda)^{-1})} \rightarrow 1,\quad \text{as }\lambda\downarrow 0.
	\]
	
	Now, denoting by $\Gamma(k) = \Gamma(k,0)$ the gamma function and using again~\textcite[Equation (1.01)]{Olver_1974}, we have for $p\downarrow 0$ that 
	\[
	p = P(X\geq \quantileFunction{-X}{1-p}) = \frac{z(p)^{k-1}}{\Gamma(k)}e^{-z(p)}(1+\mathcal{O}(z(p)^{-1})),
	\] 
	which gives us 
	\[
	\frac{\log(p)}{z(p)} = \frac{(k-1)\log(z(p))}{z(p)} - 1 -\frac{\log(\Gamma(k))}{z(p)} + \log(1+\mathcal{O}(z(p)^{-1}))\rightarrow -1,\quad\text{as }p\downarrow 0. 
	\]
	This implies the following asymptotic behavior for small $\lambda$: 
	\[
	\frac{\valueAtRisk{c\lambda}{X}}{\valueAtRisk{\lambda}{X}} = \frac{\quantileFunction{-X}{1-c\lambda}}{\quantileFunction{-X}{1-\lambda}}\sim \frac{-\log(c\lambda)}{-\log(\lambda)} = 1+\frac{\log(c)}{\log(\lambda)} \rightarrow 1,\quad \text{as }\lambda\downarrow 0, 
	\]
	which proves the claim.
\end{proof}

\begin{figure}
	\begin{subfigure}[b]{0.48\textwidth}
		\includegraphics[width=\linewidth]{./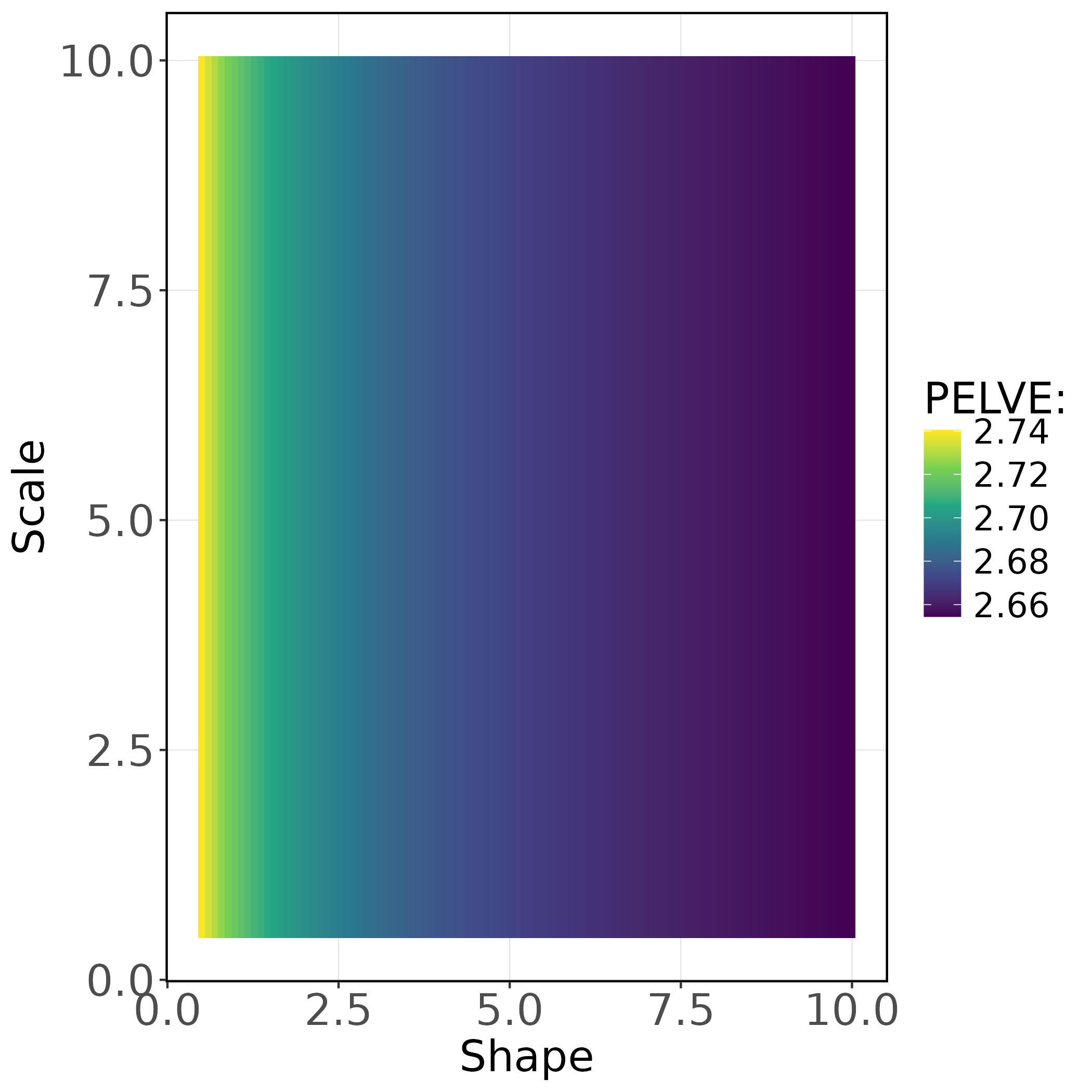}
		\caption{Heatmap of PELVE values ($\lambda = 0.5\%$)}
	\end{subfigure}
	\hfill
	\begin{subfigure}[b]{0.48\textwidth}
		\includegraphics[width=\linewidth]{./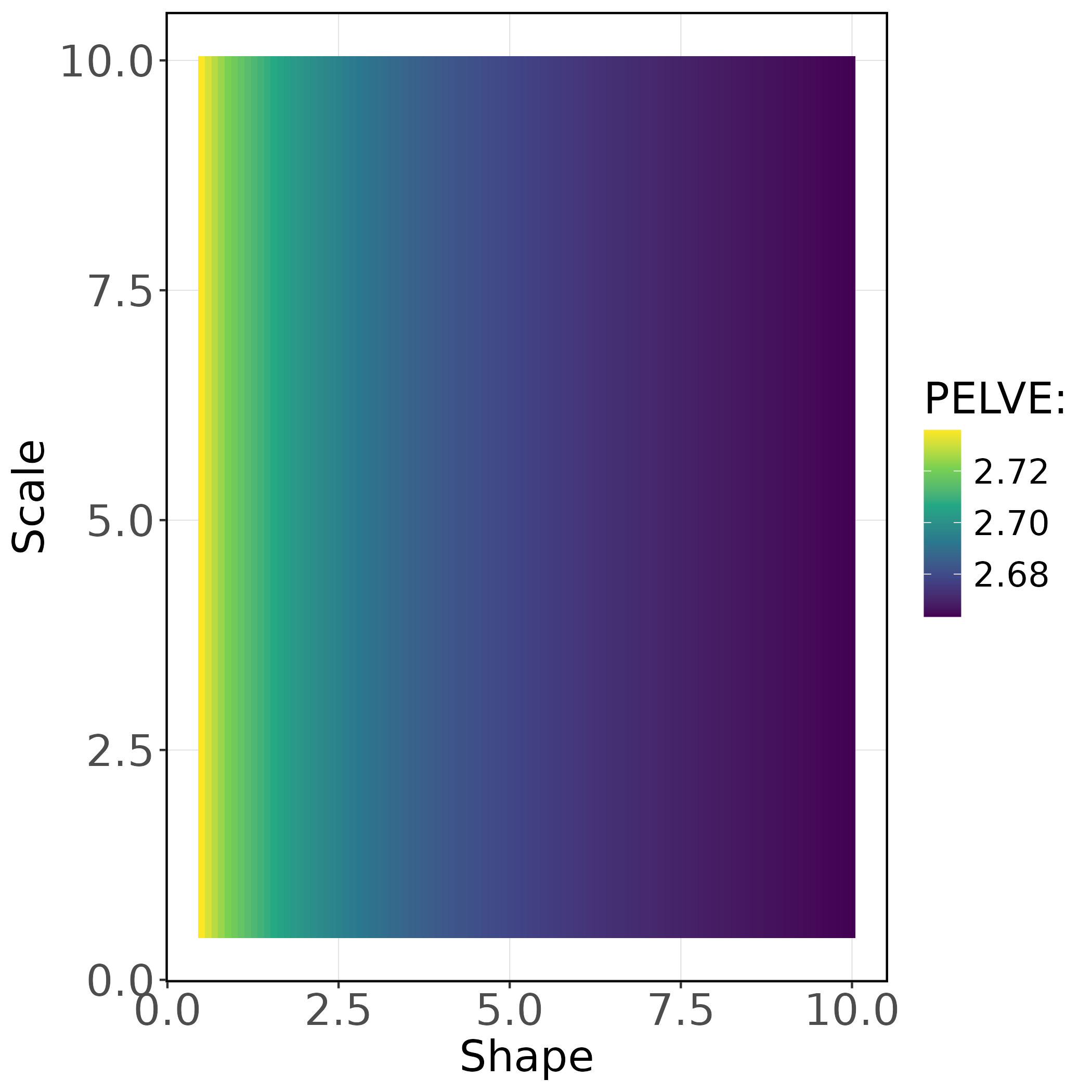}
		\caption{Heatmap of PELVE values ($\lambda = 0.25\%$)}
	\end{subfigure}
	\captionsetup{font=footnotesize}
	\caption{Heatmap of PELVE values for different shape and scale parameters.}
	\label{fig:ratios_gamma}
\end{figure}

\prettyref{prop:ratio_es_var_gamma} justifies why the sums of the absolute differences in the reserves in Figure~\ref{fig:comparing_diffs_in_sum_es_var} (B) converge to zero as $\lambda\downarrow 0$. The analogous conclusion fails for model 2, because other marginal liability distributions are applied. In particular, two  are of Pareto type. Indeed, by Proposition 3 in~\textcite{li_pelve_2023} we know for $-X$ being Pareto distributed with shape parameter $\alpha>1$ that 
\[
\lim\limits_{\lambda\downarrow 0}\frac{\expectedShortfall{\lambda}{X}}{\valueAtRisk{\lambda}{X}} = \frac{\alpha}{1-\alpha}.
\]
Then, the right-hand side goes to $\infty$ for $\alpha\downarrow 1$, i.e.,~for heavy-tailed distributions we cannot conclude that ES and VaR are asymptotically close to each other. For the Multi-PELVE versions, this implies that the sums of the absolute differences in the reserves (cf.~Figure~\ref{fig:comparing_diffs_in_sum_es_var_perturbed} (B)) do not converge to zero as $\lambda\downarrow 0$.

\end{document}